\journal{Elsevier}
\newtheorem{theorem}{Theorem}[section]
\newtheorem{lemma}[theorem]{Lemma}
\newtheorem{definition}{Definition}[section]
\newdefinition{remark}{Remark}[section]
\numberwithin{equation}{section} 
\numberwithin{table}{section}
\numberwithin{figure}{section}
\newcommand\od[2]{\dfrac{\rd {#1}}{\rd {#2}}}
\newcommand\pd[2]{\dfrac{\partial {#1}}{\partial {#2}}}
\newcommand\rd{\mathrm{d}} 
\newcommand\bA{\boldsymbol{A}}
\newcommand\bB{\boldsymbol{B}}
\newcommand\bP{\boldsymbol{P}}
\newcommand\bM{\boldsymbol{M}}
\newcommand\bH{\boldsymbol{H}}
\newcommand\bT{\boldsymbol{T}}
\newcommand\bI{\boldsymbol{I}}
\newcommand\bX{\boldsymbol{X}}
\newcommand\bi{\boldsymbol{i}}
\newcommand\bj{\boldsymbol{j}}
\newcommand\bW{\boldsymbol{W}}
\newcommand\bsigma{\boldsymbol{\sigma}}
\newcommand\bLambda{\boldsymbol{\Lambda}}
\newcommand\R{\mathbb R}
\newcommand\mS{\mathbb S}
\newcommand\cO{\mathcal O}
\newcommand\cM{\mathcal{M}}
\newcommand\cA{\mathcal {A}}
\newcommand{\grad}{{\rm grad}}
\begin{document}

\begin{frontmatter}



\title{ Multiphase Allen-Cahn and Cahn-Hilliard Models and Their
Discretizations with the Effect of Pairwise Surface
Tensions\tnoteref{Funding}}
\tnotetext[Funding]{
This study was partially supported by DOE Grant DE-SC0009249 as part
of the Collaboratory on Mathematics for Mesoscopic Modeling of
Materials.}


\author[psu]{Shuonan Wu}
\ead{wsn1987@gmail.com}
  
\author[psu]{Jinchao Xu\corref{cor1}}
\ead{xu@math.psu.edu}

\address[psu]{Department of Mathematics, Pennsylvania State University,
University Park, PA, 16802, USA}

\cortext[cor1]{Corresponding author}

\begin{abstract}
In this paper, the mathematical properties and numerical
discretizations of multiphase models that simulate the phase
separation of an $N$-component mixture are studied. For the general
choice of phase variables, the unisolvent property of the coefficient
matrix involved in the $N$-phase models based on the pairwise surface
tensions is established. Moreover, the symmetric positive-definite
property of the coefficient matrix on an $(N-1)$-dimensional
hyperplane --- which is of fundamental importance to the
well-posedness of the models --- can be proved equivalent to some
physical condition for pairwise surface tensions. The $N$-phase
Allen-Cahn and $N$-phase Cahn-Hilliard equations can then be derived
from the free-energy functional. A natural property is that the
resulting dynamics of concentrations are independent of phase
variables chosen.  Finite element discretizations for $N$-phase models
can be obtained as a natural extension of the existing discretizations
for the two-phase model. The discrete energy law of the numerical
schemes can be proved and numerically observed under some restrictions
pertaining to time step size. Numerical experiments including the
spinodal decomposition and the evolution of triple junctions are
described in order to investigate the effect of pairwise surface
tensions. 
\end{abstract}

\begin{keyword}
Multiphase \sep Allen-Cahn \sep Cahn-Hilliard \sep pairwise surface
tensions
\end{keyword}

\end{frontmatter}



\section{Introduction} \label{sec:intro}

Multiphase flows are frequently encountered in biomedical, chemical,
and engineering applications. The dynamics of multiphase flows
associate with a wide range of fundamental physical properties such as
pairwise surface tensions, wetting spreading, and formating contact
angles among multiple materials \cite{de2013capillarity}. On the other
hand, multiphase flows are challenging from the points of view of both
mathematical modeling and numerical methods due to the complexity of
the moving interface. 

There are two main approaches to moving interface problems: the direct
approach and the indirect approach. The direct approach obtains
information pertaining to the interface by tracking quantities
associated with it. Therefore, the direct approach relies on the
parameterization method \cite{zabusky1983regularization}, the immersed
boundary method \cite{peskin1977numerical}, the volume-of-fluid method
\cite{hirt1981volume}, and/or the front tracking method
\cite{unverdi1992front}. It is known that the direct approach commonly
encounters difficulty handling topological changes, such as pinches,
splits, and merging --- all of which can be handled easily by the
indirect approach. The level set method \cite{osher1988fronts} and the
phase field method \cite{rayleigh1892xx} are both examples of popular
indirect methods. In this paper, however, we focus on the phase field
method for modeling the effect of pairwise surface tensions for
$N$-phase flows ($N\geq 2$).

With the phase field method, the thickness of the sharp interface
between the two phases is supposed to be very small but positive. The
state of the system is then represented by a set of smooth functions
called phase variables or order parameters. The evolution of the
system is driven by the gradient of a total free-energy, which is
the sum of two terms: a bulk free-energy term, whose effect tends to
separate the flows, and a capillary term, whose effect tends to mix the
flows. The capillary term depends on the gradient of the order
parameters, which accounts for the energy of the interfacial tensions
between flows.

Drawing on the large body of research on two-phase flows
\cite{anderson1998diffuse, lowengrub1998quasi, chen2002phase,
badalassi2003computation, yue2004diffuse, shen2010numerical},
researchers have produced many theoretical and numerical studies on
three-phase flows involving the effect of pairwise surface tensions
\cite{kim2004conservative, kim2005phase, boyer2006study,
kim2007phase, boyer2011numerical}. In these models,
the given pairwise surface tensions $\sigma_{ij}$ are decomposed
into three positive phase-specific surface-tension coefficients as 
$$ 
\sigma_{12} = \sigma_1 + \sigma_2, \quad \sigma_{13} = \sigma_1 +
\sigma_3, \quad \sigma_{23} = \sigma_2 + \sigma_3,
$$ 
whose existence is equivalent to the triangle inequality of the
pairwise surface tensions. However, this decomposition encounters
difficulties for cases in which $N\geq 4$, as the number of pairwise
tensions would be greater than the number of phase-specific
surface-tension coefficients, which leads to an overdetermined system
\cite{kim2009generalized, kim2012phase}. In \cite{kim2007phase}, a
phenomenological continuum surface tension force was introduced by
coupling Navier-Stokes equations through the mean curvature of the
interface. Further, the generalization of this approach to an
arbitrary number of phases with the purpose of avoiding the
solvability issue was discussed in \cite{kim2009generalized}.

Generalizations of diffuse models to an arbitrary number of phases
have recently been introduced and studied. In most of the existing
models for multiphase flows, the phase variables are chosen
specifically as concentrations of mixture $c_i$, whose sum is equal to
$1$. Examples of such models include $N$-phase Allen-Cahn equations
\cite{kornhuber2006robust, lee2012efficient} and $N$-phase
Cahn-Hilliard equations \cite{barrett1997finite, barrett1998finite,
lee2008second, vanherpe2010multigrid, lee2012practically,
graser2014nonsmooth}. An benefit of these models is that their
consistency with the two-phase model can be easily proven.  However,
the pairwise surface tensions are not involved in the energy-density
function so that the homogeneous surface tensions are implied in
most of the existing models intrinsically. As the physical
concentrations must belong to the $(N-1)$-dimensional Gibbs simplex
\cite{porter2009phase}, a variable Lagrangian multiplier should be
introduced in the dynamic equations.

In order to incorporate the pairwise surface tensions into the phase
field model, several generalized models have been proposed based on
the generalized total free-energy functional. In
\cite{elliott1991generalised}, Elliott and Luckhaus set the total
free-energy functional as 
$$ 
\mathcal{E}(\vec{c}) := \int_\Omega \left[ \Psi(\vec{c}) +
\frac{1}{2} 
(\boldsymbol{\Gamma} \nabla \vec{c}) : \nabla \vec{c}  
\right],
$$ 
where $\boldsymbol{\Gamma}$ is the $N\times N$ symmetric-positive
semi-definite matrix, i.e. a symmetric coefficient matrix is
introduced in the capillary-energy term. They also gave a global
existence result under constant mobility when $\boldsymbol{\Gamma}=
\gamma \bI$. Eyre \cite{eyre1998unconditionally} then studied this
system and determined its equilibrium and dynamic behavior. Recently,
Boyer and Minjeaud \cite{boyer2014hierarchy} proposed a generalization
of the well-known two-phase Cahn-Hilliard model for the modeling of
$N$-phase mixtures using the concentrations as the phase variables.
Dong \cite{dong2014efficient} established an algebraic relationship
between the coefficient matrix and the pairwise surface tensions under
a special choice of phase variables and gave the coupled system
between the phase field and Navier-Stokes equations in the
thermodynamics framework \cite{heida2012development}. One main feature
of these works is that, thanks to a relevant choice of free-energy,
the model coincides exactly with the two-phase model.  Dong then
derived a formulation for the general phase variables in
\cite{dong2015physical} by eliminating one variable in order to relax
the algebraic relationship.   

In this paper, we begin by applying the general phase variables
without eliminating any of the variables, and we rebuild the
relationship between the coefficient matrix and the pairwise surface
tensions in a compact form.  By drawing on a recent work on the close
connection between the symmetric matrix space and simplex
\cite{hu2015finite}, we obtain the unisolvent property of the
coefficient matrix on the tangent space of the solution manifold.
Furthermore, the symmetric positive-definite (SPD) property on the
tangent space proposed as an open problem in \cite{boyer2014hierarchy,
dong2014efficient, dong2015physical}, is answered by two equivalent
conditions from both the algebraic and the geometric point of view,
see Theorem \ref{thm:SPD}. We note that this property is fundamentally
important to the well-posedness of the dynamic system. This is the
first major contribution of the present study to the field. 

The second principle contribution of the present study is the
derivation of the $N$-phase Allen-Cahn and Cahn-Hilliard equations
under the generalized total free-energy functional. As the gradient
flow on the solution manifold, the Allen-Cahn equations make sense
only under the given inner product on the tangent space in
energy-variation framework. Here, we apply the inner product on the
tangent space induced from the choice of generalized phase variables,
so that the dynamics of the concentrations are independent of the
choice of phase variables. A similar technique can be applied to
$N$-phase Cahn-Hilliard equations to obtain the same property. When 
$N$-phase Allen-Cahn and Cahn-Hilliard equations are written in a strong
formulation, the orthogonal projection to the tangent space will
naturally translate into the variable Lagrangian multiplier as shown
in models reported in \cite{ kornhuber2006robust, lee2012efficient,
lee2008second, vanherpe2010multigrid, lee2012practically,
graser2014nonsmooth}. This implies that our models can be viewed as
a natural extension of the existing models while accounting for and
including effect of pairwise surface tensions on the multiphase flows. 

Based on the above properties, we propose finite element
discretizations for $N$-phase models. The semi-implicit,
fully-implicit, and modified Crank-Nicolson scheme, are considered for
$N$-phase Allen-Cahn equations, and the semi-implicit, fully-implicit,
and modified Crank-Nicolson scheme, are considered for $N$-phase
Cahn-Hilliard equations.  Each of these schemes can be viewed as a
natural extension of the existing numerical schemes for two-phase
flows \cite{du1991numerical, shen2010numerical}.  The discrete energy
law of the numerical schemes is also discussed.  

The rest of this paper is organized as follows. In Section
\ref{sec:N-model}, we consider the generalized phase variables and the
free-energy functional with a coefficient matrix in the capillary
term.  The solvability and SPD property of the coefficient matrix are
discussed. We also derive $N$-phase Allen-Cahn and Cahn-Hilliard
equations so that the corresponding dynamics of concentrations are
independent of the choice of phase variables. In Section
\ref{sec:discretization}, the finite element discretizations of the
$N$-phase models are described and energy stability of each is
considered. Numerical experiments showing the effect of the pairwise
tensions on the multiphase flows and the accuracy of the schemes are
presented in Section \ref{sec:tests}. Some closing remarks are given
in Section \ref{sec:remarks}.

\section{$N$-phase Models} \label{sec:N-model}

First, we introduce some notation that will be used throughout this
paper. Let $\Omega \subset \R^{d}~(d=2,3)$ be the bounded domain, and
$\partial \Omega$ the domain boundary. The unit outer normal vector of
$\partial \Omega$ is denoted by $\nu$. For integer $m \geq 0, n\geq
1$, let $H^m(\Omega;\R^n)$ be the standard Sobolev space with a norm
$\|\cdot\|_m$ given by 
$$ 
\|\vec{v}\|_m^2 := \sum_{i=1}^n \sum_{|\alpha|\leq m} \|D^\alpha
v_{x_i}\|_{L^2(\Omega)}^2, \quad \forall \vec{v} \in H^m(\Omega;\R^n).
$$ 
In particular, the norm and inner product of $L^2(\Omega;\R^n) =
H^0(\Omega;\R^n)$ are denoted by $\|\cdot\|_0$ and $(\cdot, \cdot)$,
respectively. For any vector field $\vec{v} \in H^1(\Omega;\R^n)$,
we define
$$ 
\nabla\vec{v} = 
\begin{pmatrix}
\nabla v_1 \\
\vdots \\
\nabla v_n
\end{pmatrix} = 
\begin{pmatrix}
\partial_{x_1} v_{1} & \cdots & \partial_{x_d}v_{1} \\
\vdots & \vdots & \vdots \\
\partial_{x_1}v_{n} & \cdots & \partial_{x_d}v_{n}
\end{pmatrix}
\in \R^{n\times d}.
$$ 
The inner product of the vector is defined as $\vec{v}\cdot \vec{w} =
\sum_{i=1}^n v_i w_i$, for all $\vec{v}, \vec{w}\in \R^n$.  Moreover,
the Frobenious inner product of the matrix is defined as 
$$ 
\langle \bA , \bB \rangle = \bA: \bB = \sum_{i=1}^n \sum_{j=1}^m
a_{ij}b_{ij}, \quad \bA, \bB \in \R^{n\times m}.
$$ 

In this section, we will give a derivation of the models describing
the $N$-phase flows with the effect of pairwise surface tensions. To
this end, we state three assumptions:
\begin{description}
\item[Assumption 1] The $i$-th phase is characterized by $c_i$, which
satisfies $\sum c_i = 1$ and $0\leq c_i \leq 1$. Specifically, $c_i$
corresponds to the volume (or mole) fraction of the $i$-th fluid.
\item[Assumption 2] The free-energy density of the $N$-phase model will
reduce to the corresponding free-energy density of the $L$-phase model
if only $L$ ($2\leq L \leq N-1$) phases are presented. 
\item[Assumption 3] If $N-K$ ($2\leq K \leq N-1$) phases
are not present at the initial time, they will not appear artificially
during the evolution of the system. 
\end{description}

Let $\vec{c} = (c_1, c_2, \cdots, c_N)^T \in \R^N$. Given an
invertible $\bA \in \R^{N\times N}$ and $\vec{b} \in \R^N$, we define
the phase variables $\vec{\phi}$ as 
\begin{equation} \label{equ:phase-variables}
\vec{\phi} = \bA \vec{c} + \vec{b}.
\end{equation}
Let $\sigma_{ij} (1\leq i,j\leq N)$ denote the pairwise surface
tension between phase $i$ and phase $j$ ($\sigma_{ij} = \sigma_{ji}$),
and $\sigma_{ii} = 0$ for $1\leq i \leq N$.  In light of
\cite{boyer2014hierarchy, dong2014efficient, dong2015physical}, we
introduce the free-energy density of the $N$-phase system as  
\begin{equation} \label{equ:energy}
W(\vec{\phi}, \nabla\vec{\phi}) 
:= \sum_{i,j = 1}^N \frac{\eta\lambda_{ij}}{2} \nabla \phi_i \cdot \nabla
\phi_j + \frac{1}{\eta} F(\vec{c})  
= \frac{\eta}{2} ( \bLambda \nabla \vec{\phi} )
: \nabla \vec{\phi} + \frac{1}{\eta} F(\vec{c}), 
\end{equation} 
where the form of nonlinear potential $F(\cdot)$ satisfies the {\bf
  Assumption 2} \cite{boyer2014hierarchy, dong2014efficient,
dong2015physical}, especially when $L=2$:  
\begin{equation} \label{equ:bulk-energy-M2}
F(\vec{c}) = 2\sigma_{ij}[f(c_i) + f(c_j)], \quad 
\mbox{if}~c_i+c_j=1, c_k = 0~(k\neq i,j).
\end{equation}
Here, $f(c) = c^2(1-c^2)$, and the symmetric coefficient matrix
$\bLambda = (\lambda_{ij}) \in \R^{N\times N} $ is assumed to be
constant. We note that the introduction of $\bLambda$ constitutes the
major difference between \eqref{equ:energy} and the $N$-phase models
presented in the literature \cite{boyer2006study, boyer2011numerical,
kim2005phase, kim2009generalized, kim2012phase, lee2008second,
lee2012efficient, lee2012practically, graser2014nonsmooth}. The
constant $\eta > 0$ denotes a characteristic scale of the
interfacial thickness.  With the free-energy density, the
corresponding Liapunov free-energy functional is 
\begin{equation} \label{equ:total-energy}
\begin{aligned}
E(\vec{\phi}) = \int_\Omega W(\vec{\phi}, \nabla \vec{\phi}) \rd x
= \int_\Omega \frac{\eta}{2} ( \bLambda \nabla\vec{\phi} ) :
  \nabla\vec{\phi} + \frac{1}{\eta} F(\vec{\phi}).
\end{aligned}
\end{equation}

Now, we will use \textbf{Assumption 2} to build the relationship
between $\bLambda$, $\bA$, and the pairwise surface tensions. For the
two-phase case, the phase variable $\phi$ satisfies 
$$
c_1 = \frac{1+\phi}{2}, \quad c_2 = \frac{1-\phi}{2},
$$ 
and the free-energy density in \cite{yue2004diffuse} can be written as  
\begin{equation} \label{equ:2-phase-energy}
\begin{aligned}
W(\phi, \nabla \phi) &= \frac{\bar{\lambda}}{2} \nabla\phi \cdot
\nabla\phi + \frac{\bar{\lambda}}{4\epsilon^2} (1-\phi^2)^2 \\
&= \frac{\bar{\lambda}}{2} \nabla (c_1 - c_2) \cdot \nabla(c_1 - c_2) +
\frac{2\sigma_{12}}{\eta} [c_1^2(1-c_1)^2 + c_2^2(1-c_2)^2] \\
&= 2\bar{\lambda} \nabla c_1 \cdot \nabla c_1 + 
\frac{2\sigma_{12}}{\eta} [c_1^2(1-c_1)^2 + c_2^2(1-c_2)^2], 
\end{aligned}
\end{equation}
where $\epsilon = \sqrt{\eta\bar{\lambda}/\sigma_{12}}$.
Moreover, based on the equilibrium 1D surface energy
\cite{yue2004diffuse}, the relationship between $\bar{\lambda}$ and the
interfacial surface tension $\sigma_{12}$ can be derived as 
$$ 
\sigma_{12} = \frac{2\sqrt{2}}{3} \frac{\bar{\lambda}}{\epsilon} =
\frac{2\sqrt{2}}{3} \frac{\epsilon}{\eta}\sigma_{12},
$$ 
which yields 
\begin{equation} \label{equ:2-lambda-tension}
\epsilon = \frac{3}{2\sqrt{2}}\eta, \quad \mbox{and} \quad
\bar{\lambda} = \frac{9}{8}\eta\sigma_{12}.
\end{equation} 
It can be proved in \cite{yue2004diffuse} that
\eqref{equ:2-lambda-tension} gives the interfacial tension in the
sharp-interface limit. Notice that $\sigma = \cO(1)$ such that we have
$\bar{\lambda} = \cO(\eta)$ and $\epsilon = \cO(\eta)$, which is
consistent with the physical model in the two-phase case. We also note
that different phase interfaces have the same interface thickness in
this model. 

Now we assume that only two phases, i.e. $k$ and $l$, are present in
the $N$-phase model, 
$$ 
c_k + c_l = 1, \quad \text{and} \quad c_i = 0~\text{for}~i\neq k, l.
$$ 
Then, the free-energy density \eqref{equ:energy} is shown to be 
$$ 
\begin{aligned}
W(\vec{\phi}, \nabla \vec{\phi}) &= \sum_{i,j = 1}^N
\frac{\eta \lambda_{ij}}{2}(a_{ik} \nabla c_k + a_{il} \nabla c_l)\cdot
(a_{jk} \nabla c_k + a_{jl} \nabla c_l) + \frac{2}{\eta}
[c_k^2(1-c_k)^2 + c_l^2(1-c_l)^2] \\
&= \left( \sum_{i,j=1}^N \frac{\eta \lambda_{ij}}{2} (a_{ik}-a_{il})(a_{jk}
   - a_{jl}) \right) \nabla c_k \cdot \nabla c_k + \frac{2}{\eta}
[c_k^2(1-c_k)^2 + c_l^2(1-c_l)^2]. 
\end{aligned}
$$ 
By comparing the above equation with \eqref{equ:2-phase-energy}, we
immediately have 
\begin{equation} \label{equ:coef-system}
\sum_{i,j=1}^N \lambda_{ij}(a_{ik}-a_{il})(a_{jk}-a_{jl}) =
\frac{4\bar{\lambda}}{\eta} = \frac{9}{2}\sigma_{kl},
\quad 1\leq k < l \leq N,
\end{equation}
where $\sigma_{kl}$ is the interfacial surface tension between phases
$k$ and $l$. Denote $\vec{a}_k = (a_{1k}, a_{2k}, \cdots, a_{Nk})^T
\in \R^N$. Then, \eqref{equ:coef-system} is shown to be the following
matrix equation for $\bLambda$:
\begin{equation} \label{equ:coef-equations} 
(\vec{a}_k - \vec{a}_l)^T \bLambda (\vec{a}_k - \vec{a}_l) =
\frac{9}{2} \sigma_{kl}, \quad 1\leq k < l \leq N.
\end{equation}
Define $\vec{L}_{kl} = \vec{a}_k - \vec{a}_l$. Then,
\eqref{equ:coef-equations} can be recast as  
\begin{equation} \label{equ:coef-equations-matrix} 
(\vec{L}_{kl} \vec{L}_{kl}^T ) : \bLambda = \frac{9}{2}\sigma_{kl},
\quad 1\leq k < l \leq N.
\end{equation} 

\begin{remark}
In most of the well-studied $N$-phase models
\cite{kornhuber2006robust, lee2012efficient, lee2008second,
vanherpe2010multigrid, lee2012practically, graser2014nonsmooth}, the
original concentrations $\vec{c}$ are used as the phase variables and
$\bLambda$ is set as $\sigma \bI$, which means that the models
describe the homogeneous pairwise surface tensions such that
$\sigma_{ij} = \frac{4}{9}\sigma$. 
\end{remark}

\subsection{Solvability and properties of the coefficient matrix}
\label{subsec:SPD} 
Given the phase variables \eqref{equ:phase-variables} and the
free-energy density \eqref{equ:energy}, one basic problem is the
solvability of mixing energy density coefficient matrix $\bLambda =
(\lambda_{ij})$. We note that the number of equations in
\eqref{equ:coef-equations} is $\frac{N(N-1)}{2}$, whereas the number
of unknowns is $\frac{(N+1)N}{2}$. Therefore, we can show only that
$\bLambda$ is unisolvent on the $(N-1)$-dimensional hyperplane.

Notice that $\sum_{i=1}^N c_i = 1$ from \textbf{Assumption 1}. Then, 
$$ 
1 = \vec{1}^T\vec{c} = \vec{1}^T \bA^{-1}(\vec{\phi} -
    \vec{b}) = \vec{d}^T(\vec{\phi} - \vec{b}),
$$ 
where $\vec{1} = (1,1, \cdots, 1)^T \in \R^N$ and $\vec{d} =
\bA^{-T}\vec{1}$. It is easy to check that $\vec{d} \neq 0$,
as $\bA$ is invertible. Then, $\vec{\phi}$ lies in the following
$(N-1)$-dimensional manifold (hyperplane):
\begin{equation} \label{equ:manifold-phi}
\Sigma = \{\vec{\phi} \in \R^N~|~ \vec{d}^T \vec{\phi} - \vec{d}^T
\vec{b} = 1\},
\end{equation}
the tangent space of which is denoted by  
\begin{equation} \label{equ:tangent-space}
T\Sigma = \{\vec{v} \in \R^N~|~ \vec{d}^T \vec{v} = 0\}.
\end{equation}
Furthermore, we have 
$$
\vec{d}^T \vec{a}_{k} = \vec{1}^T \bA^{-1}\vec{a}_k = 1,
\quad \vec{d}^T \vec{L}_{kl} = 0,
$$
which means that $\vec{d} \bot \vec{L}_{kl}$, therefore, $\vec{L}_{kl}
\in T\Sigma$. Let $\bP = \bI - \frac{\vec{d}\vec{d}^T}{|\vec{d}|^2}$
be the orthogonal projection to $T\Sigma$ such that $\bP^T = \bP$ and 
$$ 
\bP \vec{L}_{kl} = \vec{L}_{kl}, \quad \bP \vec{L}_{kl}
\vec{L}_{kl}^T \bP =  \vec{L}_{kl} \vec{L}_{kl}^T. 
$$ 
Then, \eqref{equ:coef-equations-matrix} is shown to be 
\begin{equation} \label{equ:coef2-equations-matrix}
(\vec{L}_{kl} \vec{L}_{kl}^T ) : \tilde{\bLambda} =
\frac{9}{2}\sigma_{kl}, \quad 1\leq k < l \leq N, 
\end{equation}
where $\tilde{\bLambda} = \bP \bLambda \bP$. Based on the property of
the symmetric matrix space, we will establish the unique solvablity of
$\tilde{\bLambda}$ next.

\begin{definition}
For a given $n$-dimensional vector space $U \subset \R^N$, $\bP: \R^N
\mapsto U$ is the orthogonal projection. Define the symmetric matrix
space on $U$ as 
$$
\mS(U) = \{\bP\bM\bP~|~\bM\in \R^{N\times N}, \bM^T = \bM\}.
$$ 
\end{definition}
As the dimension of $\ker(\bP)$ is $N-n$, then $\dim(\mS(U)) =
\frac{n(n+1)}{2}$. Let inner product $\langle \cdot, \cdot\rangle$ be
the Frobenious inner product, then it is easy to determine that
$(\mS(U), \langle\cdot, \cdot\rangle)$ is a Hilbert space. 

In Lemma 2.2 of \cite{hu2015finite}, Hu determined a crucial
relationship between the $n$-dimensional simplex and the
$n$-dimensional symmetric matrix space. We extend this lemma to a
hyperplane to obtain the following lemma. 

\begin{lemma} \label{lem:simplex-symmetric}
$\{\vec{L}_{kl}\vec{L}_{kl}^T, 1\leq k<l\leq N\}$ forms a basis
of $\mS(T\Sigma)$.  
\end{lemma}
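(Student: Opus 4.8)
The plan is to reduce the statement to the special case $\bA = \bI$ by a congruence change of variables, and then to invoke the analogous fact for the standard simplex, namely Lemma 2.2 of \cite{hu2015finite}. I would begin with a dimension count: there are exactly $\binom{N}{2} = \frac{N(N-1)}{2}$ index pairs with $1 \le k < l \le N$, and since $T\Sigma$ is an $(N-1)$-dimensional subspace of $\R^N$, we have $\dim \mS(T\Sigma) = \frac{(N-1)N}{2}$ as well. Hence it suffices to prove that the family $\{\vec{L}_{kl}\vec{L}_{kl}^T : 1 \le k < l \le N\}$ is linearly independent (equivalently, that it spans $\mS(T\Sigma)$).

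The key observation is that $\vec{a}_k = \bA\vec{e}_k$ is the $k$-th column of $\bA$, so that $\vec{L}_{kl} = \bA(\vec{e}_k - \vec{e}_l)$ and consequently
$$
\vec{L}_{kl}\vec{L}_{kl}^T = \bA\,(\vec{e}_k - \vec{e}_l)(\vec{e}_k - \vec{e}_l)^T\,\bA^T .
$$
Consider the congruence map $T_\bA(\bM) := \bA\bM\bA^T$ on the real vector space of symmetric $N\times N$ matrices; since $\bA$ is invertible, $T_\bA$ is a linear automorphism of this space. A direct computation with the explicit orthogonal projections shows that $\mS(T\Sigma) = \{\bM : \bM^T = \bM,\ \bM\vec{d} = 0\}$, and likewise $\mS(\vec{1}^\perp) = \{\bM : \bM^T = \bM,\ \bM\vec{1} = 0\}$, where $\vec{1}^\perp = \{\vec{v}\in\R^N : \vec{1}^T\vec{v} = 0\}$ is the tangent space of the standard Gibbs simplex. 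Because $\vec{d} = \bA^{-T}\vec{1}$, every symmetric $\bM$ satisfies $T_\bA(\bM)\vec{d} = \bA\bM\vec{1}$, so $T_\bA$ maps $\mS(\vec{1}^\perp)$ into $\mS(T\Sigma)$; being injective between spaces of the same dimension $\frac{(N-1)N}{2}$, it maps $\mS(\vec{1}^\perp)$ isomorphically onto $\mS(T\Sigma)$. Since $\vec{L}_{kl}\vec{L}_{kl}^T = T_\bA\big((\vec{e}_k - \vec{e}_l)(\vec{e}_k - \vec{e}_l)^T\big)$ and isomorphisms carry bases to bases, the lemma follows once we know that $\{(\vec{e}_k - \vec{e}_l)(\vec{e}_k - \vec{e}_l)^T : 1\le k<l\le N\}$ is a basis of $\mS(\vec{1}^\perp)$.

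This last statement is precisely Lemma 2.2 of \cite{hu2015finite}, so it may simply be invoked; alternatively, linear independence is elementary, since if $\sum_{k<l} c_{kl}(\vec{e}_k - \vec{e}_l)(\vec{e}_k - \vec{e}_l)^T = 0$ then reading off the $(k,l)$ entry with $k\ne l$ shows that only the pair $\{k,l\}$ contributes, forcing $c_{kl} = 0$ for all $k<l$. Every step here is routine matrix algebra; the only real idea is the congruence reduction, and the one point deserving a careful check is that $T_\bA$ carries $\mS(\vec{1}^\perp)$ onto $\mS(T\Sigma)$ --- this is where the identity $\vec{d} = \bA^{-T}\vec{1}$ (already recorded just before the lemma) together with the matching of dimensions does the work. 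I do not foresee any serious obstacle.
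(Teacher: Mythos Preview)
Your proof is correct, but it takes a genuinely different route from the paper's. The paper argues linear independence \emph{directly} in the general setting: assuming $\sum_{k<l}\alpha_{kl}\vec{L}_{kl}\vec{L}_{kl}^T=0$, it applies both sides to the normal vector $\vec{n}_N$ of the facet $[\vec{a}_1,\dots,\vec{a}_{N-1}]$, which kills every term not involving index $N$ (since $\vec{L}_{kl}^T\vec{n}_N=0$ for $k,l\le N-1$); the surviving relation $\sum_{k<N}\alpha_{kN}(\vec{L}_{kN}^T\vec{n}_N)\vec{L}_{kN}=0$ forces $\alpha_{kN}=0$ by independence of the $\vec{L}_{kN}$, and one then iterates over the other facets. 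By contrast, you pull everything back to the standard simplex via the congruence $T_{\bA}(\bM)=\bA\bM\bA^T$, check that $T_{\bA}$ is an isomorphism $\mS(\vec{1}^\perp)\to\mS(T\Sigma)$ using $\bA^T\vec{d}=\vec{1}$, and then read off the $(k,l)$ off-diagonal entry of $\sum c_{kl}(\vec{e}_k-\vec{e}_l)(\vec{e}_k-\vec{e}_l)^T$ to get $c_{kl}=0$ instantly. Your reduction is cleaner and your endgame is more elementary (no face-normal geometry needed), while the paper's argument has the virtue of being self-contained at the level of the general hyperplane and making the simplicial geometry visible; either way the content is the same dimension-plus-independence count.
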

\begin{proof}
It is easy to see that $\vec{L}_{kl}\vec{L}_{kl}^T$ constitutes a
symmetric matrix of rank one in $\mS(T\Sigma)$ and that  
$$ 
\dim\left( \{\vec{L}_{kl}\vec{L}_{kl}^T, 1\leq k < l \leq
N\}\right) \leq \frac{N(N-1)}{2} = \dim(\mS(T\Sigma)).
$$

On the other hand, notice that $\bA$ is invertible. Then, $[\vec{a}_1,
\cdots, \vec{a}_N]$ comprise an $(N-1)$-dimensional simplex on the
hyperplane 
$$ 
\cM = \{\vec{v} \in \R^N~|~ \vec{d}^T \vec{v} = 1\}.
$$ 
If $\sum_{kl} \alpha_{kl} \vec{L}_{kl} \vec{L}_{kl}^T = 0$, then by
testing the normal vector $\vec{n}_N \in \cM$ of the
$(N-2)$-dimensional hyperplane $[\vec{a}_1, \cdots, \vec{a}_{N-1}]$ on 
both sides (see Figure \ref{fig:schematic-fig1}), we obtain  
\[
\sum_{k=1}^{N-1} \alpha_{kN} \vec{L}_{kN} \vec{L}_{kN}^T
\vec{n}_N = 0,
\]
as
\[
\vec{L}_{kl}^T \vec{n}_N = 0, \quad 1 \leq k < l \leq N-1.
\]
Notice that $\vec{L}_{kN}^T \vec{n}_N \neq 0$ and $\{\vec{L}_{kN},
1\leq k \leq N-1\}$ are linear independent. Then, we immediately have  
\[
\alpha_{kN} = 0, \quad 1\leq k \leq N-1.
\]
By a similar argument, we can prove that $\alpha_{kl} = 0$, which
means that $\{\vec{L}_{kl}\vec{L}_{kl}^T, 1\leq k < l \leq N\}$ forms
a basis of $\mS(T\Sigma)$. 
\end{proof}

\begin{theorem} \label{thm:uni-solvent}
Assume that $\bLambda$ satisfies the linear algebraic system
\eqref{equ:coef-equations}. Then, $\tilde{\bLambda} = \bP\bLambda\bP$
is uniquely determined by the interfacial surface tension
$\sigma_{kl}$. 
\end{theorem}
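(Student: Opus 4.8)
The plan is to reduce the statement to Lemma~\ref{lem:simplex-symmetric} together with the fact that the Frobenius inner product is nondegenerate on the finite-dimensional space $\mS(T\Sigma)$. First I would observe that $\tilde{\bLambda} = \bP\bLambda\bP$ automatically lies in $\mS(T\Sigma)$: since $\bLambda$ is symmetric and $\bP^T = \bP$, the matrix $\bP\bLambda\bP$ is exactly of the form $\bP\bM\bP$ with $\bM = \bLambda$. Next, recall from the derivation preceding the theorem that the algebraic system \eqref{equ:coef-equations} for $\bLambda$ is equivalent to \eqref{equ:coef2-equations-matrix}, namely
$$
(\vec{L}_{kl}\vec{L}_{kl}^T) : \tilde{\bLambda} = \frac{9}{2}\sigma_{kl}, \qquad 1 \le k < l \le N,
$$
because $\bP\vec{L}_{kl} = \vec{L}_{kl}$ (as $\vec{d}\bot\vec{L}_{kl}$) gives $\bP\vec{L}_{kl}\vec{L}_{kl}^T\bP = \vec{L}_{kl}\vec{L}_{kl}^T$. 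Thus $\tilde{\bLambda}$ is characterized precisely by its Frobenius inner products against the family $\{\vec{L}_{kl}\vec{L}_{kl}^T : 1 \le k < l \le N\}$.

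Then I would invoke Lemma~\ref{lem:simplex-symmetric}, which says exactly that this family is a basis of $\mS(T\Sigma)$. For uniqueness, suppose $\tilde{\bLambda}_1, \tilde{\bLambda}_2 \in \mS(T\Sigma)$ both satisfy the displayed system for the same data $\sigma_{kl}$. Their difference $\boldsymbol{D} = \tilde{\bLambda}_1 - \tilde{\bLambda}_2$ again lies in $\mS(T\Sigma)$ and satisfies $(\vec{L}_{kl}\vec{L}_{kl}^T) : \boldsymbol{D} = 0$ for all $k < l$. Writing $\boldsymbol{D} = \sum_{kl}\beta_{kl}\vec{L}_{kl}\vec{L}_{kl}^T$ and pairing with $\boldsymbol{D}$ itself yields $\langle\boldsymbol{D},\boldsymbol{D}\rangle = \sum_{kl}\beta_{kl}\langle\vec{L}_{kl}\vec{L}_{kl}^T,\boldsymbol{D}\rangle = 0$, hence $\boldsymbol{D} = 0$ since the Frobenius inner product is positive definite; this proves $\tilde{\bLambda}$ is unique. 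For existence (so that "determined by $\sigma_{kl}$" is a well-defined assignment), since $\{\vec{L}_{kl}\vec{L}_{kl}^T\}$ is a basis of the Hilbert space $(\mS(T\Sigma),\langle\cdot,\cdot\rangle)$ its Gram matrix is invertible, so there is a unique $\tilde{\bLambda}\in\mS(T\Sigma)$ realizing any prescribed list of values $\tfrac{9}{2}\sigma_{kl}$, and one may take $\bLambda = \tilde{\bLambda}$ (among infinitely many lifts) to recover a genuine solution of \eqref{equ:coef-equations}.

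There is no serious obstacle remaining once Lemma~\ref{lem:simplex-symmetric} is available — the theorem is essentially the statement that a vector in a finite-dimensional inner product space is determined by its inner products against a basis. The only point that needs a line of care is verifying that the passage from \eqref{equ:coef-equations} to \eqref{equ:coef2-equations-matrix} loses nothing, i.e. the projection identity $\bP\vec{L}_{kl}\vec{L}_{kl}^T\bP = \vec{L}_{kl}\vec{L}_{kl}^T$; this is also exactly the reason why only $\tilde{\bLambda} = \bP\bLambda\bP$, and not $\bLambda$ itself, can be pinned down, since the components of $\bLambda$ in the $\ker(\bP)$ directions remain free (matching the count $\frac{N(N-1)}{2}$ equations against $\frac{N(N+1)}{2}$ unknowns).
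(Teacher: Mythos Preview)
Your proposal is correct and follows essentially the same approach as the paper, which simply appeals to Lemma~\ref{lem:simplex-symmetric} together with the Riesz representation theorem in the Hilbert space $(\mS(T\Sigma),\langle\cdot,\cdot\rangle)$. Your argument is a more explicit unpacking of that appeal: you spell out why $\tilde{\bLambda}\in\mS(T\Sigma)$, why the projected system \eqref{equ:coef2-equations-matrix} is equivalent to \eqref{equ:coef-equations}, and why a vector in a finite-dimensional inner-product space is determined by its pairings against a basis---all of which is exactly what the one-line invocation of Riesz is doing.
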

\begin{proof} 
This theorem can be directly proved by Lemma
\ref{lem:simplex-symmetric} and the Riesz representation theorem in
Hilbert space.
\end{proof}

\begin{figure}[!htbp]
\centering 
\captionsetup{justification=centering}
\subfloat[Schematic diagram for solvability, $N=3$]{
  \includegraphics[width=0.45\textwidth]{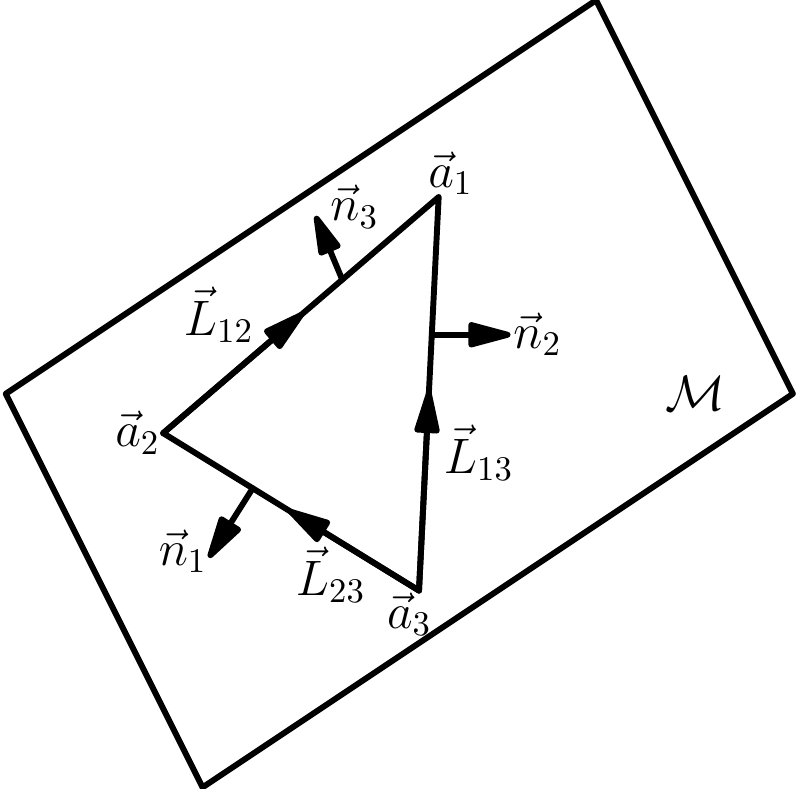} 
  \label{fig:schematic-fig1}
}%
\subfloat[Schematic diagram for SPD property, $N=4$]{
  \includegraphics[width=0.45\textwidth]{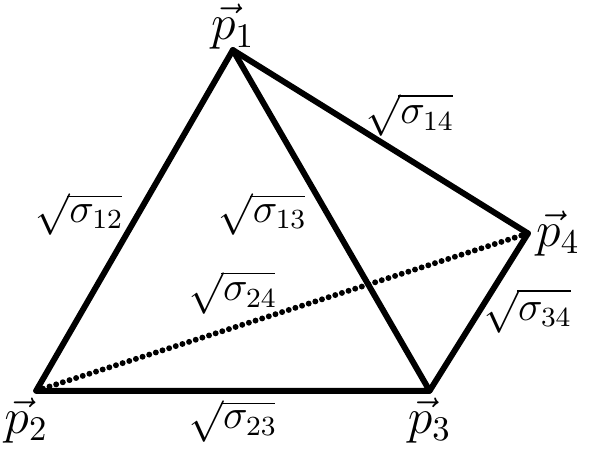} 
  \label{fig:schematic-fig2}
}
\caption{Schematic diagrams for solvability and SPD property}
\label{fig:systemic-figs}
\end{figure}

In Theorem \ref{thm:uni-solvent}, we build a bridge between the
$N$-phase models and the vertices of the $(N-1)$-dimensional simplex.
Based on this idea, we will present a sufficient and necessary
condition for $\tilde{\bLambda}$ to be symmetric positive-definite
(SPD) on the tangent space $T\Sigma$.

\begin{theorem} \label{thm:SPD}
The following statements are equivalent: 
\begin{enumerate}
\item $\tilde{\bLambda}$ is SPD on the tangent space $T\Sigma$. 
\item For any (or there exists) $1\leq m \leq N$, the matrix
$\tilde{\bsigma}^m = (\tilde{\sigma}_{ij}^m) \in \R^{(N-1)\times(N-1)}$
is SPD, where $\tilde{\bsigma}^m$ is obtained from $\tilde{\bsigma}$
by removing the $m$-th row and column: 
\begin{equation} \label{equ:SPD-cond1}
(\tilde{\bsigma})_{ij} = 
\frac{\sigma_{im} + \sigma_{jm} - \sigma_{ij}}{2}, \quad 1\leq i, j
\leq N.
\end{equation}
\item The surface tensions can compose a non-degenerate
$(N-1)$-dimensional simplex $K = [\vec{p}_1, \cdots, \vec{p}_N]$ with
$|\vec{p}_i - \vec{p}_j| = \sqrt{\sigma_{ij}}$.
\end{enumerate}
\end{theorem}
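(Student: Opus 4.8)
The plan is to prove the two equivalences $(1)\Leftrightarrow(2)$ and $(2)\Leftrightarrow(3)$ separately: the first by a change-of-basis (congruence) argument, the second by classical distance geometry. The common starting point is the rewriting of \eqref{equ:coef-equations}. Since $\bP\vec{L}_{kl}=\vec{L}_{kl}$, the system that (uniquely, by Theorem \ref{thm:uni-solvent}) determines $\tilde\bLambda=\bP\bLambda\bP$ reads $\vec{L}_{kl}^T\tilde\bLambda\vec{L}_{kl}=\tfrac92\sigma_{kl}$, equivalently
$$
\sigma_{kl}=\tfrac29\,\vec{L}_{kl}^T\tilde\bLambda\vec{L}_{kl},\qquad 1\le k<l\le N .
$$
Note also that $\bLambda$ symmetric forces $\tilde\bLambda$ symmetric, so ``SPD on $T\Sigma$'' is just positive-definiteness of $\vec v\mapsto\vec v^T\tilde\bLambda\vec v$ on $T\Sigma$; likewise $\tilde\bsigma^m$ is symmetric by construction, so ``SPD'' there means positive-definite.

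For $(1)\Leftrightarrow(2)$ I would fix $m$ and expand $\vec L_{ij}^T\tilde\bLambda\vec L_{ij}$ using $\vec{L}_{ij}=\vec{L}_{im}-\vec{L}_{jm}$; substituting into \eqref{equ:SPD-cond1} together with the identity above collapses to the compact formula $(\tilde\bsigma^m)_{ij}=\tfrac29\,\vec{L}_{im}^T\tilde\bLambda\vec{L}_{jm}$ for $i,j\ne m$, i.e. $\tilde\bsigma^m=\tfrac29\,\bM_m^T\tilde\bLambda\bM_m$ where $\bM_m\in\R^{N\times(N-1)}$ has columns $\{\vec{L}_{im}:i\ne m\}$. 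Because $\vec a_1,\dots,\vec a_N$ are the columns of the invertible matrix $\bA$, these $N-1$ vectors form a basis of $T\Sigma$ — this is precisely the linear-independence fact already used in the proof of Lemma \ref{lem:simplex-symmetric} — so $\bM_m$ restricts to an isomorphism $\R^{N-1}\to T\Sigma$. Then $\tilde\bsigma^m$ is a positive scalar multiple of a congruent copy of $\tilde\bLambda|_{T\Sigma}$, and congruence preserves positive-definiteness, so $\tilde\bsigma^m$ SPD $\iff$ $\tilde\bLambda$ SPD on $T\Sigma$. Since this holds for every $m$ while statement (1) does not mention $m$, both the ``for any $m$'' and ``there exists $m$'' forms of (2) are equivalent to (1).

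For $(2)\Leftrightarrow(3)$ I would use the standard Gram-matrix picture. Given a non-degenerate simplex $K=[\vec p_1,\dots,\vec p_N]\subset\R^{N-1}$ with $|\vec p_i-\vec p_j|^2=\sigma_{ij}$, polarization gives $\langle\vec p_i-\vec p_m,\vec p_j-\vec p_m\rangle=\tfrac12(\sigma_{im}+\sigma_{jm}-\sigma_{ij})=(\tilde\bsigma^m)_{ij}$, so $\tilde\bsigma^m$ is the Gram matrix of the $N-1$ edge vectors at vertex $m$, which are linearly independent by non-degeneracy; hence $\tilde\bsigma^m$ is SPD. Conversely, if $\tilde\bsigma^m$ is SPD for some $m$, write $\tilde\bsigma^m=\bB^T\bB$ with $\bB\in\R^{(N-1)\times(N-1)}$ invertible (Cholesky), set $\vec p_m:=0$ and let $\vec p_i$ ($i\ne m$) be the columns of $\bB$; a short computation using $\sigma_{ii}=0$ checks $|\vec p_i-\vec p_j|^2=\sigma_{ij}$ for all $i,j$, and invertibility of $\bB$ yields non-degeneracy of $[\vec p_1,\dots,\vec p_N]$, giving (3).

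I do not expect a genuinely hard step; the content is fully packaged in the two reductions. The place to be careful is the bookkeeping in $(1)\Leftrightarrow(2)$: confirming that $\{\vec L_{im}:i\ne m\}$ actually spans $T\Sigma$ (which rests on invertibility of $\bA$ and is implicit in Lemma \ref{lem:simplex-symmetric}), tracking the constant $\tfrac29$ and the signs in the expansion, and observing that the equivalence holds for each individual $m$ — which is exactly what makes the ``any/exists'' dichotomy in the statement collapse. The distance-geometry half only invokes Cholesky and polarization and is routine.
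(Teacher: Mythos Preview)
Your proposal is correct and follows essentially the same route as the paper: for $(1)\Leftrightarrow(2)$ the paper also expands $\vec L_{ij}=\vec L_{im}-\vec L_{jm}$ to obtain $(\boldsymbol L^m)^T\tilde\bLambda\,\boldsymbol L^m=\tfrac92\,\tilde\bsigma^m$ (your $\bM_m$ is their $\boldsymbol L^m$) and invokes that the columns are a basis of $T\Sigma$, and for $(2)\Leftrightarrow(3)$ the paper likewise uses a factorization $\tilde\bsigma^N=\bT^T\bT$ with $\vec p_N=\vec 0$ and the columns of $\bT$ as the remaining vertices, together with the polarization identity. Your phrasing in terms of congruence and Gram matrices is a clean way to package the same computations.
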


\begin{proof}
{\textbf{$1\Leftrightarrow 2$}}:
It is easy to check that $\{\vec{L}_{km}, 1\leq k \leq N, k\neq m\}$
forms a basis of $T\Sigma$. Then, any $\vec{v} \in T\Sigma$ can be
written as
\[
\vec{v} = \sum_{k\neq m} \alpha_k \vec{L}_{km} = [\vec{L}_{1m},
  \cdots, \vec{L}_{(k-1)m}, \vec{L}_{(k+1)m}, \cdots, \vec{L}_{Nm}]
  \vec{\alpha}^m := \boldsymbol{L}^m \vec{\alpha}^m, 
\]
where $\vec{\alpha}^m = (\alpha_1, \cdots, \alpha_{m-1}, \alpha_{m+1},
\cdots, \alpha_N)^T$ and $\boldsymbol{L}^m \in \R^{N\times(N-1)}$.
It can easily be seen from \eqref{equ:coef2-equations-matrix} that  
$$ 
\frac{9}{2} \sigma_{ij} = \vec{L}_{ij}^T \tilde{\bLambda}
\vec{L}_{ij} = (\vec{L}_{im} - \vec{L}_{jm})^T \tilde{\bLambda}
(\vec{L}_{im} - \vec{L}_{jm}) = \frac{9}{2}\sigma_{im} +
\frac{9}{2}\sigma_{jm} - 2\vec{L}_{im}^T \tilde{\bLambda}
\vec{L}_{jm}.   
$$ 
Then, we have 
\[
\begin{aligned}
\vec{v}^T \tilde{\bLambda} \vec{v} = (\vec{\alpha}^m)^T
[(\boldsymbol{L}^m)^T \tilde{\bLambda} \boldsymbol{L}^m]
\vec{\alpha}^m = \frac{9}{2} (\vec{\alpha}^m)^T \tilde{\bsigma}^m
\vec{\alpha}^m, 
\end{aligned}
\]
which means that the SPD of $\tilde{\bLambda}$ on $T\Sigma$ is equivalent
to the SPD of $\tilde{\bsigma}^m$ defined in \eqref{equ:SPD-cond1}. 

{\textbf{$2\Rightarrow 3$}}: We choose $m=N$ for simplicity. For the
SPD matrix $\tilde{\bsigma}^N \in \R^{(N-1)\times(N-1)}$, there exists
an invertible matrix $\bT$ such that $\tilde{\bsigma}^N = \bT^T\bT$.
Define $\vec{p}_N = \vec{0}$ and 
$$ 
\vec{p}_i = (t_{1i}, t_{2i}, \cdots, t_{(N-1)i})^T\in \R^{N-1}, \quad
i = 1, \cdots, N-1.
$$
Then, we immediately know that $[\vec{p}_1, \cdots, \vec{p}_N]$ form a
non-degenerate simplex. By checking the diagonal terms of
$\bT^T\bT$, we have $|\vec{p}_i| = \sqrt{\sigma_{iN}}$.
Furthermore, the off-diagonal terms of $\bT^T\bT$ imply that  
$$ 
\frac{\sigma_{iN} + \sigma_{jN} - \sigma_{ij}}{2} = 
\vec{p}_i \cdot \vec{p}_j = 
\frac{|\vec{p}_i|^2 + |\vec{p}_j|^2 - |\vec{p}_i - \vec{p}_j|^2}{2} =
\frac{\sigma_{iN} + \sigma_{jN} - |\vec{p}_i -
\vec{p}_j|^2}{2}, 
$$ 
which yields $|\vec{p}_i-\vec{p}_j| = \sqrt{\sigma_{ij}}$.

{\textbf{$3 \Rightarrow 2$}}:  If 3 holds, then there exists an affine
mapping from unit simplex $\hat{K} = [\vec{e}_1, \cdots,
\vec{e}_{N-1}, \vec{0}]$ to $K = [\vec{p}_1, \cdots, \vec{p}_N]$:
\[
\vec{y} = G(\hat{y}) = \bT \hat{y} + \vec{p}_N, \quad \forall \hat{y}
\in \hat{K}. 
\]
Then, it is easy to check that 
\[
\bT = (\vec{p}_1-\vec{p}_N, \vec{p}_2-\vec{p}_N, \cdots,
    \vec{p}_{N-1}-\vec{p}_N),
\]
which implies that  
$$ 
\tilde{\bsigma}^N = \bT^T \bT.
$$
The non-degenerate property of $K$ means that $\det(\bT) \neq 0$,
which leads to the SPD of $\tilde{\bsigma}^N$. 
\end{proof}

\begin{remark} 
Statement 3 in Theorem \ref{thm:SPD} is the geometric condition
(we call it {\it simplicial condition}) for the pairwise surface
tensions, see Figure \ref{fig:schematic-fig2} for the case in which
$N=4$. We can easily find that a necessary condition for the SPD
property of $\tilde{\bLambda}$ on the tangent space is  
\begin{equation} \label{equ:triangle-surface-tension}
|\sqrt{\sigma_{ij}} - \sqrt{\sigma_{jk}}| < \sqrt{\sigma_{ik}} <
\sqrt{\sigma_{ij}} + \sqrt{\sigma_{jk}},
\quad \mbox{for different }i,j,k.
\end{equation}
For the case in which $N=3$, condition
\eqref{equ:triangle-surface-tension} is obviously the  sufficient
condition from Theorem \ref{thm:SPD}. However, for the case in which
$N \geq 4$, \eqref{equ:triangle-surface-tension} is not sufficient,
which makes it difficult to extend the existing three-phase models
\cite{kim2004conservative, kim2005phase, boyer2006study, kim2007phase,
boyer2011numerical} to an arbitrary number of phases. 
\end{remark}

\subsection{$N$-phase Allen-Cahn equations}
It is well known that $N$-phase Allen-Cahn equations can be derived as
the gradient flow, which implies that 
\begin{equation} \label{equ:gradient-flow} 
\gamma \pd{\vec{\phi}}{t} = -\grad W(\vec{\phi}, \nabla \vec{\phi}).
\end{equation} 
Here, parameter $\gamma$ is set as $\cO(\eta)$ consistent with the
mean curvature flow for the two-phase case \cite{evans1992phase}.
In light of \eqref{equ:grad} below, we know that $\grad W$ belongs to
the dual space of $T\Sigma$. Therefore, the left-hand side of the
gradient flow \eqref{equ:gradient-flow} should also be interpreted as
the dual space of $T\Sigma$, which means that the metric on $T\Sigma$
must be considered. First, we define the Sobolev spaces on manifold
$\Sigma$ and tangent space $T\Sigma$ as 
\begin{equation} \label{equ:spaces} 
\begin{aligned}
H^1(\Sigma) &:= H^1(\Omega;\R^n) \cap \Sigma = \{\vec{\phi} \in
H^1(\Omega;\R^n)~|~\vec{d}^T \vec{\phi} - \vec{d}^T\vec{b} = 1\},\\ 
H^1(T\Sigma) &:= H^1(\Omega;\R^n) \cap T\Sigma = \{\vec{v}\in
H^1(\Omega;\R^n)~|~\vec{d}^T \vec{v} = 0\}. 
\end{aligned}
\end{equation}
It can be seen that $H^1(\Sigma) = H^1(T\Sigma) + \vec{b} +
\frac{\vec{d}}{|\vec{d}|^2}$.

Let $\vec{\phi} \in H^1(\Sigma)$ and $(\nabla\vec{\phi}) \nu = 0$ on
$\partial\Omega$.  For any $\vec{v}\in H^1(T\Sigma)$, we obtain the
gradient of $E(\vec{\phi}, \nabla \vec{\phi})$ on the manifold
$\Sigma$ as 
\begin{equation} \label{equ:grad} 
\begin{aligned}
\langle \grad E, \vec{v} \rangle &= \int_{\Omega} \left. \od{}{\theta}
W(\vec{\phi} + \theta\vec{v}, \nabla\vec{\phi} + \theta \nabla\vec{v})
  \right|_{\theta = 0} \rd x \\
&= \int_{\Omega} \eta (\bLambda \nabla\vec{\phi}): \nabla \vec{v} +
\frac{1}{\eta} (\bA^{-T}\pd{F}{\vec{c}}) \cdot \vec{v} ~\rd x \\
&= \int_{\Omega}\eta (\bP \bLambda \bP \nabla \vec{\phi}): \nabla\vec{v} +
\frac{1}{\eta} (\bP\bA^{-T} \pd{F}{\vec{c}}) \cdot \vec{v} ~\rd x.
\end{aligned}
\end{equation}

Denote the manifold (hyperplane) of the concentration as 
$$ 
\Sigma_c = \{\vec{c}\in \R^N ~|~ \vec{1}^T \vec{c} = 1\}.
$$ 
Then, we have 
$$ 
\Sigma_c \xlongrightarrow{\bA\vec{c} + \vec{b}} \Sigma, \qquad T\Sigma_c
\xlongrightarrow{\bA} T\Sigma.
$$ 
If a given inner product $(\cdot, \cdot)_{\bX_c}$ is used for
$T\Sigma_c$, then the induced inner product for $T\Sigma$ will be  
\begin{equation} \label{equ:induced-inner}
(\cdot, \cdot)_{\bX} := (\bX_c\bA^{-1}\cdot,
\bA^{-1}\cdot)_{l^2},
\end{equation}
where $\bX = \bA^{-T}\bX_c\bA^{-1}$. When choosing $\bX_c = \bM_c^{\rm
AC}$, the weak formulation of the $N$-phase Allen-Cahn equations can
be written as  
\begin{equation} \label{equ:N-AC-weak}
\gamma \int_\Omega (\bM_c^{{\rm AC}} \bA^{-1}\pd{\vec{\phi}}{t}) \cdot
(\bA^{-1}\vec{v}) ~\rd x + \int_\Omega \eta (\tilde{\bLambda}
\nabla\vec{\phi}) : \nabla \vec{v} + \frac{1}{\eta} (\bP\bA^{-T}
\pd{F}{\vec{c}}) \cdot \vec{v}~\rd x = 0,
\quad \forall \vec{v} \in H^1(T\Sigma),
\end{equation}
whereas the strong form can be written as  
\begin{equation} \label{equ:N-AC-PDE}
\left\{
\begin{aligned}
\gamma\bP \bA^{-T} \bM_c^{{\rm AC}} \bA^{-1} \pd{\vec{\phi}}{t} -\nabla \cdot
(\eta \tilde{\bLambda} \nabla\vec{\phi}) + \frac{1}{\eta} \bP\bA^{-T}
\pd{F}{\vec{c}} &= 0, \quad \text{in}~\Omega\times(0,T], \\
(\nabla \vec{\phi}) \nu & = 0, \quad \text{on}~\partial\Omega
\times(0,T].
\end{aligned}
\right.
\end{equation}

We will prove that the $N$-phase Allen-Cahn equations
\eqref{equ:N-AC-weak} are independent of the choice of $\bA$ in
the following theorem.

\begin{theorem} \label{thm:NAC-invarient}
Let $\bP_c = \bI - \frac{\vec{1}\vec{1}^T}{\vec{1}^T\vec{1}}$,
$\tilde{\bLambda}_c = \bP_c \bLambda_c \bP_c$, and $\vec{v}_c =
\bA^{-1}\vec{v}$. Then, \eqref{equ:N-AC-weak} is equivalent to 
\begin{equation} \label{equ:N-AC-weak-c} 
\gamma \int_\Omega (\bM_c^{\rm AC} \pd{\vec{c}}{t}) \cdot \vec{v}_c ~\rd
x = \int_\Omega \eta(\tilde{\bLambda}_c \nabla\vec{c}): \nabla
\vec{v}_c + \frac{1}{\eta}(\bP_c \pd{F}{\vec{c}})\cdot \vec{v}_c~\rd
x, \quad \forall \vec{v}_c \in H^1(T\Sigma_c),
\end{equation} 
or to the strong form 
\begin{equation} \label{equ:N-AC-PDE-c}
\left\{
\begin{aligned}
\gamma \bP_c \bM_c^{\rm AC} \pd{\vec{c}}{t} -\nabla \cdot
(\eta \tilde{\bLambda}_c
\nabla\vec{c}) + \frac{1}{\eta} \bP_c \pd{F}{\vec{c}} &= 0,
\quad \mbox{in}~\Omega\times(0,T], \\
(\nabla \vec{c}) \nu & = 0, \quad \mbox{on}~\partial\Omega
\times(0,T].
\end{aligned}
\right.
\end{equation}
\end{theorem}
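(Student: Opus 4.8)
The plan is to establish the equivalence by performing the linear change of variables $\vec{c}=\bA^{-1}(\vec{\phi}-\vec{b})$, $\vec{v}_c=\bA^{-1}\vec{v}$ directly in the weak form \eqref{equ:N-AC-weak}, matching the three resulting integrals term by term with those of \eqref{equ:N-AC-weak-c}, and then passing to the strong form by the standard integration-by-parts argument on the tangent space. First I would record the elementary consequences of $\bA,\vec{b}$ being constant: $\nabla\vec{\phi}=\bA\nabla\vec{c}$, $\bA^{-1}\partial_t\vec{\phi}=\partial_t\vec{c}$, and --- since $\vec{d}=\bA^{-T}\vec{1}$ gives $\vec{d}^T\vec{w}=\vec{1}^T(\bA^{-1}\vec{w})$ for every $\vec{w}$ --- the map $\vec{v}\mapsto\vec{v}_c=\bA^{-1}\vec{v}$ is a bijection of $H^1(T\Sigma)$ onto $H^1(T\Sigma_c)$. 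I would also note that the columns $\vec{a}_k$ of $\bA$ satisfy $\vec{L}_{kl}=\bA(\vec{e}_k-\vec{e}_l)$, so that $\bA^T\bLambda\bA$ solves the same linear system \eqref{equ:coef-equations} as the coefficient matrix $\bLambda_c$ of the choice $\vec{\phi}=\vec{c}$; by Theorem \ref{thm:uni-solvent} this forces $\bP_c\bA^T\bLambda\bA\,\bP_c=\bP_c\bLambda_c\bP_c=\tilde{\bLambda}_c$.

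Granting these facts, the inertial term of \eqref{equ:N-AC-weak} becomes, by \eqref{equ:induced-inner} with $\bX_c=\bM_c^{\rm AC}$, simply $\gamma\int_\Omega(\bM_c^{\rm AC}\partial_t\vec{c})\cdot\vec{v}_c\,\rd x$. For the potential term I would use $\bP\vec{v}=\vec{v}$ (since $\vec{v}\in T\Sigma$) and symmetry of $\bP$ to write $(\bP\bA^{-T}\frac{\partial F}{\partial\vec{c}})\cdot\vec{v}=(\bA^{-T}\frac{\partial F}{\partial\vec{c}})\cdot(\bA\vec{v}_c)=\frac{\partial F}{\partial\vec{c}}\cdot\vec{v}_c=(\bP_c\frac{\partial F}{\partial\vec{c}})\cdot\vec{v}_c$, the last equality because $\bP_c\vec{v}_c=\vec{v}_c$. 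The capillary term is the only one requiring the two auxiliary observations: $\vec{d}$ lies in $\ker\bP$ and $\vec{d}^T\nabla\vec{\phi}=\nabla(\vec{d}^T\vec{\phi})=0$, $\vec{d}^T\nabla\vec{v}=0$, so $\bP$ acts as the identity on $\nabla\vec{\phi}$ and $\nabla\vec{v}$ and hence $(\tilde{\bLambda}\nabla\vec{\phi}):\nabla\vec{v}=(\bLambda\nabla\vec{\phi}):\nabla\vec{v}$; substituting $\nabla\vec{\phi}=\bA\nabla\vec{c}$, $\nabla\vec{v}=\bA\nabla\vec{v}_c$, applying the Frobenius identity $(\bLambda\bA\nabla\vec{c}):(\bA\nabla\vec{v}_c)=(\bA^T\bLambda\bA\,\nabla\vec{c}):\nabla\vec{v}_c$, and then reinserting $\bP_c$ (which, by the analogous reason $\vec{1}^T\nabla\vec{c}=0$, $\vec{1}^T\nabla\vec{v}_c=0$, acts as the identity on $\nabla\vec{c},\nabla\vec{v}_c$) yields $(\tilde{\bLambda}\nabla\vec{\phi}):\nabla\vec{v}=(\tilde{\bLambda}_c\nabla\vec{c}):\nabla\vec{v}_c$. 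Adding the three identities and using that $\vec{v}$ exhausts $H^1(T\Sigma)$ precisely when $\vec{v}_c$ exhausts $H^1(T\Sigma_c)$ gives \eqref{equ:N-AC-weak}$\;\Leftrightarrow\;$\eqref{equ:N-AC-weak-c}.

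To reach \eqref{equ:N-AC-PDE-c} I would integrate the capillary integral in \eqref{equ:N-AC-weak-c} by parts; the boundary contribution vanishes because $(\nabla\vec{\phi})\nu=0$ on $\partial\Omega$ is equivalent to $(\nabla\vec{c})\nu=\bA^{-1}(\nabla\vec{\phi})\nu=0$. Testing against all $\vec{v}_c\in H^1(T\Sigma_c)$ and applying the fundamental lemma of the calculus of variations restricted to the subspace $T\Sigma_c$ shows that the residual $\gamma\bM_c^{\rm AC}\partial_t\vec{c}-\nabla\cdot(\eta\tilde{\bLambda}_c\nabla\vec{c})+\frac{1}{\eta}\frac{\partial F}{\partial\vec{c}}$ is pointwise proportional to $\vec{1}$; applying $\bP_c$ removes exactly that component and produces \eqref{equ:N-AC-PDE-c}. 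Since the derivation of \eqref{equ:N-AC-PDE} from \eqref{equ:N-AC-weak} follows the identical pattern with $(\vec{1},\bP_c)$ replaced by $(\vec{d},\bP)$, the equivalence of the two strong forms follows as well.

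The step I expect to be the main obstacle is the capillary term, precisely the identification of $\tilde{\bLambda}$ with $\tilde{\bLambda}_c$ under the change of variables: one must exploit that the projections $\bP$ and $\bP_c$ may be inserted or deleted at will --- because $\nabla\vec{\phi},\nabla\vec{v}$ lie in $T\Sigma$ while $\nabla\vec{c},\nabla\vec{v}_c$ lie in $T\Sigma_c$ --- in combination with the uniqueness assertion of Theorem \ref{thm:uni-solvent}, since $\bA^T\bLambda\bA$ and $\bLambda_c$ generally differ off the tangent space and only their projections coincide. A secondary technical point is that $\bP_c$ appears as the variable Lagrange multiplier in the strong form only because the test functions are constrained to $T\Sigma_c$, so the weak-to-strong passage must be carried out on this constrained space rather than on all of $\R^N$.
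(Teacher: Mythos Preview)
Your proposal is correct and follows essentially the same route as the paper: both arguments hinge on (i) the bijection $\vec{v}\mapsto\bA^{-1}\vec{v}$ between $H^1(T\Sigma)$ and $H^1(T\Sigma_c)$, (ii) the freedom to insert or delete the projections $\bP,\bP_c$ because the gradients of $\vec{\phi},\vec{v}$ (resp.\ $\vec{c},\vec{v}_c$) already lie in the tangent spaces, and (iii) the uniqueness assertion of Theorem~\ref{thm:uni-solvent} to identify $\bP_c\bA^T\bLambda\bA\,\bP_c$ with $\tilde{\bLambda}_c$. The only cosmetic difference is that the paper packages step~(ii) for the potential term as the matrix identity $\bP_c\bA^T\bP\bA^{-T}=\bP_c$ and keeps $\tilde{\bLambda}$ rather than stripping $\bP$ first, whereas you move $\bP$ onto the test function and remove it directly; these are the same manipulation read in opposite directions.
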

\begin{proof}
It is easy to check that $\vec{v}_c\in H^1(T\Sigma_c)$.  Then,
\eqref{equ:N-AC-weak} is shown to be 
\begin{equation} \label{equ:N-AC-phi2c} 
\begin{aligned}
\gamma \int_\Omega (\bM_c^{\rm AC}\bA^{-1}\pd{\vec{\phi}}{t}) \cdot
\vec{v}_c ~\rd x &= \int_\Omega \eta (\bA^T \tilde{\bLambda}
\nabla\vec{\phi}) : \nabla \vec{v}_c + \frac{1}{\eta} (\bA^T\bP\bA^{-T}
\pd{F}{\vec{c}}) \cdot \vec{v}_c ~\rd x
\\
& = 
\int_\Omega \eta(\bP_c \bA^T \tilde{\bLambda} \nabla\vec{\phi}): \nabla
\vec{v}_c + \frac{1}{\eta} (\bP_c\bA^T\bP\bA^{-T} \pd{F}{\vec{c}})
\cdot \vec{v}_c ~\rd x \\
& = 
\int_\Omega \eta [(\bA\bP_c)^T \tilde{\bLambda} (\bA\bP_c)
\nabla\vec{c}]
: \nabla \vec{v}_c + \frac{1}{\eta} (\bP_c\bA^T\bP\bA^{-T}
\pd{F}{\vec{c}}) \cdot \vec{v}_c ~\rd x. 
\end{aligned}
\end{equation}
For the right-hand side of \eqref{equ:N-AC-phi2c}, it is easy to
determine that 
\begin{equation} \label{equ:AC-invarient1} 
\bP_c \bA^{T}\bP\bA^{-T} = \bP_c (\bI -
\frac{\vec{1}\vec{1}^T\bA^{-1}\bA^{-T}}{|\vec{d}|^2}) = \bP_c.
\end{equation} 
On the other hand, we have 
$$
\bA\bP_c(\vec{e}_k - \vec{e}_l) = \bA(\vec{e}_k - \vec{e}_l) =
\vec{a}_k - \vec{a}_l = \vec{L}_{kl}.
$$ 
Thus, when  
$$ 
[(\vec{e}_k - \vec{e}_l)(\vec{e}_k - \vec{e}_l)^T] : [(\bA\bP_c)^T
\tilde{\bLambda} (\bA\bP_c)] = (\vec{L}_{kl}\vec{L}_{kl}^T) :
\tilde{\bLambda} = \frac{9}{2}\sigma_{kl}, 
$$ 
is taken together with the unisolvent property in Theorem
\ref{thm:uni-solvent}, we obtain
\begin{equation} \label{equ:AC-invarient2} 
(\bA\bP_c)^T \tilde{\bLambda} (\bA\bP_c) =  \tilde{\bLambda}_c.
\end{equation} 
Take \eqref{equ:AC-invarient1} and \eqref{equ:AC-invarient2} into
\eqref{equ:N-AC-phi2c} to obtain the desired results. 
\end{proof}

Since $\bM_c^{\rm AC}$ is SPD on $T\Sigma_c$, by taking $\vec{v} =
\vec{\phi}_t$ in \eqref{equ:N-AC-weak}, we immediately find the
following energy law for $N$-phase Allen-Cahn equations:
\begin{equation} \label{equ:N-AC-energy} 
\od{E(\vec{\phi})}{t} = - \gamma \int_{\Omega}
(\bM_c^{\rm AC} \bA^{-1} \vec{\phi}_t) \cdot (\bA^{-1}\vec{\phi}_t)
~\rd x  = -\gamma \int_{\Omega} (\bM_c^{\rm AC} \vec{c}_t) \cdot
  \vec{c}_t~\rd x \leq 0,
\end{equation} 
which depends only on the dynamics of concentrations, as expected.

\begin{remark}
If $\pd{F}{\vec{c}} = (f'(c_1), f'(c_2), \cdots, f'(c_N))^T$, then
\eqref{equ:N-AC-PDE-c} implies that 
$$ 
\bP_c \pd{F}{\vec{c}} = (\bI - \frac{\vec{1}\vec{1}^T}{N})
  \pd{F}{\vec{c}} = \pd{F}{\vec{c}} - \frac{\vec{1}}{N}\sum_{i=1}^N
  f'(c_i) = \pd{F}{\vec{c}} + \beta(\vec{c})\vec{1}, 
$$ 
where $\beta(\vec{c}) = -\frac{1}{N}\sum_{i=1}^N f'(c_i)$ is exactly
the variable Lagrangian multiplier used in the existing works \cite{
kornhuber2006robust, lee2012efficient, lee2008second,
vanherpe2010multigrid, lee2012practically, graser2014nonsmooth}.
\end{remark}

\subsection{$N$-phase Cahn-Hilliard equations}

For Cahn-Hilliard equations, it is well known that the Hele-Shaw flow
constitutes limiting dynamics in the two-phase case
\cite{chen1991uniqueness, alikakos1994convergence}. Let $\bM_c^{\rm
CH}$ be the mobilities associated with $\vec{c}$ that is SPD on
$T\Sigma_c$. Similar to the argument for the $N$-phase Allen-Cahn
equations, by choosing $(\cdot,\cdot)_{\bX_c} = (\cdot,\cdot)_{\bI_c}$
as the inner product on $T\Sigma_c$, the $N$-phase Cahn-Hilliard
equations under the induced inner product \eqref{equ:induced-inner}
are 
\begin{equation} \label{equ:N-CH-weak}
\left\{
\begin{aligned}
\int_{\Omega} (\bA^{-1} \pd{\vec{\phi}}{t}) \cdot (\bA^{-1}\vec{q}) ~\rd
x &= -\int_\Omega (\bM_c^{\rm CH} \bA^{-1}\nabla \vec{w}) :
(\bA^{-1}\nabla \vec{q}) ~\rd x, \quad \forall \vec{q} \in
H^1(T\Sigma), \\
\int_\Omega (\bA^{-1}\vec{w}) \cdot (\bA^{-1}\vec{v})~\rd x &=
\int_\Omega \eta (\tilde{\bLambda} \nabla \vec{\phi}) : \nabla
\vec{v} + \frac{1}{\eta}(\bP\bA^{-T}\pd{F}{\vec{c}}) \cdot
\vec{v} ~\rd x, \quad \forall \vec{v} \in H^1(T\Sigma), 
\end{aligned}
\right.
\end{equation}
where $\vec{w}$ denotes the chemical potentials. In light of the weak
formulation \eqref{equ:N-CH-weak}, the strong form of $N$-phase
Cahn-Hilliard equations can be written as  
\begin{equation} \label{equ:N-CH-PDE}
\left\{
\begin{aligned}
\bP\bA^{-T}\bA^{-1} \pd{\vec{\phi}}{t} &= \nabla \cdot [
(\bA^{-1}\bP)^T\bM_c^{\rm CH}
    \bA^{-1}\bP \nabla \vec{w}], \quad
\text{in}~\Omega\times(0,T], \\
\bP\bA^{-T}\bA^{-1}\vec{w} &= - \nabla \cdot
(\eta \tilde{\bLambda}\nabla \vec{\phi}) +
\frac{1}{\eta}\bP\bA^{-1}\pd{F}{\vec{c}}, \quad
\text{in}~\Omega \times(0,T], \\
(\nabla\vec{\phi})\nu = (\nabla\vec{w})\nu &= 0, \quad
\text{on}~\partial\Omega\times(0,T].
\end{aligned}
\right.
\end{equation}

Similar to Theorem \ref{thm:NAC-invarient}, we have the following
theorem for the invariant dynamics of concentrations for $N$-phase
Cahn-Hilliard equations.

\begin{theorem} \label{thm:NAC-invarient}
Let $\vec{v}_c = \bA^{-1}\vec{v}$ and $\vec{w}_c = \bA^{-1}\vec{w}$.
Then, \eqref{equ:N-CH-weak} is equivalent to 
\begin{equation} \label{equ:N-CH-weak-cw}
\left\{
\begin{aligned}
\int_{\Omega} \pd{\vec{c}}{t} \cdot \vec{q}_c ~\rd x &= -\int_\Omega
(\bM_c^{\rm CH} \nabla \vec{w}_c) : \nabla \vec{q}_c ~\rd x, \quad
 \forall \vec{q}_c \in
H^1(T\Sigma_c), \\
\int_\Omega \vec{w}_c \cdot \vec{v}_c~\rd x &= \int_\Omega
(\eta \tilde{\bLambda}_c \nabla \vec{c}) : \nabla \vec{v}_c +
\frac{1}{\eta} (\bP_c \pd{F}{\vec{c}}) \cdot \vec{v}_c ~\rd x, \quad
\forall \vec{v}_c \in H^1(T\Sigma_c), 
\end{aligned}
\right.
\end{equation}
or to the strong form 
\begin{equation} \label{equ:N-CH-PDE-cw}
\left\{
\begin{aligned}
\bP_c\pd{\vec{c}}{t} &= \nabla \cdot ( \bP_c\bM_c^{\rm CH}
    \bP_c \nabla \vec{w}_c), \quad
\text{in}~\Omega\times(0,T], \\
\bP_c \vec{w}_c &= - \nabla \cdot
(\eta \tilde{\bLambda}_c\nabla \vec{c}) +
\frac{1}{\eta}\bP_c \pd{F}{\vec{c}}, \quad
\text{in}~\Omega \times(0,T], \\
(\nabla\vec{c})\nu = (\nabla\vec{w}_c)\nu &= 0, \quad
\text{on}~\partial\Omega\times(0,T].
\end{aligned}
\right.
\end{equation}
\end{theorem}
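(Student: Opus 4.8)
The plan is to mirror the argument already used for the $N$-phase Allen-Cahn equations, handling the two equations of the mixed weak formulation \eqref{equ:N-CH-weak} separately. Throughout one uses that $\bA$ and $\vec{b}$ in \eqref{equ:phase-variables} are constant, so that $\bA^{-1}\partial_t\vec{\phi}=\partial_t\vec{c}$ and $\bA^{-1}\nabla\vec{w}=\nabla(\bA^{-1}\vec{w})=\nabla\vec{w}_c$, and that $\bA^{-1}:T\Sigma\to T\Sigma_c$ and $\bA:T\Sigma_c\to T\Sigma$ are mutually inverse linear isomorphisms. Consequently the test functions $\vec{q}=\bA\vec{q}_c$, $\vec{v}=\bA\vec{v}_c$ range over $H^1(T\Sigma)$ exactly when $\vec{q}_c,\vec{v}_c$ range over $H^1(T\Sigma_c)$, so the universally quantified statements correspond.

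For the first equation of \eqref{equ:N-CH-weak} no nontrivial identity is needed: substituting $\bA^{-1}\partial_t\vec{\phi}=\partial_t\vec{c}$, $\bA^{-1}\vec{q}=\vec{q}_c$, $\bA^{-1}\nabla\vec{w}=\nabla\vec{w}_c$ and $\bA^{-1}\nabla\vec{q}=\nabla\vec{q}_c$ turns it directly into the first equation of \eqref{equ:N-CH-weak-cw}. I would emphasize here that the absence of any conjugation of $\bM_c^{\rm CH}$ by $\bA$ is precisely the effect of choosing $(\cdot,\cdot)_{\bX_c}=(\cdot,\cdot)_{\bI_c}$ on $T\Sigma_c$: the induced metric \eqref{equ:induced-inner} on $T\Sigma$ then carries the Jacobian factor $\bX=\bA^{-T}\bA^{-1}$, which cancels exactly the two copies of $\bA^{-1}$ appearing in \eqref{equ:N-CH-weak}. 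The second equation of \eqref{equ:N-CH-weak} is structurally identical to the right-hand side of the $N$-phase Allen-Cahn weak formulation \eqref{equ:N-AC-weak}, so the same computation applies verbatim: writing $\vec{v}=\bA\vec{v}_c$ and $\bP\bLambda\bP=\tilde{\bLambda}$, one applies the identity $\bP_c\bA^{T}\bP\bA^{-T}=\bP_c$ from \eqref{equ:AC-invarient1} to the potential term and the identity $(\bA\bP_c)^{T}\tilde{\bLambda}(\bA\bP_c)=\tilde{\bLambda}_c$ from \eqref{equ:AC-invarient2} --- which itself rests on the unisolvent property, Theorem \ref{thm:uni-solvent} --- to the gradient term; together with $\bA^{-1}\vec{w}=\vec{w}_c$ this gives the second equation of \eqref{equ:N-CH-weak-cw}.

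Finally, the strong form \eqref{equ:N-CH-PDE-cw} follows from \eqref{equ:N-CH-weak-cw} by integration by parts, using the boundary conditions $(\nabla\vec{c})\nu=(\nabla\vec{w}_c)\nu=0$ (equivalent to $(\nabla\vec{\phi})\nu=(\nabla\vec{w})\nu=0$ since $\bA$ is constant and invertible, which also kills the boundary terms generated by $\bM_c^{\rm CH}$ and $\tilde{\bLambda}_c$). Testing against arbitrary $\vec{q}_c,\vec{v}_c\in H^1(T\Sigma_c)$ forces the equations to hold only after left-multiplication by the orthogonal projection $\bP_c$; since $\bP_c$ is constant it commutes with $\nabla\cdot$, and one absorbs it into the operators via $\bP_c\tilde{\bLambda}_c=\tilde{\bLambda}_c$ (as $\tilde{\bLambda}_c=\bP_c\bLambda_c\bP_c$ and $\bP_c^{2}=\bP_c$) and, after normalizing $\vec{w}_c\in T\Sigma_c$ (allowed because the second equation only constrains $\bP_c\vec{w}_c$), $\nabla\vec{w}_c=\bP_c\nabla\vec{w}_c$, so that $\bP_c\bM_c^{\rm CH}\nabla\vec{w}_c=\bP_c\bM_c^{\rm CH}\bP_c\nabla\vec{w}_c$.

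I do not expect any serious obstacle: the substantive content was already established in Theorem \ref{thm:uni-solvent} and in the corresponding invariance theorem for the Allen-Cahn system. The only points needing care are the bookkeeping of the two projections $\bP$ and $\bP_c$ and of the spaces $T\Sigma$, $T\Sigma_c$ under $\bA$, and the verification that the mobility operator $\bM_c^{\rm CH}$ is transported without conjugation --- which, as noted, is exactly the content of the choice of induced inner product \eqref{equ:induced-inner}.
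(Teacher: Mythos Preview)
Your proposal is correct and follows exactly the approach the paper intends: the paper does not give a separate proof but simply prefaces this theorem with ``Similar to Theorem \ref{thm:NAC-invarient}'', and your plan --- direct substitution in the first (mobility) equation, reuse of the identities \eqref{equ:AC-invarient1} and \eqref{equ:AC-invarient2} together with Theorem \ref{thm:uni-solvent} for the second equation, then integration by parts for the strong form --- is precisely that similarity made explicit.
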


It is easy to verify the global mass conservation and energy
law of the $N$-phase Cahn-Hilliard model. First, by the first
equation of \eqref{equ:N-CH-weak-cw}, we have  
\begin{equation} \label{equ:N-CH-mass}
\od{}{t}\int_\Omega \bP_c \vec{c} ~\rd x = 0.
\end{equation}
Note that $\vec{1}^T\vec{c} = 0$. Then, we have $\od{}{t}\int_\Omega
\vec{c}~\rd x = 0$. Further, by taking $\vec{v} =
\vec{\phi}_t$ and $\vec{q} = \vec{w}$ in \eqref{equ:N-CH-weak}, we
obtain
\begin{equation} \label{equ:N-CH-momentum}
\begin{aligned}
\od{E(\vec{\phi})}{t} &= \int_\Omega (\bA^{-1}\vec{w})\cdot
(\bA^{-1}\vec{\phi}_t) ~\rd x \\
& = -\int_\Omega (\bM_c^{\rm CH} \bA^{-1}\nabla \vec{w}) \cdot
(\bA^{-1}\nabla \vec{w})  ~\rd x 
= -\int_{\Omega} (\bM_c^{\rm CH} \nabla\vec{w}_c) \cdot
\nabla\vec{w}_c ~\rd x \leq 0. 
\end{aligned}
\end{equation}
From \eqref{equ:N-CH-momentum}, we see that the $N$-phase
Cahn-Hilliard equations describe the energy law in a conservation
system, as for the two-phase case.

\subsection{Determine of $\bM_c^{\rm AC}$ and $\bM_c^{\rm CH}$, and
choices of $F(\cdot)$}
Now, we will use the {\bf Assumption 3}, to determine the $\bM_c^{\rm
AC}$ and $\bM_c^{\rm CH}$ appearing in \eqref{equ:N-AC-PDE-c} and
\eqref{equ:N-CH-PDE-cw}, respectively. First, since SPD operator is
invertible, we know that for any $\bW_c$ SPD on $T\Sigma_c$, there
uniquely exists a $\bW_c^{\dagger}$, such that $\bW_c \bW_c^{\dagger}
= \bW_c^{\dagger}\bW_c = \bI_c$. Clearly, $\bW_c$ is also SPD. By
direct calculation, 
$$ 
[\nabla \cdot (\tilde{\bLambda}_c \nabla \vec{c})]_i = \sum_{k=1}^N
\sum_{j=1}^d \partial_{x_j} (\tilde{\bLambda}_c)_{ik} \partial_{x_j}
c_k = \sum_{k=1}^N (\tilde{\bLambda}_c)_{ik} \Delta c_k =
(\tilde{\bLambda}_c \Delta \vec{c})_i. 
$$ 
Then, \eqref{equ:N-AC-PDE-c} can be recast as 
$$ 
\gamma \frac{\partial \vec{c}}{\partial t} - \eta (\bM_c^{{\rm
AC},\dagger} \tilde{\bLambda}_c) \Delta \vec{c} + \frac{1}{\eta}
(\bM_c^{{\rm AC}, \dagger} \bP_c) \frac{\partial F}{\partial \vec{c}}
= 0.
$$ 
Therefore, the {\bf Assumption 3} is equivalent to the following
property: 
$$ 
\mbox{If~}c_i = 0,\mbox{~then~}-\eta \sum_{j=1}^N (\bM_c^{{\rm
AC},\dagger} \tilde{\bLambda}_c)_{ij} \Delta c_j + \frac{1}{\eta}
\left( \bM_c^{{\rm AC},\dagger} \bP_c \frac{\partial F}{\partial
\vec{c}} \right)_i = 0,
$$ 
which requires that both the nonlinear potential term and the
second-order differential term should vanish identically.
Therefore, it is in particular needed that 
\begin{equation} \label{equ:Mob-differential}
\mbox{If~}c_i = 0,\mbox{~then~} \sum_{j=1}^N (\bM_c^{{\rm
AC},\dagger} \tilde{\bLambda}_c)_{ij} \Delta c_j = 0.
\end{equation}

\begin{lemma} \label{equ:determine-Mob}
For any $N \geq 2$, \eqref{equ:Mob-differential} holds if and only if
there exists a constant $C$ such that 
\begin{equation} \label{equ:Mob-AC-general} 
\bM_c^{{\rm AC},\dagger} \tilde{\bLambda}_c = C \bI_c. 
\end{equation} 
\end{lemma}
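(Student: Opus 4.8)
The plan is to write $\bB := \bM_c^{{\rm AC},\dagger}\tilde{\bLambda}_c$ and reduce the whole statement to linear algebra on the hyperplane $T\Sigma_c$ plus one short realizability argument. The structural starting point is that $\bB$ ``lives on'' $T\Sigma_c$ from both sides. Since $\tilde{\bLambda}_c = \bP_c\bLambda_c\bP_c$ and $\bP_c\vec{1} = \vec{0}$, we get $\tilde{\bLambda}_c\vec{1} = \vec{0}$, hence $\bB\vec{1} = \vec{0}$; and since $\bM_c^{{\rm AC},\dagger}$ is SPD on $T\Sigma_c$ (it is the inverse of $\bM_c^{\rm AC}$ there and is trivial on ${\rm span}\{\vec{1}\}$), its range is $T\Sigma_c = \vec{1}^\perp$, hence $\vec{1}^T\bM_c^{{\rm AC},\dagger} = \vec{0}^T$ and so $\vec{1}^T\bB = \vec{0}^T$; equivalently $\bB = \bP_c\bB\bP_c$. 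The easy ($\Leftarrow$) direction is then immediate: if $\bB = C\bI_c = C\bP_c$ then on any subregion where phase $i$ is absent, i.e. $c_i\equiv 0$, one has $\sum_{j=1}^N B_{ij}\Delta c_j = C(\bP_c\Delta\vec{c})_i = C\bigl(\Delta c_i - \tfrac1N\Delta(\sum_k c_k)\bigr) = 0$, because $\Delta c_i = 0$ there and $\sum_k c_k\equiv 1$; this is exactly \eqref{equ:Mob-differential}.

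For the nontrivial ($\Rightarrow$) direction, fix an index $i$ and introduce the $(N-2)$-dimensional subspace $V_i := \{\vec{\xi}\in\R^N : \xi_i = 0,\ \vec{1}^T\vec{\xi} = 0\}$, whose orthogonal complement is $V_i^\perp = {\rm span}\{\vec{e}_i,\vec{1}\}$. The key observation is that $V_i$ is precisely the set of vectors $(\Delta c_1(x_0),\dots,\Delta c_N(x_0))$ realizable at an interior point $x_0$ of a region where phase $i$ is absent: $c_i\equiv 0$ forces $\Delta c_i(x_0) = 0$, $\sum_k c_k\equiv 1$ forces $\sum_k\Delta c_k(x_0) = 0$, and conversely any such vector is attained by a suitable local (e.g.\ quadratic) perturbation of an admissible configuration with $c_k\ge 0$ and $c_i\equiv 0$ near $x_0$. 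Hence \eqref{equ:Mob-differential} says exactly that the $i$-th row of $\bB$ annihilates $V_i$, so $\vec{e}_i^T\bB \in V_i^\perp$, i.e.\ the $i$-th row of $\bB$ equals $\alpha_i\vec{e}_i^T + \beta_i\vec{1}^T$ for scalars $\alpha_i,\beta_i$. Reading off $\bB\vec{1} = \vec{0}$ row by row gives $\alpha_i + N\beta_i = 0$, so the $i$-th row is $\alpha_i\bigl(\vec{e}_i^T - \tfrac1N\vec{1}^T\bigr)$; stacking the rows yields $\bB = {\rm diag}(\alpha_1,\dots,\alpha_N)\,\bP_c$. Finally $\vec{1}^T\bB = \vec{0}^T$ becomes $(\alpha_1,\dots,\alpha_N)\bP_c = \vec{0}^T$, which forces $\alpha_1 = \cdots = \alpha_N =: C$, and therefore $\bB = C\bP_c = C\bI_c$. (When $N = 2$ the space $V_i$ is trivial and the conclusion follows already from $\bB\vec{1} = \vec{1}^T\bB = \vec{0}$.)

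I expect the only genuinely nonroutine step to be the realizability claim, namely that on a region where $c_i\equiv 0$ the Laplacians of the remaining $c_j$ can be prescribed at a point subject only to $\sum_{j\ne i}\Delta c_j = 0$; this is intuitively clear and can be made rigorous by an explicit local construction, but it is where care is needed. Everything downstream is elementary: the two identities $\bB\vec{1} = \vec{0}$, $\vec{1}^T\bB = \vec{0}^T$ together with the identification $V_i^\perp = {\rm span}\{\vec{e}_i,\vec{1}\}$ pin down $\bB$ up to the single scalar $C$.
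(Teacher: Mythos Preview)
Your argument is correct, but it takes a different route from the paper. The paper works entirely on $T\Sigma_c$ via an eigenvector argument: using only \emph{two-phase} test configurations (phases $i$ and $N$ present, all others absent), it reads off from \eqref{equ:Mob-differential} that $(\bB(\vec{e}_i-\vec{e}_N))_k=0$ for $k\neq i,N$, and combines this with $\vec{1}^T\bB=\vec{0}^T$ to conclude $\bB(\vec{e}_i-\vec{e}_N)=\beta_i(\vec{e}_i-\vec{e}_N)$; then the same with $\vec{e}_i-\vec{e}_j$ forces $\beta_i=\beta_j$. Your route instead works row by row on $\R^N$: you use $(N-1)$-phase configurations (only phase $i$ absent) to make the $i$-th row of $\bB$ annihilate the whole $(N-2)$-dimensional space $V_i$, and then the two structural identities $\bB\vec{1}=\vec{0}$, $\vec{1}^T\bB=\vec{0}^T$ pin the remaining two degrees of freedom. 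The paper's approach has the lighter realizability burden --- it only needs that $\Delta c_i$ can be nonzero somewhere in a two-phase region, a one-dimensional claim --- whereas your approach needs the full $(N-2)$-dimensional realizability you flagged as nonroutine. On the other hand, your proof is more explicit about the matrix form $\bB=\mathrm{diag}(\alpha_1,\dots,\alpha_N)\bP_c$ as an intermediate step, and you also spell out the $(\Leftarrow)$ direction, which the paper leaves implicit.
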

\begin{proof}
It is straightforward that $\bM_c^{{\rm AC},\dagger}
\tilde{\bLambda}_c$ is a linear operator from $T\Sigma_c$ to
$T\Sigma_c$. When $N=2$, clearly \eqref{equ:Mob-AC-general} is true as
$\dim(T\Sigma_c) = 1$. When $N\geq 3$, consider the following set of
basis of $T\Sigma_c$:
$$ 
\{\vec{e}_1 - \vec{e}_N, \vec{e}_2 - \vec{e}_N, \cdots, \vec{e}_{N-1}
- \vec{e}_N\}.
$$ 
From the property \eqref{equ:Mob-differential}, each basis forms an
invariant 1-dimensional subspace under $\bM_c^{{\rm AC},\dagger}
\tilde{\bLambda}_c$, namely 
$$ 
\bM_c^{{\rm AC},\dagger} \tilde{\bLambda}_c (\vec{e}_i - \vec{e}_N) =
\beta_i (\vec{e}_i - \vec{e}_N), \qquad i = 1, \cdots, N-1.
$$ 
Note that for any $1 \leq i < j \leq N-1$, $\{\vec{e}_i - \vec{e}_j\}$
is also an invariant 1-dimensional subspace under $\bM_c^{{\rm
AC},\dagger} \tilde{\bLambda}_c$. Hence, 
$$ 
\bM_c^{{\rm AC},\dagger} \tilde{\bLambda}_c (\vec{e}_i - \vec{e}_j) =
\bM_c^{{\rm AC},\dagger} \tilde{\bLambda}_c [(\vec{e}_i - \vec{e}_N) -
(\vec{e}_i - \vec{e}_N)] = \beta_i (\vec{e}_i - \vec{e}_N) - \beta_j
(\vec{e}_j - \vec{e}_N) \in \{\vec{e}_i - \vec{e}_j\},
$$ 
which implies that $\beta_i = \beta_j$. Therefore, there exists a
constant $C$ such that $\beta_i = C$ for all $1\leq i \leq N-1$, which
gives rise to \eqref{equ:Mob-AC-general}.
\end{proof}

For conciseness, the constant $C$ can be absorbed into the parameter
$\gamma$ in the $N$-phase Allen-Cahn equation. Therefore, we choose 
\begin{equation} \label{equ:Mob-AC}
\bM_c^{\rm AC} = \tilde{\bLambda}_c.
\end{equation}
In the similar manner, the {\bf Assumption 3} implies the
following choice of $\bM_c^{\rm CH}$ for the  $N$-phase Cahn-Hilliard
equations  
\begin{equation} \label{equ:Mob-CH} 
\bM_c^{\rm CH}\tilde{\bLambda}_c = M_0 \bI_c, \quad \mbox{or}\quad
\bM_c^{\rm CH} = M_0 \tilde{\bLambda}_c^{\dagger},
\end{equation} 
where the positive constant $M_0$ is called the mobility. 

The construction of the nonlinear potential $F(\cdot)$ satisfying the
{\bf Assumption 3} is very challenging. This problem is entirely 
answered for the simplest case in which the pairwise surface tensions
are homogeneous, namely $\sigma_{ij} = \sigma$, 
\begin{equation} \label{equ:homo-F}
F^{\sigma}(\vec{c}) := F^{\sigma}_0(\vec{c}) + F^{\sigma}_1(\vec{c}),
\end{equation}
where 
$$
F^{\sigma}_0(\vec{c}) = 2\sigma \sum_{i=1}^N f(c_i), \qquad
F^{\sigma}_1(\vec{c}) =  
\begin{cases}
0, & N=2,3, \\
8\sigma \sum\limits_{i_1 < i_2 < i_3 < i_4}
c_{i_1} c_{i_2} c_{i_3} c_{i_4}, & N \geq 4.
\end{cases}
$$
We refer to the Proposition 3.3 in \cite{boyer2014hierarchy}.  For the
inhomogeneous case, we consider the following nonlinear potential in
this paper,
\begin{equation} \label{equ:nonlinear-potential} 
F^{\sigma_{ij}}(\vec{c}) := F_0^{\sigma_{ij}}(\vec{c}) + s
F_1^{\sigma_{ij}}(\vec{c}) 
\end{equation}
where 
$$
F_0^{\sigma_{ij}}(\vec{c}) = \sum_{i,j=1}^N \sigma_{ij}[f(c_i) +
f(c_j) - f(c_i+c_j)], \qquad F_1^{\sigma_{ij}}(\vec{c}) = \sum_{i,j=1}^N
\sigma_{ij}c_i^2c_j^2(\sum_{k\neq i,j}c_k^2), 
$$
and $s$ is a stabilization parameter in the nonlinear potential. We
note that such a choice meets the {\bf Assumption 3} when $K = 2$
\cite{boyer2014hierarchy}, namely if only a pair of two fluid phases
is present in the system, the $N$-phase Allen-Cahn and Cahn-Hilliard
equations will fully reduce to those for the corresponding two-phase
system. In Section 3.2.2 of \cite{boyer2014hierarchy}, the authors
successfully constructed the nonlinear potential that meets the {\bf
Assumption 3} when $K \leq 3$. However, the construction of the
consistent $N$-phase nonlinear potential is still an open problem.

\subsection{Phase variables with special choice of $\bA$}

Basically, the choice of $\bA$ in our $N$-phase model does not
affect the dynamics of concentrations. In practice, $\bA$ can be
chosen such that the tangent space $T\Sigma$ can easily be
represented. To this end, a convenient choice is 
\begin{equation} \label{equ:choice-A}
\bA = \begin{pmatrix}
1 & & & \\
& \ddots & & \\
& & 1 & \\
1 & \cdots & 1 & 1
\end{pmatrix}, \quad 
\bA^{-1} = \begin{pmatrix}
1 & & & \\
& \ddots & & \\
& & 1 & \\
-1 & \cdots & -1 & 1
\end{pmatrix}, \quad \vec{b} = \vec{0}.
\end{equation}
In this case, the phase variables are 
$$ 
\vec{\phi} = 
\begin{pmatrix}
1 & & & \\
& \ddots & & \\
& & 1 & \\
1 & \cdots & 1 & 1
\end{pmatrix} 
\begin{pmatrix}
c_1 \\ c_2 \\ \vdots \\ c_N
\end{pmatrix} = 
\begin{pmatrix}
c_1 \\ \vdots \\ c_{N-1} \\ 1
\end{pmatrix}.
$$ 
We also have $\vec{d} = \bA^{-1}\vec{1} = (0, \cdots, 0, 1)^T =
\vec{e}_N$ and 
$$ 
T\Sigma = \{\vec{v}~|~v_N = 0\}, \quad \bP = 
\begin{pmatrix}
1 & & & \\
& \ddots & & \\
& & 1 & \\
& & & 0
\end{pmatrix}.
$$ 
Furthermore, in Theorem \ref{thm:SPD}, we have $\vec{L}_{kN} = \vec{a}_k -
\vec{a}_N = \vec{e}_k$. Thus, 
\begin{equation} \label{equ:choice-Lambda}
\tilde{\bLambda} = \begin{pmatrix}
\frac{9}{2}\tilde{\bsigma}^N & 0 \\
0 & 0
\end{pmatrix}.
\end{equation}
By combining \eqref{equ:choice-A} and \eqref{equ:choice-Lambda}, we
obtain the $N$-phase Allen-Cahn equations \eqref{equ:N-AC-PDE} under
the special choice as  
\begin{equation} \label{equ:special-AC}
\frac{9\gamma}{2}\tilde{\bsigma}^N \pd{}{t} 
\begin{pmatrix}
c_1 \\ c_2 \\ \vdots \\ c_{N-1}
\end{pmatrix} - \nabla \cdot (\frac{9\eta}{2} \tilde{\bsigma}^N 
\begin{pmatrix}
\nabla c_1 \\  \nabla c_2 \\ \vdots \\ \nabla c_{N-1}
\end{pmatrix}) + 
\frac{1}{\eta} \begin{pmatrix}
\pd{F}{c_1} - \pd{F}{c_N} \\
\pd{F}{c_2} - \pd{F}{c_N} \\
\vdots \\
\pd{F}{c_{N-1}} - \pd{F}{c_N}
\end{pmatrix} = 0.
\end{equation}

Similarly, we obtain the $N$-phase Cahn-Hilliard equations \eqref{equ:N-CH-PDE}
under the special choice as 
\begin{equation} \label{equ:special-CH}
\left\{
\begin{aligned}
\tilde{\bA}
\pd{}{t} 
\begin{pmatrix}
c_1 \\ c_2 \\ \vdots \\ c_{N-1}
\end{pmatrix} &= \nabla \cdot ( \frac{2M_0}{9}\tilde{\bA}
(\tilde{\bsigma}^N)^{-1} \tilde{\bA}
\begin{pmatrix}
\nabla w_1 \\ \nabla w_2 \\ \vdots \\ \nabla w_{N-1}
\end{pmatrix}
), \\
\tilde{\bA}
\begin{pmatrix}
w_1 \\ w_2 \\ \vdots \\ w_{N-1}
\end{pmatrix} &= 
- \nabla \cdot( 
\frac{9\eta}{2}\tilde{\bsigma}^N \begin{pmatrix}
\nabla c_1 \\ \nabla c_2 \\ \vdots \\ \nabla c_{N-1}
\end{pmatrix} ) + 
\frac{1}{\eta} \begin{pmatrix}
\pd{F}{c_1} - \pd{F}{c_N} \\
\pd{F}{c_2} - \pd{F}{c_N} \\
\vdots \\
\pd{F}{c_{N-1}} - \pd{F}{c_N}
\end{pmatrix},
\end{aligned}
\right.
\end{equation}
where 
$$
\tilde{\bA} := 
\begin{pmatrix}
2 & 1 & \cdots & 1 \\
1 & 2 & \cdots & 1 \\
\vdots & \ddots & \ddots & \vdots \\
1 & \cdots & 1 & 2
\end{pmatrix} \in \R^{(N-1)\times(N-1)}.
$$

We note that in \cite{dong2014efficient, dong2015physical}, where the
pairwise surface tensions are also considered, the dynamics of
concentrations are dependent on the choice of $\bA$. On the other
hand, our model can be viewed as extending the literature by adding
the effect of the pairwise surface tensions. For the two-phase case,
the $N$-phase Allen-Cahn equations \eqref{equ:special-AC} are shown to
be 
$$ 
\frac{9\gamma}{2}\pd{c_1}{t} - \nabla \cdot (\frac{9\eta}{2} \nabla
c_1) + \frac{8}{\eta} c_1 (1-c_1) (1-2c_1) = 0. 
$$ 
Let $\gamma = \eta$ and take the transformation $c_1 =
\frac{1+\phi}{2}$. Therefore, we have 
$$ 
\pd{\phi}{t} - \Delta \phi +
\frac{8}{9\eta^2} (\phi^3 - \phi) = 0,
$$ 
which yields the standard two-phase Allen-Cahn equation when $\epsilon =
\frac{3}{2\sqrt{2}}\eta$, see \eqref{equ:2-lambda-tension}. 

Similarly, when $N=2$, the $N$-phase Cahn-Hilliard equations
\eqref{equ:special-CH} are shown to be 
$$
\left \{
\begin{aligned}
2\pd{c_1}{t} &= \nabla \cdot (\frac{8M_0}{9\sigma} \nabla w_1), \\
2w_1 &= -\nabla \cdot (\frac{9\eta\sigma}{2}  \nabla c_1) +
\frac{8\sigma}{\eta} c_1(1-c_1)(1-2c_1),
\end{aligned}
\right.
$$ 
or 
$$ 
\phi_t + \nabla \cdot\left(\frac{4M_0}{9}\nabla \left(\frac{9\eta}{4}\Delta
    \phi - \frac{2}{\eta}(\phi^3 - \phi)\right) \right) = 0,
$$ 
which yields the standard two-phase Cahn-Hilliard equation 
$$ 
\pd{\phi}{t} + \Delta \left( \epsilon \Delta \phi -
\frac{1}{\epsilon}(\phi^3 - \phi) \right) = 0,
$$ 
when
$\epsilon = \frac{3}{2\sqrt{2}}\eta$ and $M_0 = \frac{3}{2\sqrt{2}}$.

\section{Discretization for the $N$-phase Models} \label{sec:discretization}

In this section, we present some numerical schemes for both $N$-phase
Allen-Cahn and $N$-phase Cahn-Hilliard equations. Because of the
fundamental role that energy law plays in the phase field model, we
will focus on the energy-stable property of the numerical schemes in
the discrete level. 

The time step size is denoted by $k$. Denote the Hessian matrix of
$F(\vec{c})$ as 
\begin{equation} \label{equ:Hessian}
\bH(\cdot) = \frac{\partial^2 F(\cdot)}{\partial \vec{c}^2}. 
\end{equation}

In the two-phase case, it is well-known that the Allen-Cahn equation
satisfies the maximum principle, which is also satisfied for
the Cahn-Hilliard equations for truncated potentials
\cite{caffarelli1995bound}.
The \emph{admissible states} \eqref{equ:admissible} can be regarded as
the generalization of the maximum principle in the two-phase case:
\begin{equation} \label{equ:admissible}
\cA_c = \{\vec{c}\in \R^N~|~ 0\leq c_i \leq 1, \sum_{i=1}^N c_i =
1\}.
\end{equation}
We note that the SPD property of the coefficient matrix is the
critical point for the maximum principle in the two-phase Allen-Cahn
equation.  For the $N$-phase Allen-Cahn and $N$-phase Cahn-Hilliard
equations, we recall that the physical conditions $\vec{c} \in \cA_c$
cannot be guaranteed in our models. 

From the numerical aspect, let $\cA_{c,h}$ be the numerical
admissible states of the concentrations. That is, the numerical
concentrations are allowed to lie only in $\cA_{c,h}$. Then, we
define two constants: 
\begin{equation}
\label{equ:bound-hessian}
L_1 := \max_{\vec{\xi} \in \cA_{c,h}} \left| \lambda_{\max}\left(
       \bH(\vec{\xi}) \right) \right|, \quad 
L_2 := \max_{\vec{\xi} \in \cA_{c,h}} \left| \lambda_{\min}\left( 
       \bH(\vec{\xi}) \right) \right|. 
\end{equation}
We note that both $L_1$ and $L_2$ depend on the pairwise surface
tensions and the stabilization parameter seeing that
\eqref{equ:nonlinear-potential}. 

\subsection{Numerical schemes for $N$-phase Allen-Cahn equations}

In this subsection, we will extend some existing schemes for two-phase
Allen-Cahn equations to the $N$-phase versions. Let $V_h$ denote the
finite element subspace of $H^1(T\Sigma)$. We note again that $\gamma =
\cO(\eta)$ in the $N$-phase Allen-Cahn model is used to render the
model consistent with mean curvature flow for the two-phase model.
Moreover, by virtue of \eqref{equ:Mob-AC} and
\eqref{equ:AC-invarient2}, we have 
$$ 
\bA^{-T}\bM_c^{\rm AC} \bA^{-1} = \tilde{\bLambda}.
$$ 
Therefore, the strong form of the $N$-phase Allen-Cahn equations
\eqref{equ:N-AC-PDE} turn out to be 
\begin{equation} \label{equ:N-AC-PDE-2}
\left\{
\begin{aligned}
\gamma\bP \tilde{\bLambda} \pd{\vec{\phi}}{t} -\nabla \cdot
(\eta \tilde{\bLambda} \nabla\vec{\phi}) + \frac{1}{\eta} \bP\bA^{-T}
\pd{F}{\vec{c}} &= 0, \quad \text{in}~\Omega\times(0,T], \\
(\nabla \vec{\phi}) \nu & = 0, \quad \text{on}~\partial\Omega
\times(0,T].
\end{aligned}
\right.
\end{equation}

\subsubsection{First-order semi-implicit scheme}
The first-order semi-implicit scheme for $N$-phase Allen-Cahn equations
\eqref{equ:N-AC-PDE-2} can be written as 
\begin{equation} \label{equ:semi-NAC}
\left(\frac{\gamma}{k}\tilde{\bLambda} (\vec{\phi}_h^n - \vec{\phi}_h^{n-1}),
\vec{v}_h\right) + (\eta\tilde{\bLambda} \nabla \vec{\phi}_h^n, \nabla
\vec{v}_h) + \left(\frac{1}{\eta}\bA^{-T}
\pd{F}{\vec{c}}(\vec{\phi}_h^{n-1}), \vec{v}_h \right) = 0,
\quad \forall \vec{v}_h \in V_h.
\end{equation}
Define 
$$ 
G(\vec{\phi}) := \int_{\Omega} \big( \frac{\eta}{2}
\tilde{\bLambda}\nabla \vec{\phi} \big) : \nabla \vec{\phi},
$$ 
which can be easily verified to be convex on $T\Sigma$ thanks to the
SPD property of $\tilde{\bLambda}$.

\begin{theorem} \label{thm:semi-NAC}
For \eqref{equ:semi-NAC}, under the condition that 
$$ 
k \leq \frac{2\lambda_{c,\min}}{L_1}\gamma\eta,
$$ 
the following discrete energy-stability holds: 
$$ 
E(\vec{\phi}_h^{n}) +
(\frac{\gamma\lambda_{c,\min}}{k}-\frac{L_1}{2\eta})\|\vec{c}_h^n -
    \vec{c}_h^{n-1}\|_0^2 \leq E(\vec{\phi}_h^{n-1}), 
$$ 
where $\lambda_{c,\min}$ is the minimal eigenvalue of
$\tilde{\bLambda}_c$ on $T\Sigma_c$. 
\end{theorem}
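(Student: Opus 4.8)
The plan is to test the scheme \eqref{equ:semi-NAC} with the choice $\vec{v}_h = \vec{\phi}_h^n - \vec{\phi}_h^{n-1} \in V_h$ and to bound each of the three resulting terms from below; since their sum vanishes, a rearrangement will produce the stated inequality. Two preliminary observations will be used throughout. First, by the chain rule $\vec{c}_h^n - \vec{c}_h^{n-1} = \bA^{-1}(\vec{\phi}_h^n - \vec{\phi}_h^{n-1}) \in H^1(T\Sigma_c)$, and $\frac{1}{\eta}\bA^{-T}\frac{\partial F}{\partial \vec{c}}(\vec{c}_h^{n-1})$ is the $\vec{\phi}$-gradient of the bulk part $\frac{1}{\eta}\int_\Omega F$ of the energy; combined with $\bA^{-T}\bM_c^{\rm AC}\bA^{-1} = \tilde{\bLambda}$ and $\bM_c^{\rm AC} = \tilde{\bLambda}_c$ (see \eqref{equ:Mob-AC} and \eqref{equ:AC-invarient2}), the first and third terms convert cleanly into quantities written in $\vec{c}_h$. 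Second, since $\vec{\phi}_h^n \in H^1(\Sigma)$, the columns of $\nabla\vec{\phi}_h^n$ lie in $T\Sigma$, so $(\bLambda\nabla\vec{\phi}_h^n):\nabla\vec{\phi}_h^n = (\tilde{\bLambda}\nabla\vec{\phi}_h^n):\nabla\vec{\phi}_h^n$ and therefore $E(\vec{\phi}_h^m) = G(\vec{\phi}_h^m) + \frac{1}{\eta}\int_\Omega F(\vec{c}_h^m)$.

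For the discrete time-derivative term I would write $\frac{\gamma}{k}(\tilde{\bLambda}(\vec{\phi}_h^n - \vec{\phi}_h^{n-1}), \vec{\phi}_h^n - \vec{\phi}_h^{n-1}) = \frac{\gamma}{k}(\tilde{\bLambda}_c(\vec{c}_h^n - \vec{c}_h^{n-1}), \vec{c}_h^n - \vec{c}_h^{n-1}) \ge \frac{\gamma\lambda_{c,\min}}{k}\|\vec{c}_h^n - \vec{c}_h^{n-1}\|_0^2$, using that $\tilde{\bLambda}_c$ is SPD on $T\Sigma_c$ with least eigenvalue $\lambda_{c,\min}$. For the gradient term, the SPD property of $\tilde{\bLambda}$ on $T\Sigma$ (Theorem \ref{thm:SPD}), equivalently the convexity of $G$, gives $(\eta\tilde{\bLambda}\nabla\vec{\phi}_h^n, \nabla(\vec{\phi}_h^n - \vec{\phi}_h^{n-1})) \ge G(\vec{\phi}_h^n) - G(\vec{\phi}_h^{n-1})$; this is just the polarization identity $a\cdot(a-b) = \tfrac12|a|^2 - \tfrac12|b|^2 + \tfrac12|a-b|^2$ in the inner product induced by $\tilde{\bLambda}$, with the nonnegative square term discarded. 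For the nonlinear term, I would rewrite it as $\frac{1}{\eta}(\frac{\partial F}{\partial \vec{c}}(\vec{c}_h^{n-1}), \vec{c}_h^n - \vec{c}_h^{n-1})$ and apply a second-order Taylor expansion of $F$ with mean-value remainder: pointwise $\frac{\partial F}{\partial \vec{c}}(\vec{c}_h^{n-1})\cdot(\vec{c}_h^n - \vec{c}_h^{n-1}) \ge F(\vec{c}_h^n) - F(\vec{c}_h^{n-1}) - \tfrac{L_1}{2}|\vec{c}_h^n - \vec{c}_h^{n-1}|^2$, since the Hessian $\bH$ has largest eigenvalue at most $L_1$ on the admissible states by \eqref{equ:bound-hessian}; integrating over $\Omega$ gives the lower bound $\frac{1}{\eta}\int_\Omega (F(\vec{c}_h^n) - F(\vec{c}_h^{n-1})) - \frac{L_1}{2\eta}\|\vec{c}_h^n - \vec{c}_h^{n-1}\|_0^2$.

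Adding the three lower bounds, using that the tested left-hand side equals zero, and invoking $E(\vec{\phi}_h^m) = G(\vec{\phi}_h^m) + \frac{1}{\eta}\int_\Omega F(\vec{c}_h^m)$, I get precisely $E(\vec{\phi}_h^n) + \big(\frac{\gamma\lambda_{c,\min}}{k} - \frac{L_1}{2\eta}\big)\|\vec{c}_h^n - \vec{c}_h^{n-1}\|_0^2 \le E(\vec{\phi}_h^{n-1})$; the hypothesis $k \le \frac{2\lambda_{c,\min}}{L_1}\gamma\eta$ is then used only to make the bracketed coefficient nonnegative, which upgrades this to genuine energy decay.

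The step I expect to be the main obstacle is the Taylor-remainder estimate for the nonlinear term: it is valid only if the intermediate point on the segment $[\vec{c}_h^{n-1}, \vec{c}_h^n]$ lies in the region where $\lambda_{\max}(\bH) \le L_1$, i.e. one must know that consecutive discrete concentrations — and the segment joining them — stay in the numerical admissible set $\cA_{c,h}$ (in particular $\cA_{c,h}$ should be convex), or else replace $F$ by a globally defined potential whose gradient is globally Lipschitz. The remaining difficulties are purely bookkeeping: tracking which $\bP$- and $\bA$-conjugations turn $\tilde{\bLambda}$ into $\tilde{\bLambda}_c$, and checking that the affine iterate $\vec{\phi}_h^n$ has gradient columns in $T\Sigma$ so that $G$ and the gradient part of $E$ coincide.
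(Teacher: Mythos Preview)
Your proof is correct and follows essentially the same approach as the paper: test \eqref{equ:semi-NAC} with $\vec{v}_h=\vec{\phi}_h^n-\vec{\phi}_h^{n-1}$, use convexity of $G$ for the gradient term, a second-order Taylor expansion of $F$ around $\vec{c}_h^{n-1}$ for the nonlinear term, and convert $\tilde{\bLambda}$ to $\tilde{\bLambda}_c$ via \eqref{equ:AC-invarient2} to extract $\lambda_{c,\min}$ from the time-derivative term. Your caveat about the Taylor remainder needing the segment $[\vec{c}_h^{n-1},\vec{c}_h^n]$ to lie in $\cA_{c,h}$ is well taken --- the paper silently assumes this as well.
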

\begin{proof}
In light of the convexity of $G(\cdot)$, we have 
$$ 
\begin{aligned}
G(\vec{\phi}_h^n) - G(\vec{\phi}_h^{n-1}) & \leq
G'(\vec{\phi}_h^n)(\vec{\phi}_h^n - \vec{\phi}_h^{n-1}) =
\left( \eta 
\tilde{\bLambda}\nabla \vec{\phi}_h^n, \nabla (\vec{\phi}_h^n -
  \vec{\phi}_h^{n-1}) \right) \\
& = - \left( \frac{1}{\eta} \bA^{-T}\pd{F}{\vec{c}}(\vec{\phi}_h^{n-1}), 
    \vec{\phi}_h^n - \vec{\phi}_h^{n-1}\right) -
\left( \frac{\gamma}{k} \tilde{\bLambda}(\vec{\phi}_h^n -
      \vec{\phi}_h^{n-1}), \vec{\phi}_h^n - \vec{\phi}_h^{n-1} \right)
\\
& = - \left( \frac{1}{\eta}\pd{F}{\vec{c}}(\vec{\phi}_h^{n-1}),
    \vec{c}_h^n - \vec{c}_h^{n-1} \right) -
\left(\frac{\gamma}{k} \tilde{\bLambda}_c (\vec{c}_h^n -
      \vec{c}_h^{n-1}), \vec{c}_h^n - \vec{c}_h^{n-1}\right) \\
& = -\left( \frac{1}{\eta}F(\vec{\phi}_h^n), 1 \right) + 
    \left( \frac{1}{\eta}F(\vec{\phi}_h^{n-1}), 1 \right) \\ 
& ~~~ - \left([\frac{\gamma}{k}\tilde{\bLambda}_c -
    \frac{1}{2\eta}\bH(\vec{\xi})] (\vec{c}_h^n - \vec{c}_h^{n-1}),
    \vec{c}_h^n - \vec{c}_h^{n-1} \right).
\end{aligned}
$$ 
The last equality is derived by the Taylor expansion of $F(\cdot)$
around $\vec{\phi}_h^{n-1}$: 
$$ 
F(\vec{\phi}_h^n) - F(\vec{\phi}_h^{n-1}) =
\pd{F(\vec{\phi}_h^{n-1})}{\vec{c}} \cdot (\vec{c}_h^n -
\vec{c}_h^{n-1}) + \frac{1}{2} (\vec{c}_h^n -
  \vec{c}_h^{n-1})^T \bH(\xi)(\vec{c}_h^n - \vec{c}_h^{n-1}).
$$  
Then, we have 
$$ 
E(\vec{\phi}_h^n) - E(\vec{\phi}_h^{n-1}) \leq -(\frac{\gamma
\lambda_{c,\min}}{k} - \frac{L_1}{2\eta})\|\vec{c}_h^n -
\vec{c}_h^{n-1}\|_0^2.
$$ 
This completes the proof.
\end{proof}

\subsubsection{First-order fully-implicit scheme}
The first-order fully-implicit scheme for $N$-phase Allen-Cahn
equations \eqref{equ:N-AC-weak} is:
\begin{equation} \label{equ:fully-NAC}
\left( \frac{\gamma}{k} \tilde{\bLambda}(\vec{\phi}_h^n -
\vec{\phi}_h^{n-1}), \vec{v}_h \right) + 
(\eta\tilde{\bLambda} \nabla \vec{\phi}_h^n, \nabla \vec{v}_h) +
\left(\frac{1}{\eta}\bA^{-T}\pd{F}{\vec{c}}(\vec{\phi}_h^{n}),
\vec{v}_h \right) = 0, \quad \forall \vec{v}_h \in V_h.
\end{equation}

Similar to Theorem \ref{thm:semi-NAC}, we have the following theorem
for the discrete energy-stability and convexity of the fully-implicit
scheme. 

\begin{theorem} \label{thm:fully-NAC}
For \eqref{equ:fully-NAC}, under the condition that 
$$ 
k \leq \frac{2\lambda_{c,\min}}{L_2}\gamma\eta,
$$ 
the following discrete energy-stability holds: 
$$ 
E(\vec{\phi}_h^{n}) + (\frac{\gamma \lambda_{c,\min}}{k} -
    \frac{L_2}{2\eta}) \|\vec{c}_h^n - \vec{c}_h^{n-1}\|_0^2 \leq
E(\vec{\phi}_h^{n-1}). 
$$ 
Furthermore, let 
$$ 
E_1(\vec{\phi}) = \left( \frac{\gamma}{2k} \tilde{\bLambda}(\vec{\phi}
      - \vec{\phi}_h^{n-1}), \vec{\phi} - \vec{\phi}_h^{n-1} \right) +
\int_\Omega \frac{1}{\eta} F(\vec{\phi}) ~\rd x, 
$$
such that \eqref{equ:fully-NAC} can be written as
$E_1'(\vec{\phi}_h^n)(\vec{v}_h) + G'(\vec{\phi}_h^n)(\vec{v}_h) = 0$.
Then, $E_1(\cdot) + G(\cdot)$ is convex when  
$$
k \leq \frac{\lambda_{c,\min}}{L_2}\gamma\eta.
$$ 
\end{theorem}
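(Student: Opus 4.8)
The plan is to mirror the proof of Theorem~\ref{thm:semi-NAC} almost line for line, the single structural change being that the second-order Taylor expansion of $F$ must now be centred at $\vec{\phi}_h^n$ rather than at $\vec{\phi}_h^{n-1}$; this reversal is precisely why the constant $L_2$ (and not $L_1$) shows up. Throughout I would pass freely between the $\vec{\phi}$ variables and the concentrations $\vec{c}_h^m=\bA^{-1}\vec{\phi}_h^m$ (the fixed shift $\bA^{-1}\vec{b}$ cancels in all differences), using \eqref{equ:AC-invarient2} to identify the quadratic form $(\tilde{\bLambda}\cdot,\cdot)$ on $T\Sigma$ with $(\tilde{\bLambda}_c\cdot,\cdot)$ on $T\Sigma_c$, together with $(\bA^{-T}\pd{F}{\vec{c}},\vec{v})=(\pd{F}{\vec{c}},\bA^{-1}\vec{v})$; this change-of-variables bookkeeping is the only point that requires care.

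\emph{Discrete energy stability.} First I would test \eqref{equ:fully-NAC} with $\vec{v}_h=\vec{\phi}_h^n-\vec{\phi}_h^{n-1}$ and invoke the convexity of $G$ on $T\Sigma$ (a consequence of the SPD property of $\tilde{\bLambda}$, as noted before the statement) to get $G(\vec{\phi}_h^n)-G(\vec{\phi}_h^{n-1})\le G'(\vec{\phi}_h^n)(\vec{\phi}_h^n-\vec{\phi}_h^{n-1})=(\eta\tilde{\bLambda}\nabla\vec{\phi}_h^n,\nabla(\vec{\phi}_h^n-\vec{\phi}_h^{n-1}))$. Substituting the tested equation and the change of variables, then eliminating $\pd{F(\vec{\phi}_h^n)}{\vec{c}}\cdot(\vec{c}_h^n-\vec{c}_h^{n-1})$ via the Taylor identity
$$
F(\vec{\phi}_h^{n-1})=F(\vec{\phi}_h^{n})+\pd{F(\vec{\phi}_h^n)}{\vec{c}}\cdot(\vec{c}_h^{n-1}-\vec{c}_h^n)+\tfrac12(\vec{c}_h^n-\vec{c}_h^{n-1})^{T}\bH(\vec{\xi})(\vec{c}_h^n-\vec{c}_h^{n-1}),
$$
and using $E=G+\tfrac1\eta\int_\Omega F\,\rd x$, all terms collapse into
$$
E(\vec{\phi}_h^{n})-E(\vec{\phi}_h^{n-1})\le-\Bigl(\bigl[\tfrac{\gamma}{k}\tilde{\bLambda}_c+\tfrac1{2\eta}\bH(\vec{\xi})\bigr](\vec{c}_h^n-\vec{c}_h^{n-1}),\,\vec{c}_h^n-\vec{c}_h^{n-1}\Bigr).
$$
The sign flip against Theorem~\ref{thm:semi-NAC} is visible here: the remainder $+\tfrac12\bH(\vec{\xi})$ now sits \emph{inside} the positive bracket. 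Finally I would bound $(\tilde{\bLambda}_c\vec{w},\vec{w})\ge\lambda_{c,\min}\|\vec{w}\|_0^2$ and $(\bH(\vec{\xi})\vec{w},\vec{w})\ge\lambda_{\min}(\bH(\vec{\xi}))\|\vec{w}\|_0^2\ge-L_2\|\vec{w}\|_0^2$ on $T\Sigma_c$ (recalling $\vec{c}_h^n-\vec{c}_h^{n-1}$ takes values in $T\Sigma_c$), giving the asserted inequality; the hypothesis $k\le 2\lambda_{c,\min}\gamma\eta/L_2$ is exactly what makes $\tfrac{\gamma\lambda_{c,\min}}{k}-\tfrac{L_2}{2\eta}\ge0$, so the bound is a genuine stability estimate.

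\emph{Convexity.} For the second claim I would compute the second variation of $E_1+G$ at an arbitrary $\vec{\phi}\in H^1(\Sigma)$ along an arbitrary direction $\vec{v}\in H^1(T\Sigma)$, writing $\vec{v}_c=\bA^{-1}\vec{v}\in H^1(T\Sigma_c)$. The quadratic term of $E_1$ contributes $\tfrac{\gamma}{k}(\tilde{\bLambda}\vec{v},\vec{v})=\tfrac{\gamma}{k}(\tilde{\bLambda}_c\vec{v}_c,\vec{v}_c)\ge\tfrac{\gamma\lambda_{c,\min}}{k}\|\vec{v}_c\|_0^2$; the term $\tfrac1\eta\int_\Omega F$ contributes $\tfrac1\eta\int_\Omega\vec{v}_c^{T}\bH(\vec{c})\vec{v}_c\,\rd x\ge-\tfrac{L_2}{\eta}\|\vec{v}_c\|_0^2$; and $G$ contributes a nonnegative quantity by its convexity. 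Hence the second variation is at least $\bigl(\tfrac{\gamma\lambda_{c,\min}}{k}-\tfrac{L_2}{\eta}\bigr)\|\vec{v}_c\|_0^2$, which is nonnegative precisely when $k\le\lambda_{c,\min}\gamma\eta/L_2$. The factor-of-two gap between the two thresholds is structural rather than an artefact: in the energy estimate the Taylor remainder carries the factor $\tfrac12$, whereas the Hessian of $\tfrac1\eta F$ enters the exact second variation without it. I do not expect a genuine obstacle; the only thing to watch is the change-of-variables identity \eqref{equ:AC-invarient2} and the correct sign of the Taylor remainder, and otherwise the computation is that of Theorem~\ref{thm:semi-NAC}.
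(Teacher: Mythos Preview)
Your proposal is correct and follows essentially the same approach as the paper: for the energy-stability part your argument is line-for-line identical (test with $\vec{\phi}_h^n-\vec{\phi}_h^{n-1}$, use convexity of $G$, Taylor-expand $F$ about $\vec{\phi}_h^n$, then bound via $\lambda_{c,\min}$ and $L_2$). For the convexity part the paper establishes the first-order tangent-line characterization $E_1(\vec{\phi})-E_1(\vec{\phi}_h^{n-1})-E_1'(\vec{\phi})(\vec{\phi}-\vec{\phi}_h^{n-1})\le 0$ via the same Taylor expansion, whereas you compute the second variation directly; the two are equivalent and lead to the identical threshold $k\le\lambda_{c,\min}\gamma\eta/L_2$.
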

\begin{proof}
We take the Taylor expansion around $\vec{\phi}_h^{n}$ instead to obtain 
$$ 
F(\vec{\phi}_h^n) - F(\vec{\phi}_h^{n-1}) =
\pd{F(\vec{\phi}_h^{n})}{\vec{c}} \cdot (\vec{c}_h^n -
\vec{c}_h^{n-1}) - \frac{1}{2}  
(\vec{c}_h^n - \vec{c}_h^{n-1})^T \bH(\vec{\xi})(\vec{c}_h^n -
\vec{c}_h^{n-1}).
$$  
In light of the convexity of $G(\cdot)$ again, we have 
$$ 
\begin{aligned}
G(\vec{\phi}_h^n) - G(\vec{\phi}_h^{n-1}) & \leq
G'(\vec{\phi}_h^n)(\vec{\phi}_h^n - \vec{\phi}_h^{n-1}) \\
& = -\left( \frac{1}{\eta}F(\vec{\phi}_h^n), 1\right) + 
\left(\frac{1}{\eta}F(\vec{\phi}_h^{n-1}), 1\right) \\ 
& ~~~ - \left([\frac{\gamma}{k}\tilde{\bLambda}_c + \frac{1}{2\eta}
    \bH(\vec{\xi})] (\vec{c}_h^n - \vec{c}_h^{n-1}), \vec{c}_h^n -
    \vec{c}_h^{n-1}
  \right).
\end{aligned}
$$ 
Thus,
$$ 
E(\vec{\phi}_h^n) - E(\vec{\phi}_h^{n-1}) \leq
-(\frac{\gamma\lambda_{c,\min}}{k}-\frac{L_2}{2\eta}) \|\vec{c}_h^n -
\vec{c}_h^{n-1}\|_0^2.
$$ 

When $k\leq \frac{\lambda_{c,\min}}{L_2}\gamma\eta$ and the Taylor
expansion is applied again, we have 
$$
E_1(\vec{\phi}) - E_1(\vec{\phi}_h^{n-1}) -
E_1'(\vec{\phi})(\vec{\phi} - \vec{\phi}_h^{n-1}) = -
\left([\frac{\gamma}{2k}\tilde{\bLambda}_c + \frac{1}{2\eta}
\bH(\vec{\xi})] \bA^{-1}(\vec{\phi} - \vec{\phi}_h^{n-1}),
\bA^{-1}(\vec{\phi} - \vec{\phi}_h^{n-1}) \right) \leq 0,
$$ 
which means that $E_1(\cdot)$ is convex. Hence, $\vec{\phi}_h^n$ in
\eqref{equ:fully-NAC} is the local minimizer of the convex functional
$E_1(\cdot) + G(\cdot)$ on $V_h$.
\end{proof}


\subsubsection{Modified Crank-Nicolson scheme}
Now we will try to extend the modified Crank-Nicolson scheme
\cite{du1991numerical,condette2011spectral} to the $N$-phase
Allen-Cahn equations. Define the finite difference of $f$ as  
\begin{equation} \label{equ:difference}
f[c, c^*] := 
\begin{cases}
\frac{f(c) - f(c^*)}{c - c^*}, & c \neq c^*, \\
f'(c), & c = c^*.
\end{cases}
\end{equation}
For any set $\bi = \{i_1, i_2, \cdots, i_k\}$ and monomial
$q_{\bi}(\vec{c}) = c_{i_1}c_{i_2} \cdots c_{i_k}$, we define the
finite difference of $q_{\bi}$ as 
\begin{equation} \label{equ:difference-q}
q_{\bi}[\vec{c},\vec{c}^*] = \frac{1}{k!}\sum_{l=1}^k
\left[ \sum_{\bj \subset \bi -\{i_l\}} |\bj|!(k-|\bj|-1)!
q_{\bj}(\vec{c}) q_{\bi-\bj-\{i_l\}}(\vec{c}^*) \right] \vec{e}_{i_l}, 
\end{equation}
where we denote $q_{\varnothing} = 1$.  Then, we have the following
crucial lemma.
\begin{lemma} \label{lem:FD-q} 
It holds that 
\begin{equation}\label{equ:FD-q} 
q_{\bi}(\vec{c}) - q_{\bi}(\vec{c}^*) = q_{\bi}[\vec{c}, \vec{c}^*]
\cdot (\vec{c} - \vec{c}^*).
\end{equation} 
\end{lemma}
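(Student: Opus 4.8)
The plan is to obtain \eqref{equ:FD-q} from the elementary telescoping identity for products, symmetrized over all orderings of the index set $\bi$. Write $k=|\bi|$ and, for a bijection $\pi\colon\{1,\dots,k\}\to\bi$, recall that
\begin{equation*}
\prod_{p=1}^{k} c_{\pi(p)} - \prod_{p=1}^{k} c^{*}_{\pi(p)}
= \sum_{p=1}^{k}\Big(\prod_{q<p}c_{\pi(q)}\Big)\big(c_{\pi(p)}-c^{*}_{\pi(p)}\big)\Big(\prod_{q>p}c^{*}_{\pi(q)}\Big),
\end{equation*}
which follows by writing the $p$-th summand as the difference of two consecutive partial products. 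The left-hand side is exactly $q_{\bi}(\vec{c})-q_{\bi}(\vec{c}^{*})$ and does not depend on $\pi$.

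Next I would average this identity over all $k!$ bijections $\pi$. On the right-hand side I group the summands by the element $a=\pi(p)\in\bi$ that carries the factor $(c_{a}-c^{*}_{a})$ and by the prefix set $\bj:=\pi(\{1,\dots,p-1\})\subseteq\bi-\{a\}$; then the prefix product is $q_{\bj}(\vec{c})$, the suffix product is $q_{\bi-\bj-\{a\}}(\vec{c}^{*})$, and $|\bj|=p-1$. For a fixed $a$ and a fixed $\bj\subseteq\bi-\{a\}$, the number of bijections $\pi$ producing this configuration is $|\bj|!\,(k-|\bj|-1)!$: there are $|\bj|!$ orderings of $\bj$ on the first $|\bj|$ positions, then $a$ is forced into the next position, and the remaining $k-|\bj|-1$ elements may be ordered arbitrarily. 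Collecting terms, the average of the right-hand side becomes
\begin{equation*}
\sum_{a\in\bi}\bigg(\frac{1}{k!}\sum_{\bj\subseteq\bi-\{a\}}|\bj|!\,(k-|\bj|-1)!\,q_{\bj}(\vec{c})\,q_{\bi-\bj-\{a\}}(\vec{c}^{*})\bigg)\big(c_{a}-c^{*}_{a}\big).
\end{equation*}

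Finally I would match this with \eqref{equ:difference-q}: the parenthesized coefficient of $(c_{a}-c^{*}_{a})$ is precisely the $a$-th component of $q_{\bi}[\vec{c},\vec{c}^{*}]$, and since each term of $q_{\bi}[\vec{c},\vec{c}^{*}]$ is a multiple of some $\vec{e}_{i_{l}}$ with $i_{l}\in\bi$, all components outside $\bi$ vanish and do not contribute to the inner product with $\vec{c}-\vec{c}^{*}$. Hence the displayed sum equals $q_{\bi}[\vec{c},\vec{c}^{*}]\cdot(\vec{c}-\vec{c}^{*})$, which gives \eqref{equ:FD-q}.

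The argument is essentially bookkeeping, so the one point requiring care is the multiplicity count in the averaging step --- checking that each pair $(a,\bj)$ is produced by exactly $|\bj|!\,(k-|\bj|-1)!$ bijections and that these configurations partition the set of all pairs $(\pi,p)$. A more pedestrian alternative would be induction on $k$, peeling off one factor and rebalancing the factorial weights, but I expect the telescoping-and-averaging route to be cleaner and to present no obstacle beyond this combinatorial check.
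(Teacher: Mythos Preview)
Your argument is correct and takes a genuinely different route from the paper. The paper proceeds by induction on $k=|\bi|$: for each $s$ it writes
\[
q_{\bi}(\vec{c})-q_{\bi}(\vec{c}^{*})
= q_{\bi-\{i_s\}}(\vec{c})\,(c_{i_s}-c^{*}_{i_s})
+ c^{*}_{i_s}\bigl[q_{\bi-\{i_s\}}(\vec{c})-q_{\bi-\{i_s\}}(\vec{c}^{*})\bigr],
\]
averages over $s=1,\dots,k$, applies the inductive hypothesis to each $q_{\bi-\{i_s\}}$, and then regroups the resulting double sum to recover the factorial weights in \eqref{equ:difference-q}. In other words, the paper's proof is exactly the ``pedestrian alternative'' you mention at the end. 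Your telescoping-then-averaging argument is more direct and makes the combinatorics transparent: the coefficient $|\bj|!\,(k-|\bj|-1)!/k!$ arises immediately as the fraction of orderings of $\bi$ whose prefix set before the pivot $a$ is $\bj$, so no inductive rebalancing of factorials is needed. Both approaches are short; yours has the advantage of explaining \emph{why} the definition \eqref{equ:difference-q} has the form it does, while the paper's induction is slightly more mechanical but requires tracking the $|\bj|\mapsto|\bj|+1$ shift when absorbing the extra factor $c^{*}_{i_s}$.
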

\begin{proof}
We will prove it by induction. It is straightforward that
$q_{\bi}[\vec{c},\vec{c}^*] = \vec{e}_{i_1}$ when $k = 1$. Assume
\eqref{equ:FD-q} holds for $|\bi| = k-1$. From the fact that for any
$\bi = \{i_1, i_2, \cdots, i_k\}$,  
$$ 
q_{\bi}(\vec{c}) - q_{\bi}(\vec{c}^*) =
q_{\bi-\{i_s\}}(\vec{c})(c_{i_s} - c_{i_s}^*) + c_{i_s}^*
\left[q_{\bi-\{i_s\}}(\vec{c}) - q_{\bi-\{i_s\}}(\vec{c}^*) \right],
  \quad \forall 1\leq s \leq k,
$$ 
we have 
$$
\begin{aligned}
q_{\bi}(\vec{c}) - q_{\bi}(\vec{c}^*) 
&= \frac{1}{k} \Big\{ \sum_{s=1}^k
q_{\bi-\{i_s\}}(\vec{c})(c_{i_s} - c_{i_s}^*) + c_{i_s}^*
\left[q_{\bi-\{i_s\}}(\vec{c}) - q_{\bi-\{i_s\}}(\vec{c}^*) \right]
\Big\}
 \\
&= \frac{1}{k} \left[ \sum_{s=1}^k
q_{\bi-\{i_s\}}(\vec{c})\vec{e}_{i_s} \right] \cdot (\vec{c} -
    \vec{c}^*)\\ 
& +
\frac{1}{k!} \sum_{s=1}^k \sum_{l=1, l\neq s}^k c_{i_s}^* \left[ 
\sum_{\bj \subset \bi - \{i_l, i_s\}} |\bj|!(k-|\bj|-2)!
q_{\bj}(\vec{c}) q_{\bi-\bj-\{i_l, i_s\}}(\vec{c}^*) \right]
\vec{e}_{i_l} \cdot (\vec{c} - \vec{c}^*)
\end{aligned}
$$ 
Notice that 
$$ 
\begin{aligned}
& \frac{1}{k!} \sum_{s=1}^k \sum_{l=1, l\neq s}^k c_{i_s}^* \left[ 
\sum_{\bj \subset \bi - \{i_l, i_s\}} |\bj|!(k-|\bj|-2)!
q_{\bj}(\vec{c}) q_{\bi-\bj-\{i_l, i_s\}}(\vec{c}^*) \right]
\vec{e}_{i_l} \\
=~& \frac{1}{k!} \sum_{l=1}^k \sum_{s=1, s\neq l}^k \left[ 
\sum_{\bj \subset \bi - \{i_l, i_s\}} |\bj|!(k-|\bj|-2)!
q_{\bj}(\vec{c}) q_{\bi-\bj-\{i_l\}}(\vec{c}^*) \right]
\vec{e}_{i_l} \\
=~& \frac{1}{k!} \sum_{l=1}^k \left[ 
\sum_{\bj \subset \bi - \{i_l\}, \bj \neq \bi-\{i_l\}} |\bj|!(k-|\bj|-1)!
q_{\bj}(\vec{c}) q_{\bi-\bj-\{i_l\}}(\vec{c}^*) \right]
\vec{e}_{i_l}. \\
\end{aligned}
$$ 
Therefore, we have \eqref{equ:FD-q} when $|\bi|=k$. This completes the
proof.
\end{proof}

In light of the above lemma, we define the finite difference of the
nonlinear potential $F$ as follows. For the homogeneous case
\eqref{equ:homo-F}, 
\begin{equation} \label{equ:homo-FD-F}
F^{\sigma}[\vec{c}, \vec{c}^*] := 2\sigma \begin{pmatrix} 
f[c_1, c_1^*] \\
f[c_2, c_2^*] \\
\vdots \\ 
f[c_N, c_N^*]
\end{pmatrix} + 8\sigma \sum_{i_1<i_2<i_3<i_4}
q_{\{i_1,i_2,i_3,i_4\}}[\vec{c}, \vec{c}^*]. 
\end{equation} 
For the inhomogeneous case \eqref{equ:nonlinear-potential},
\begin{equation} \label{equ:inhomo-FD-F} 
F^{\sigma_{ij}}[\vec{c}, \vec{c}^*] := 
\begin{pmatrix}
\sum_{j=1}^N \sigma_{1j} (f[c_1,c_1^*] - f[c_1+c_j, c_1^*+c_j^*]) \\
\sum_{j=1}^N \sigma_{2j} (f[c_2,c_2^*] - f[c_2+c_j, c_2^*+c_j^*]) \\
\vdots \\
\sum_{j=1}^N \sigma_{Nj} (f[c_N,c_N^*] - f[c_N+c_j, c_N^*+c_j^*]) \\
\end{pmatrix} + s \sum_{i,j=1}^N \sum_{k \neq i, j}
\sigma_{ij}(c_ic_jc_k + c_i^*c_j^*c_k^*)
  q_{\{i,j,k\}}[\vec{c},\vec{c}^*].
\end{equation} 
Let $F = F^{\sigma}$ or $F^{\sigma{ij}}$.  Then, a routine calculation
shows that 
$$ 
F(c) - F(\vec{c}^*) = F[\vec{c}, \vec{c}^*] \cdot (\vec{c} -
    \vec{c}^*).
$$ 
We, therefore, obtain the following modified Crank-Nicolson scheme:
\begin{equation} \label{equ:Crank-Nicolson-NAC}
(\frac{\gamma}{k}\tilde{\bLambda}(\vec{\phi}_h^n -
\vec{\phi}_h^{n-1}), \vec{v}_h) + 
(\eta \tilde{\bLambda} \nabla \frac{\vec{\phi}_h^n +
 \vec{\phi}_h^{n-1}}{2}, \nabla \vec{v}_h) +
 \left(\frac{1}{\eta}\bA^{-T}F[\vec{c}_h^n, \vec{c}_h^{n-1}],
\vec{v}_h \right) = 0, \quad \forall \vec{v}_h \in V_h.
\end{equation}
Taking $\vec{v}_h = \vec{\phi}_h^{n} - \vec{\phi}_h^{n-1}$, we
immediately obtain the following result:

\begin{theorem} \label{thm:Crank-Nicolson-NAC}
Scheme \eqref{equ:Crank-Nicolson-NAC} is unconditionally
energy-stable, and 
$$ 
E(\vec{\phi}_h^{n}) + \frac{\gamma}{k}\|\vec{c}_h^n -
\vec{c}_h^{n-1}\|_0^2 = E(\vec{\phi}_h^{n-1}).
$$ 
\end{theorem}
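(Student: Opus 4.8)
The plan is to test the scheme with the time increment and then read off the energy balance term by term, exploiting that every piece of \eqref{equ:Crank-Nicolson-NAC} has been built as a centered (secant) quotient so that everything telescopes. First I would take $\vec{v}_h = \vec{\phi}_h^n - \vec{\phi}_h^{n-1}$, which is admissible since consecutive iterates differ by an element of $V_h$. I expect the three terms of the scheme to reproduce, respectively, the numerical dissipation, the increment of the capillary energy $\int_\Omega \tfrac{\eta}{2}(\bLambda\nabla\vec{\phi}):\nabla\vec{\phi}$, and the increment of the bulk energy $\int_\Omega \tfrac{1}{\eta}F(\vec{c})$ between $t^{n-1}$ and $t^n$.

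For the capillary term I would use the symmetry of $\tilde{\bLambda}$ and the elementary identity $(\vec{a}+\vec{b})\cdot(\vec{a}-\vec{b}) = |\vec{a}|^2 - |\vec{b}|^2$ to get
$$
\left(\eta\tilde{\bLambda}\nabla\tfrac{\vec{\phi}_h^n+\vec{\phi}_h^{n-1}}{2},\ \nabla(\vec{\phi}_h^n-\vec{\phi}_h^{n-1})\right)
= \tfrac{\eta}{2}\big[(\tilde{\bLambda}\nabla\vec{\phi}_h^n,\nabla\vec{\phi}_h^n) - (\tilde{\bLambda}\nabla\vec{\phi}_h^{n-1},\nabla\vec{\phi}_h^{n-1})\big];
$$
since each $\nabla\vec{\phi}_h^j$ is componentwise tangent to $\Sigma$ (so $\bP$ fixes it), $(\tilde{\bLambda}\nabla\vec{\phi}_h^j,\nabla\vec{\phi}_h^j) = (\bLambda\nabla\vec{\phi}_h^j,\nabla\vec{\phi}_h^j)$, and the difference above is exactly the capillary part of $E(\vec{\phi}_h^n)-E(\vec{\phi}_h^{n-1})$. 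For the nonlinear term I would move $\bA^{-T}$ onto the test function, so that $(\tfrac{1}{\eta}\bA^{-T}F[\vec{c}_h^n,\vec{c}_h^{n-1}],\vec{\phi}_h^n-\vec{\phi}_h^{n-1}) = (\tfrac{1}{\eta}F[\vec{c}_h^n,\vec{c}_h^{n-1}],\vec{c}_h^n-\vec{c}_h^{n-1})$ because $\vec{c}_h^n-\vec{c}_h^{n-1}=\bA^{-1}(\vec{\phi}_h^n-\vec{\phi}_h^{n-1})$, and then invoke the secant identity $F(\vec{c})-F(\vec{c}^*) = F[\vec{c},\vec{c}^*]\cdot(\vec{c}-\vec{c}^*)$, which follows from Lemma \ref{lem:FD-q} applied monomial by monomial in \eqref{equ:homo-FD-F}/\eqref{equ:inhomo-FD-F}; this recovers the bulk part $\int_\Omega\tfrac{1}{\eta}(F(\vec{c}_h^n)-F(\vec{c}_h^{n-1}))$.

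For the time-derivative term I would use $\tilde{\bLambda}=\bP\bLambda\bP$ together with the change-of-variables identity \eqref{equ:AC-invarient2} and $\bA(\vec{c}_h^n-\vec{c}_h^{n-1})=\vec{\phi}_h^n-\vec{\phi}_h^{n-1}$ (and $\bP_c$ fixing the zero-sum increment $\vec{c}_h^n-\vec{c}_h^{n-1}$) to rewrite $\tfrac{\gamma}{k}(\tilde{\bLambda}(\vec{\phi}_h^n-\vec{\phi}_h^{n-1}),\vec{\phi}_h^n-\vec{\phi}_h^{n-1})$ as $\tfrac{\gamma}{k}(\tilde{\bLambda}_c(\vec{c}_h^n-\vec{c}_h^{n-1}),\vec{c}_h^n-\vec{c}_h^{n-1})$, which is non-negative since $\tilde{\bLambda}_c$ is SPD on $T\Sigma_c$; this is the dissipation term appearing in the claimed identity. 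Summing the three identities gives the stated energy balance, and since the dissipation term never has to be absorbed against a Hessian bound $L_1$ or $L_2$, no restriction on $k$ arises — hence unconditional stability. I do not foresee a real obstacle here: the scheme was designed precisely so that each term telescopes, and the only non-elementary input, the exact secant identity for the polynomial potential $F$, is already provided by Lemma \ref{lem:FD-q}. The one point demanding care is the bookkeeping between the $\vec{\phi}$- and $\vec{c}$-variables via $\bP$, $\bP_c$ and $\bA$ — namely that $\bLambda$ may be replaced by $\tilde{\bLambda}$ on gradients already tangent to $\Sigma$, and that $\bA$ maps $T\Sigma_c$ onto $T\Sigma$ so that increments in the two variable sets correspond exactly.
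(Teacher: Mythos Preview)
Your approach is exactly the paper's: test \eqref{equ:Crank-Nicolson-NAC} with $\vec{v}_h=\vec{\phi}_h^n-\vec{\phi}_h^{n-1}$ and read off the telescoping identities, the paper simply asserting this in one line while you spell out each term. One small caveat: the dissipation you (correctly) obtain is $\tfrac{\gamma}{k}\big(\tilde{\bLambda}_c(\vec{c}_h^n-\vec{c}_h^{n-1}),\,\vec{c}_h^n-\vec{c}_h^{n-1}\big)$, not the unweighted $\tfrac{\gamma}{k}\|\vec{c}_h^n-\vec{c}_h^{n-1}\|_0^2$ printed in the statement---the latter appears to be a typographical slip in the paper (compare the $\tilde{\bLambda}_c$-weighted dissipation in the proofs of Theorems~\ref{thm:semi-NAC} and~\ref{thm:fully-NAC}), so your computation is the right one and the unconditional stability conclusion is unaffected.
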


This theorem satisfies the unconditionally energy-stability of the
modified Crank-Nicolson scheme. However, it is necessary to solve a
nonlinear system, the existence and uniqueness for which can only be
numerically validated under a condition $k \leq C\eta^2$ for a certain
constant $C>0$. We refer to \cite{condette2011spectral} for proof of
the two-phase case and the numerical tests for the $N$-phase case in
Section \ref{sec:tests}.

\subsection{Numerical schemes for $N$-phase Cahn-Hilliard equations}

We will discuss the numerical schemes for $N$-phase Cahn-Hilliard
equations. In this subsection, we denote $V_h, Q_h$ as the finite
element subspace of $H_1(T\Sigma)$. By virtue of \eqref{equ:Mob-CH},
the strong form of the $N$-phase Cahn-Hilliard equations
\eqref{equ:N-CH-PDE} turn out to be  
\begin{equation} \label{equ:N-CH-PDE-2}
\left\{
\begin{aligned}
\bP\bA^{-T}\bA^{-1} \pd{\vec{\phi}}{t} &= \nabla \cdot [
M_0(\bA^{-1}\bP)^T\tilde{\bLambda}_c^{\dagger}
    \bA^{-1}\bP \nabla \vec{w}], \quad
\text{in}~\Omega\times(0,T], \\
\bP\bA^{-T}\bA^{-1}\vec{w} &= - \nabla \cdot
(\eta \tilde{\bLambda}\nabla \vec{\phi}) +
\frac{1}{\eta}\bP\bA^{-1}\pd{F}{\vec{c}}, \quad
\text{in}~\Omega \times(0,T], \\
(\nabla\vec{\phi})\nu = (\nabla\vec{w})\nu &= 0, \quad
\text{on}~\partial\Omega\times(0,T].
\end{aligned}
\right.
\end{equation}

\subsubsection{First-order semi-implicit scheme}

To make the scheme energy stable, we give the following first-order
semi-implicit scheme for $N$-phase Cahn-Hilliard equations:
\begin{equation} \label{equ:semi-NCH}
\left\{
\begin{aligned}
(\bA^{-1}(\vec{\phi}_h^n - \vec{\phi}_h^{n-1}), \bA^{-1}\vec{q}_h) +
k(M_0 \tilde{\bLambda}_c^{\dagger} \bA^{-1}\nabla \vec{w}_h^n,
  \bA^{-1}\nabla \vec{q}_h) &= 0, \quad \forall \vec{q}_h \in Q_h, \\
 -(\bA^{-1}\vec{w}_h^n,\bA^{-1}\vec{v}_h) +
(\eta \tilde{\bLambda} \nabla\vec{\phi}_h^n, \nabla \vec{v}_h) +
\left(\frac{1}{\eta}\bA^{-T}
\pd{F(\vec{\phi}_h^{n-1})}{\vec{c}}, \vec{v}_h\right) &= 0, \quad
\forall \vec{v}_h \in V_h.
\end{aligned}
\right.
\end{equation}

\begin{theorem} \label{thm:semi-NCH}
For \eqref{equ:semi-NCH}, if $Q_h \subset V_h$, then energy-stability
holds when 
\begin{equation} \label{equ:semi-NCH-stab2}
k \leq \frac{8\lambda_{c,\min}^2}{M_0 L_1^2}\eta^3,
\end{equation}
where $\lambda_{c,\min}$ is the minimal eigenvalue of
$\tilde{\bLambda}_c$ on $T\Sigma_c$. 
\end{theorem}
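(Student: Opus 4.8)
The plan is to follow the template of Theorem~\ref{thm:semi-NAC} and of the classical two-phase semi-implicit Cahn--Hilliard analysis: test the two equations of \eqref{equ:semi-NCH} against carefully chosen discrete functions, add them so that the mixed inner product cancels, convert the nonlinear contribution by a Taylor expansion, and finally absorb the resulting bad term into the mobility dissipation using the time-step restriction. Throughout write $\delta\vec\phi_h := \vec\phi_h^n - \vec\phi_h^{n-1}$, $\delta\vec c_h := \bA^{-1}\delta\vec\phi_h = \vec c_h^n - \vec c_h^{n-1}$, and $\vec w_{c,h}^n := \bA^{-1}\vec w_h^n$; since $\delta\vec\phi_h$ and $\vec w_h^n$ are $T\Sigma$-valued, $\delta\vec c_h$ and $\vec w_{c,h}^n$ (and their spatial gradients) are $T\Sigma_c$-valued, so $\tilde\bLambda_c$ and $\tilde\bLambda_c^{\dagger}$ act on them as SPD operators.

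First I would take $\vec q_h = \vec w_h^n$ in the first equation of \eqref{equ:semi-NCH}, which yields
$$
(\bA^{-1}\delta\vec\phi_h,\vec w_{c,h}^n)
= -kM_0\big(\tilde\bLambda_c^{\dagger}\nabla\vec w_{c,h}^n,\nabla\vec w_{c,h}^n\big) =: -kM_0\,Q_w,\qquad Q_w\ge 0 .
$$
Next I would take $\vec v_h=\delta\vec\phi_h$ in the second equation --- this substitution is legitimate precisely because $Q_h\subset V_h$ --- and combine it with the identity above to eliminate the term $(\vec w_{c,h}^n,\delta\vec c_h)$, obtaining
$$
\big(\eta\tilde\bLambda\nabla\vec\phi_h^n,\nabla\delta\vec\phi_h\big)
+ \tfrac1\eta\big(\tfrac{\partial F(\vec\phi_h^{n-1})}{\partial\vec c},\delta\vec c_h\big) = -kM_0\,Q_w .
$$
Using the symmetry of $\tilde\bLambda$ and the polarization identity $2(a,a-b)=|a|^2-|b|^2+|a-b|^2$ in the $\tilde\bLambda$-weighted inner product, together with $(\tilde\bLambda\nabla\delta\vec\phi_h):\nabla\delta\vec\phi_h=(\tilde\bLambda_c\nabla\delta\vec c_h):\nabla\delta\vec c_h=:Q_\phi\ge 0$ (which follows from \eqref{equ:AC-invarient2}), the first term supplies the gradient part of $E(\vec\phi_h^n)-E(\vec\phi_h^{n-1})$ plus $\tfrac\eta2 Q_\phi$; a second-order Taylor expansion of $F$ about $\vec c_h^{n-1}$ turns the nonlinear term into $\tfrac1\eta\int_\Omega\big(F(\vec c_h^n)-F(\vec c_h^{n-1})\big)$ minus $\tfrac1{2\eta}\int_\Omega\delta\vec c_h^{\,T}\bH(\vec\xi)\delta\vec c_h$, the remainder being bounded above by $\tfrac{L_1}{2\eta}\|\delta\vec c_h\|_0^2$ since the segment joining $\vec c_h^{n-1}$ and $\vec c_h^n$ lies in $\cA_{c,h}$. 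Collecting terms,
$$
E(\vec\phi_h^n)-E(\vec\phi_h^{n-1}) \;\le\; -kM_0\,Q_w \;-\; \tfrac\eta2\,Q_\phi \;+\; \tfrac{L_1}{2\eta}\|\delta\vec c_h\|_0^2 .
$$

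The crux is to absorb $\tfrac{L_1}{2\eta}\|\delta\vec c_h\|_0^2$ into the two nonpositive terms, and this is where the restriction on $k$ is produced. Here I would test the first equation of \eqref{equ:semi-NCH} a second time, with $\vec q_h=\delta\vec\phi_h$, giving $\|\delta\vec c_h\|_0^2 = -kM_0\big(\tilde\bLambda_c^{\dagger}\nabla\vec w_{c,h}^n,\nabla\delta\vec c_h\big)$. Cauchy--Schwarz in the $\tilde\bLambda_c^{\dagger}$-inner product together with the spectral bounds $\big(\tilde\bLambda_c^{\dagger}\nabla\delta\vec c_h,\nabla\delta\vec c_h\big)\le \lambda_{c,\min}^{-1}\|\nabla\delta\vec c_h\|_0^2\le \lambda_{c,\min}^{-2}\,Q_\phi$ then gives $\|\delta\vec c_h\|_0^2\le kM_0\,\lambda_{c,\min}^{-1}Q_w^{1/2}Q_\phi^{1/2}$. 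A Young inequality with a free parameter $a>0$ bounds $\tfrac{L_1}{2\eta}\|\delta\vec c_h\|_0^2$ by $kM_0\,Q_w+\tfrac\eta2 Q_\phi$ as soon as $\tfrac{L_1 a}{4\eta\lambda_{c,\min}}\le 1$ and $\tfrac{L_1 kM_0}{4\eta\lambda_{c,\min}a}\le\tfrac\eta2$, and such an $a$ exists exactly when $k\le \tfrac{8\lambda_{c,\min}^2}{M_0 L_1^2}\eta^3$, i.e.\ under \eqref{equ:semi-NCH-stab2}. For such $k$ the right-hand side of the collected inequality is nonpositive, so $E(\vec\phi_h^n)\le E(\vec\phi_h^{n-1})$.

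I expect the main obstacle to be the bookkeeping rather than any single hard estimate: one must consistently carry the metric change between $T\Sigma$ and $T\Sigma_c$ through $\bA$ so that the mobility $M_0\tilde\bLambda_c^{\dagger}$, the stiffness $\tilde\bLambda$, and the energy are all expressed in the same ($\vec c_h$) variables via \eqref{equ:AC-invarient2}, and one must keep the constants sharp through the chained Cauchy--Schwarz/Young steps so that the feasibility interval for $a$ reproduces exactly the stated bound on $k$. The secondary point to handle carefully is the legitimacy of the three test-function substitutions $\vec q_h=\vec w_h^n$, $\vec q_h=\delta\vec\phi_h$, $\vec v_h=\delta\vec\phi_h$, which is precisely the role of the hypothesis $Q_h\subset V_h$.
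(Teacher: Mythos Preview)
Your proposal is correct and follows essentially the same route as the paper: the same test-function choices $\vec q_h=\vec w_h^n$, $\vec v_h=\delta\vec\phi_h$, and then $\vec q_h=\delta\vec\phi_h$; the same polarization identity and Taylor expansion producing the energy inequality with the residual $\tfrac{L_1}{2\eta}\|\delta\vec c_h\|_0^2$; and the same use of the identity $\|\delta\vec c_h\|_0^2=-kM_0(\tilde\bLambda_c^{\dagger}\nabla\vec w_{c,h}^n,\nabla\delta\vec c_h)$ to absorb it. The only cosmetic difference is in the last estimate: the paper rewrites $kM_0Q_w+\tfrac\eta2 Q_\phi$ as two $\tilde\bLambda_c$-weighted squares and applies AM--GM in one shot to get the lower bound $\lambda_{c,\min}\sqrt{2\eta/(M_0k)}\,\|\delta\vec c_h\|_0^2$, whereas you reach the same threshold via Cauchy--Schwarz in the $\tilde\bLambda_c^{\dagger}$-inner product followed by Young's inequality with a free parameter.

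One small bookkeeping correction: the hypothesis $Q_h\subset V_h$ is not what legitimizes $\vec v_h=\delta\vec\phi_h$ (that holds because $\delta\vec\phi_h\in V_h$ already); in the paper it is invoked at the step $\vec q_h=\delta\vec\phi_h$, so you should attribute the space-compatibility assumption to that substitution instead.
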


\begin{proof}
Taking $\vec{q}_h = \vec{w}_h$ and $\vec{v}_h = \vec{\phi}_h^n -
\vec{\phi}_h^{n-1}$ in \eqref{equ:semi-NCH}, we obtain 
$$ 
k(M_0 \tilde{\bLambda}_c^{\dagger} \bA^{-1}\nabla \vec{w}_h^n,
\bA^{-1} \nabla \vec{w}_h^n)  + (\eta \tilde{\bLambda}\nabla
\vec{\phi}_h^n, \nabla(\vec{\phi}_h^n - \vec{\phi}_h^{n-1})) +
  \left(\frac{1}{\eta} \pd{F(\vec{\phi}_h^{n-1})}{\vec{c}},
      \vec{c}_h^n - \vec{c}_h^{n-1} \right) = 0.
$$ 
With the help of the Taylor expansion around $\vec{\phi}_h^{n-1}$ and
the fact that 
$$ 
(\tilde{\bLambda} \nabla \vec{\phi}_h^n, \nabla (\vec{\phi}_h^n -
\vec{\phi}_h^{n-1})) = \frac{1}{2}(\tilde{\bLambda} \nabla \vec{\phi}_h^n,
\nabla \vec{\phi}_h^n) - \frac{1}{2}(\tilde{\bLambda} \nabla
\vec{\phi}_h^{n-1}, \nabla \vec{\phi}_h^{n-1}) +
\frac{1}{2}(\tilde{\bLambda} \nabla (\vec{\phi}_h^n -
  \vec{\phi}_h^{n-1}), \nabla (\vec{\phi}_h^n - \vec{\phi}_h^{n-1})), 
$$ 
we have 
\begin{equation} \label{equ:semi-NCH-bound} 
\begin{aligned}
E(\vec{\phi}_h^n) - E(\vec{\phi}_h^{n-1}) =& \left( 
\frac{1}{2\eta}\bH(\vec{\xi})(\vec{c}_h^n -
\vec{c}_h^{n-1}), \vec{c}_h^n - \vec{c}_h^{n-1}
\right) \\
& -k(M_0 \tilde{\bLambda}_c^{\dagger} \bA^{-1} \nabla\vec{w}_h^n,
    \bA^{-1}\nabla \vec{w}_h^n) 
-\left(\frac{\eta}{2}\tilde{\bLambda} \nabla (\vec{\phi}_h^n - \vec{\phi}_h^{n-1}),
      \nabla (\vec{\phi}_h^n - \vec{\phi}_h^{n-1}) \right) \\
\leq & ~\frac{L_1}{2\eta}\|\vec{c}_h^n -
  \vec{c}_h^{n-1}\|_0^2 \\ 
& -k\left(M_0 \tilde{\bLambda}_c^{\dagger}  \nabla \vec{w}_{c,h}^n, \nabla
    \vec{w}_{c,h}^n \right)
  -\left( \frac{\eta}{2}\tilde{\bLambda}_c \nabla (\vec{c}_h^n - \vec{c}_h^{n-1}),
      \nabla (\vec{c}_h^n - \vec{c}_h^{n-1}) \right),
\end{aligned}
\end{equation}
where $\vec{w}_{c,h} = \bA^{-1} \vec{w}_h \in H^1(T\Sigma_c)$.  If
$Q_h \subset V_h$, then $\vec{q}_h$ can be taken as $\vec{\phi}_h^n -
\vec{\phi}_h^{n-1}$ in order to obtain 
$$ 
(M_0 \tilde{\bLambda}_c^{\dagger} \nabla \vec{w}_{c,h}^n, \nabla (\vec{c}_h^n -
\vec{c}_h^{n-1})) = -\frac{1}{k} \|\vec{c}_h^n -
\vec{c}_h^{n-1}\|_0^2.
$$ 
Then, 
$$ 
\begin{aligned}
& k \left( M_0 \tilde{\bLambda}_c^{\dagger} \nabla \vec{w}_{c,h}^n, \nabla
    \vec{w}_{c,h}^n \right) +
\left( \frac{\eta}{2}\tilde{\bLambda}_c \nabla (\vec{c}_h^n - \vec{c}_h^{n-1}),
    \nabla (\vec{c}_h^n - \vec{c}_h^{n-1}) \right) \\ 
= & ~ \frac{k}{M_0} \left( \tilde{\bLambda}_c (M_0
      \tilde{\bLambda}_c^{\dagger} \nabla \vec{w}_{c,h}^n), (M_0
     \tilde{\bLambda}_c^{\dagger} \nabla \vec{w}_{c,h}^n) \right) +
\left( \frac{\eta}{2} \tilde{\bLambda}_c \nabla (\vec{c}_h^n - \vec{c}_h^{n-1}),
    \nabla (\vec{c}_h^n - \vec{c}_h^{n-1}) \right) \\ 
\geq & 
-2\lambda_{c,\min}\sqrt{\frac{\eta k}{2M_0}} \left( M_0
      \tilde{\bLambda}_c^{\dagger} \nabla
\vec{w}_{c,h}^n, \nabla (\vec{c}_h^n - \vec{c}_h^{n-1}) \right) 
= \lambda_{c,\min}\sqrt{\frac{2\eta}{M_0k}}  \|\vec{c}_h^n -
\vec{c}_h^{n-1}\|_0^2.
\end{aligned}
$$ 
Then, from \eqref{equ:semi-NCH-bound}, 
$$ 
E(\vec{\phi}_h^n) - E(\vec{\phi}_h^{n-1}) \leq - (
  \lambda_{c,\min}\sqrt{\frac{2\eta}{M_0k}} - \frac{L_1}{2\eta}) 
\|\vec{c}_h^n - \vec{c}_h^{n-1}\|_0^2,
$$ 
which gives rise to the energy-stability when
\eqref{equ:semi-NCH-stab2} holds.
\end{proof}

\subsubsection{Some nonlinear schemes}
By applying the similar idea of the fully-implicit scheme for $N$-phase
Allen-Cahn equations, we have the following first-order fully-implicit
scheme for $N$-phase Cahn-Hilliard equations:
\begin{equation} \label{equ:fully-NCH}
\left\{
\begin{aligned}
(\bA^{-1}(\vec{\phi}_h^n - \vec{\phi}_h^{n-1}), \bA^{-1}\vec{q}_h) +
k(M_0 \tilde{\bLambda}_c^{\dagger} \bA^{-1}\nabla \vec{w}_h^n,
  \bA^{-1}\nabla \vec{q}_h) &= 0, \quad \forall \vec{q}_h \in Q_h, \\
 -(\bA^{-1}\vec{w}_h^n,\bA^{-1}\vec{v}_h) +
(\eta \tilde{\bLambda} \nabla\vec{\phi}_h^n, \nabla \vec{v}_h) +
\left(\frac{1}{\eta}\bA^{-T}
\pd{F(\vec{\phi}_h^{n})}{\vec{c}}, \vec{v}_h\right) &= 0, \quad
\forall \vec{v}_h \in V_h.
\end{aligned}
\right.
\end{equation}
And, the following theorem can be proved by slightly modifying the
proof of Theorem \ref{thm:semi-NCH}.

\begin{theorem} \label{thm:fully-NCH}
For \eqref{equ:fully-NCH}, if $Q_h \subset V_h$, then energy-stability
holds when 
\begin{equation} \label{equ:fully-NCH-stab2}
k \leq \frac{8\lambda_{c,\min}^2}{M_0 L_2^2}\eta^3.
\end{equation}
\end{theorem}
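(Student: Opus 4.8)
The plan is to reuse the proof of Theorem~\ref{thm:semi-NCH} essentially verbatim; the only genuine change is that the nonlinear potential in the second equation of \eqref{equ:fully-NCH} is evaluated at the new level $\vec{\phi}_h^n$ instead of $\vec{\phi}_h^{n-1}$, so one must Taylor-expand $F$ around $\vec{\phi}_h^n$ rather than around $\vec{\phi}_h^{n-1}$. First I would take $\vec{q}_h = \vec{w}_h^n$ in the first equation and $\vec{v}_h = \vec{\phi}_h^n - \vec{\phi}_h^{n-1}$ in the second equation of \eqref{equ:fully-NCH}; as in Theorem~\ref{thm:semi-NCH}, eliminating the mixed term $(\bA^{-1}\vec{w}_h^n, \bA^{-1}(\vec{\phi}_h^n - \vec{\phi}_h^{n-1}))$ between the two relations gives
\[
k\,(M_0\tilde{\bLambda}_c^{\dagger}\bA^{-1}\nabla\vec{w}_h^n, \bA^{-1}\nabla\vec{w}_h^n) + (\eta\tilde{\bLambda}\nabla\vec{\phi}_h^n, \nabla(\vec{\phi}_h^n - \vec{\phi}_h^{n-1})) + \Big(\frac{1}{\eta}\pd{F(\vec{\phi}_h^{n})}{\vec{c}}, \vec{c}_h^n - \vec{c}_h^{n-1}\Big) = 0.
\]
Combining this with the same algebraic identity for $(\tilde{\bLambda}\nabla\vec{\phi}_h^n, \nabla(\vec{\phi}_h^n - \vec{\phi}_h^{n-1}))$ used there, and with the Taylor expansion
\[
F(\vec{\phi}_h^n) - F(\vec{\phi}_h^{n-1}) = \pd{F(\vec{\phi}_h^{n})}{\vec{c}}\cdot(\vec{c}_h^n - \vec{c}_h^{n-1}) - \frac{1}{2}(\vec{c}_h^n - \vec{c}_h^{n-1})^T\bH(\vec{\xi})(\vec{c}_h^n - \vec{c}_h^{n-1}),
\]
I obtain the exact analogue of \eqref{equ:semi-NCH-bound} in which the Hessian term enters with the opposite sign, i.e. with $-\frac{1}{2\eta}\bH(\vec{\xi})$ in place of $+\frac{1}{2\eta}\bH(\vec{\xi})$.

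The key adjustment is then the bound on this quadratic form: since $-\vec{z}^T\bH(\vec{\xi})\vec{z} \le -\lambda_{\min}(\bH(\vec{\xi}))\,|\vec{z}|^2 \le L_2|\vec{z}|^2$ for every $\vec{z}$ by the definition of $L_2$ in \eqref{equ:bound-hessian}, I would estimate $-\frac{1}{2\eta}(\bH(\vec{\xi})(\vec{c}_h^n - \vec{c}_h^{n-1}), \vec{c}_h^n - \vec{c}_h^{n-1}) \le \frac{L_2}{2\eta}\|\vec{c}_h^n - \vec{c}_h^{n-1}\|_0^2$, whereas in Theorem~\ref{thm:semi-NCH} the corresponding bound involved $L_1$. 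Everything else is untouched: using $Q_h\subset V_h$ I take $\vec{q}_h = \vec{\phi}_h^n - \vec{\phi}_h^{n-1}$ in the first equation of \eqref{equ:fully-NCH} to get $(M_0\tilde{\bLambda}_c^{\dagger}\nabla\vec{w}_{c,h}^n, \nabla(\vec{c}_h^n - \vec{c}_h^{n-1})) = -\frac{1}{k}\|\vec{c}_h^n - \vec{c}_h^{n-1}\|_0^2$, and the same weighted Cauchy--Schwarz/Young step on the mobility and gradient-increment terms yields the lower bound $\lambda_{c,\min}\sqrt{2\eta/(M_0 k)}\,\|\vec{c}_h^n - \vec{c}_h^{n-1}\|_0^2$. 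Collecting the estimates gives
\[
E(\vec{\phi}_h^n) - E(\vec{\phi}_h^{n-1}) \le -\Big(\lambda_{c,\min}\sqrt{\frac{2\eta}{M_0 k}} - \frac{L_2}{2\eta}\Big)\|\vec{c}_h^n - \vec{c}_h^{n-1}\|_0^2,
\]
and the right-hand side is nonpositive exactly when $\lambda_{c,\min}\sqrt{2\eta/(M_0 k)} \ge L_2/(2\eta)$, i.e. when $k \le 8\lambda_{c,\min}^2\eta^3/(M_0 L_2^2)$, which is \eqref{equ:fully-NCH-stab2}.

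I do not expect a genuine obstacle, as the structure mirrors that of Theorem~\ref{thm:semi-NCH} line for line; the only place requiring care is the sign bookkeeping in the Taylor remainder and the consequent replacement of $L_1$ by $L_2 = \max_{\vec{\xi}\in\cA_{c,h}}|\lambda_{\min}(\bH(\vec{\xi}))|$ when absorbing the negated Hessian form --- exactly the same switch that distinguishes the semi-implicit and fully-implicit schemes for the $N$-phase Allen--Cahn equations (Theorems~\ref{thm:semi-NAC} and \ref{thm:fully-NAC}). As with the fully-implicit Allen--Cahn scheme, this theorem concerns only the energy-stability of a solution to the nonlinear system \eqref{equ:fully-NCH}; its solvability is a separate matter, anticipated to hold under a restriction of the form $k \le C\eta^2$ and checked numerically in Section~\ref{sec:tests}.
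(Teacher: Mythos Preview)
Your proposal is correct and matches the paper's approach exactly: the paper itself does not spell out a proof but simply states that the result ``can be proved by slightly modifying the proof of Theorem~\ref{thm:semi-NCH},'' and the modification you describe --- Taylor-expanding $F$ around $\vec{\phi}_h^n$ so that the Hessian remainder enters with the opposite sign and is controlled by $L_2$ rather than $L_1$ --- is precisely the intended one, mirroring the passage from Theorem~\ref{thm:semi-NAC} to Theorem~\ref{thm:fully-NAC}.
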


Another naturally extended scheme for $N$-phase Cahn-Hilliard
equations is the modified Crank-Nicolson scheme:
\begin{equation} \label{equ:Crank-Nicolson-NCH}
\left\{
\begin{aligned}
(\bA^{-1}(\vec{\phi}_h^n - \vec{\phi}_h^{n-1}), \bA^{-1}\vec{q}_h) +
k(M_0 \tilde{\bLambda}_c^{\dagger} \bA^{-1}\nabla \vec{w}_h^n,
  \bA^{-1}\nabla \vec{q}_h) &= 0, \quad \forall \vec{q}_h \in Q_h, \\
 -(\bA^{-1}\vec{w}_h^n,\bA^{-1}\vec{v}_h) +
(\eta \tilde{\bLambda}
 \nabla\frac{\vec{\phi}_h^n+\vec{\phi}_h^{n-1}}{2}, \nabla \vec{v}_h) +
\left(\frac{1}{\eta}\bA^{-T}
F[\vec{c}_h^n,\vec{c}_h^{n-1}], \vec{v}_h\right) &= 0, \quad
\forall \vec{v}_h \in V_h,
\end{aligned}
\right.
\end{equation}
where $F[\cdot,\cdot]$ is the finite difference of the nonlinear
potential defined in \eqref{equ:homo-FD-F} and
\eqref{equ:inhomo-FD-F}, regarding to the homogeneous and
inhomogeneous case, respectively.  The following energy-stability can
be proved, as expected.
\begin{theorem} \label{thm:Crank-Nicolson-NCH}
Scheme \eqref{equ:Crank-Nicolson-NCH} is unconditionally
energy-stable.
\end{theorem}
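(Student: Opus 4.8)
The plan is to mimic the energy argument of Theorem~\ref{thm:Crank-Nicolson-NAC}, but carried through the mixed (phase/potential) structure of the Cahn--Hilliard system as in the proof of Theorem~\ref{thm:semi-NCH}; the point is that the modified Crank--Nicolson treatment makes the nonlinear term telescope \emph{exactly}, so no restriction on $k$ is required. First I would take $\vec{q}_h=\vec{w}_h^n$ in the first equation of \eqref{equ:Crank-Nicolson-NCH} and $\vec{v}_h=\vec{\phi}_h^n-\vec{\phi}_h^{n-1}$ in the second. Since the bilinear form $(\bA^{-1}\cdot,\bA^{-1}\cdot)$ appears in both equations, the two resulting expressions for $(\bA^{-1}(\vec{\phi}_h^n-\vec{\phi}_h^{n-1}),\bA^{-1}\vec{w}_h^n)$ may be equated, giving
\[
-k\big(M_0\tilde{\bLambda}_c^{\dagger}\bA^{-1}\nabla\vec{w}_h^n,\bA^{-1}\nabla\vec{w}_h^n\big)
=\Big(\eta\tilde{\bLambda}\nabla\tfrac{\vec{\phi}_h^n+\vec{\phi}_h^{n-1}}{2},\nabla(\vec{\phi}_h^n-\vec{\phi}_h^{n-1})\Big)
+\Big(\tfrac1\eta\bA^{-T}F[\vec{c}_h^n,\vec{c}_h^{n-1}],\vec{\phi}_h^n-\vec{\phi}_h^{n-1}\Big).
\]

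Next I would identify the right-hand side with $E(\vec{\phi}_h^n)-E(\vec{\phi}_h^{n-1})$. For the capillary term, the symmetry of $\tilde{\bLambda}$ yields the elementary identity $2\big(\tilde{\bLambda}\tfrac{a+b}{2},a-b\big)=(\tilde{\bLambda}a,a)-(\tilde{\bLambda}b,b)$ with $a=\nabla\vec{\phi}_h^n$, $b=\nabla\vec{\phi}_h^{n-1}$, so this contribution is exactly $\tfrac{\eta}{2}(\tilde{\bLambda}\nabla\vec{\phi}_h^n,\nabla\vec{\phi}_h^n)-\tfrac{\eta}{2}(\tilde{\bLambda}\nabla\vec{\phi}_h^{n-1},\nabla\vec{\phi}_h^{n-1})$; recall that $\nabla\vec{\phi}\in T\Sigma$ pointwise, so $\tilde{\bLambda}$ here coincides with $\bLambda$ and matches the definition \eqref{equ:total-energy} of $E$. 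For the nonlinear term, $\vec{\phi}_h^n-\vec{\phi}_h^{n-1}=\bA(\vec{c}_h^n-\vec{c}_h^{n-1})$ makes the factor $\bA^{-T}$ cancel, leaving $\tfrac1\eta\big(F[\vec{c}_h^n,\vec{c}_h^{n-1}],\vec{c}_h^n-\vec{c}_h^{n-1}\big)$, and by the discrete chain rule of Lemma~\ref{lem:FD-q} together with the definitions \eqref{equ:homo-FD-F}, \eqref{equ:inhomo-FD-F} this equals $\tfrac1\eta\big(F(\vec{c}_h^n)-F(\vec{c}_h^{n-1}),1\big)$. Combining, and writing $\vec{w}_{c,h}^n=\bA^{-1}\vec{w}_h^n$, one obtains the exact discrete energy law
\[
E(\vec{\phi}_h^n)+kM_0\big(\tilde{\bLambda}_c^{\dagger}\nabla\vec{w}_{c,h}^n,\nabla\vec{w}_{c,h}^n\big)=E(\vec{\phi}_h^{n-1}).
\]
Since $\tilde{\bLambda}_c^{\dagger}$ is SPD on $T\Sigma_c$ (being the inverse on the tangent space of the SPD operator $\tilde{\bLambda}_c$, cf.\ \eqref{equ:Mob-CH}), the dissipation term is nonnegative, so $E(\vec{\phi}_h^n)\le E(\vec{\phi}_h^{n-1})$ holds with no condition on $k$.

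The only genuinely nontrivial ingredient is the discrete chain rule $F(\vec{c})-F(\vec{c}^*)=F[\vec{c},\vec{c}^*]\cdot(\vec{c}-\vec{c}^*)$, which is the heart of the modified Crank--Nicolson construction and is already supplied by Lemma~\ref{lem:FD-q} and the routine verification following it; everything else is bookkeeping with the projection $\bP$ and the change of variables $\vec{\phi}\leftrightarrow\vec{c}$. I expect the main (mild) obstacle to be keeping the $\bA$, $\bA^{-T}$, and $\tilde{\bLambda}$ versus $\tilde{\bLambda}_c$ factors straight in the two coupled equations. Note that, unlike in Theorem~\ref{thm:semi-NCH}, the hypothesis $Q_h\subset V_h$ is \emph{not} needed here, precisely because the nonlinear term telescopes exactly instead of being controlled by a $k$-dependent bound — which is what makes the scheme unconditionally energy-stable.
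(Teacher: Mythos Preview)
Your proposal is correct and is exactly the argument the paper gives: take $\vec{q}_h=\vec{w}_h^n$ and $\vec{v}_h=\vec{\phi}_h^n-\vec{\phi}_h^{n-1}$, use the symmetry of $\tilde{\bLambda}$ together with the discrete chain rule $F(\vec{c})-F(\vec{c}^*)=F[\vec{c},\vec{c}^*]\cdot(\vec{c}-\vec{c}^*)$ to identify the second equation with $E(\vec{\phi}_h^n)-E(\vec{\phi}_h^{n-1})$, and conclude from the SPD property of $\tilde{\bLambda}_c^{\dagger}$ on $T\Sigma_c$. Your additional remark that the hypothesis $Q_h\subset V_h$ is not needed here is a correct and worthwhile observation that the paper does not make explicit.
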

\begin{proof}
Taking $\vec{v}_h = \vec{\phi}_h^n - \vec{\phi}_h^{n-1}$ and
$\vec{q}_h = \vec{w}_h^n$ in \eqref{equ:Crank-Nicolson-NCH}, we have 
$$
E(\vec{\phi}_h^n) - E(\vec{\phi}_h^{n-1}) = (\bA^{-1}\vec{w}_h^n,
\bA^{-1}(\vec{\phi}_h^n - \vec{\phi}_h^{n-1})) = -k(M_0
\tilde{\bLambda}_c^{\dagger} \bA^{-1}\nabla\vec{w}_h^n,
\bA^{-1}\nabla\vec{w}_h^n) \leq 0.
$$ 
This completes the proof.
\end{proof}

We note that nonlinear schemes \eqref{equ:fully-NCH} and
\eqref{equ:Crank-Nicolson-NCH} for $N$-phase Cahn-Hilliard equations
require the nonlinear solver at each time step, the convergence of which
is difficult to verify. Intuitively, one needs to balance the energy
stability of the numerical scheme and the convergence of the solver at
each time step. In the numerical tests, we will focus on the
semi-implicit scheme for $N$-phase Cahn-Hilliard equations.


\section{Numerical Results} \label{sec:tests}

In this section, we introduce a series of numerical experiments to
illustrate the characteristics of the schemes for our $N$-phase model.
With the special choice of $\bA$ in \eqref{equ:choice-A}, we know that 
$$ 
H^1(T\Sigma) = \{\vec{v} \in H^1(\R^N)~|~ v_N = 0\},
$$ 
which can be discretized by the piecewise linear Lagrangian element
for the first $N-1$ components.  Suppose that the domain is subdivided
by a shape-regular simplicial grid $\mathcal{T}_h = \{K\}$. Then, in the
numerical experiments, we apply 
$$ 
V_h = Q_h = \{\vec{v}_h \in H^1(\R^N)~|~ v_i|_K \in \mathcal{P}_1(K)
, 1\leq i \leq N-1, v_N = 0\}.
$$


\subsection{$N$-phase Allen-Cahn: Grain growth on the unit square domain}
In order to validate the numerical algorithm for $N$-phase Allen-Cahn
equations, we consider the grain growth on the unit square domain
$\Omega = [0,1]\times [0,1]$ with $N = 5$. The uniform mesh with $h =
1/256$ is used for computation. The initial condition here is a
randomly chosen superposition of $1,000$ circular grains, whose radii
range from $0.01$ to $0.04$. We set the characteristic scale of the
interfacial thickness $\eta = 0.005$ and the interfacial surface
tension $\sigma_{ij} = 1$ so that the coefficient matrix
$\tilde{\bLambda} = \cO(1)$ is computed by
\eqref{equ:coef2-equations-matrix}. The nonlinear potential is chosen
as \eqref{equ:homo-F}. The parameter is $\gamma = \eta$. 

First, we illustrate the energy-stability of different
schemes.  The time step size is chosen as $k = 2\times 10^{-5}$ for
the semi-implicit scheme \eqref{equ:semi-NAC}, the fully-implicit
scheme \eqref{equ:fully-NAC}, and the modified Crank-Nicolson scheme
\eqref{equ:Crank-Nicolson-NAC}. The initial conditions for these 
schemes are the same. For the nonlinear scheme in each time step, the
numerical solution on previous step $\vec{\phi}_h^{n-1}$ is used as
the initial guess, and the standard Newton solver is applied with the
stopping criteria that the residual is less than $10^{-5}$ times the
initial residual. Figure \ref{fig:NAC-grain-phases} shows the initial
condition and evolution of the phases computed by the modified
Crank-Nicolson scheme. Similar to the results in
\cite{vanherpe2010multigrid, lee2012practically}, we observe fast
separation in the beginning and slower dynamics in the course of the
evolution. The evolution of the Liapunov free-energy for each scheme is 
depicted in Figure \ref{fig:NAC-grain-energy}. All the
schemes can be observed to be energy-stable, and the respective
dissipation rates of the fully-implicit and modified Crank-Nicolson
schemes are very similar. 

\begin{figure}[!htbp]
\centering 
\captionsetup{justification=centering}
\subfloat[$t=0$]{
  \includegraphics[width=0.3\textwidth]{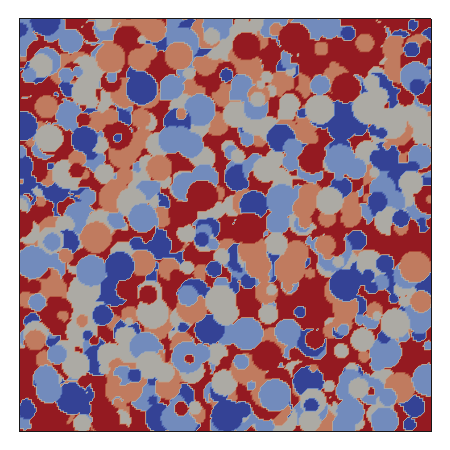} 
}%
\subfloat[$t=24k=4.8\times 10^{-4}$]{
  \includegraphics[width=0.3\textwidth]{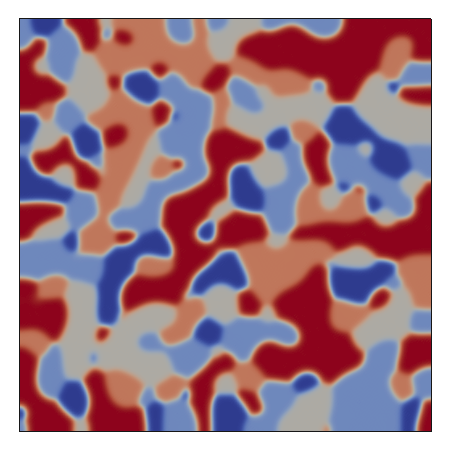} 
}%
\subfloat[$t=45k=9\times 10^{-4}$]{
  \includegraphics[width=0.3\textwidth]{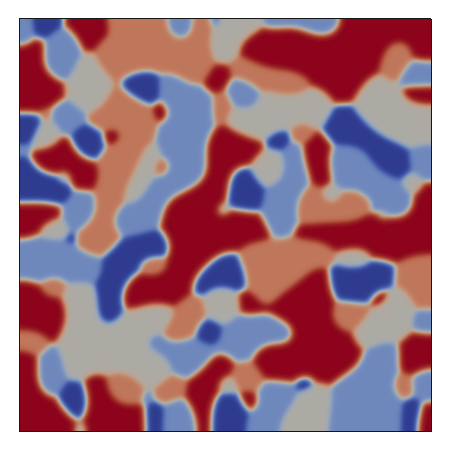}
}\\
\subfloat[$t=105k=2.1\times 10^{-3}$]{
  \includegraphics[width=0.3\textwidth]{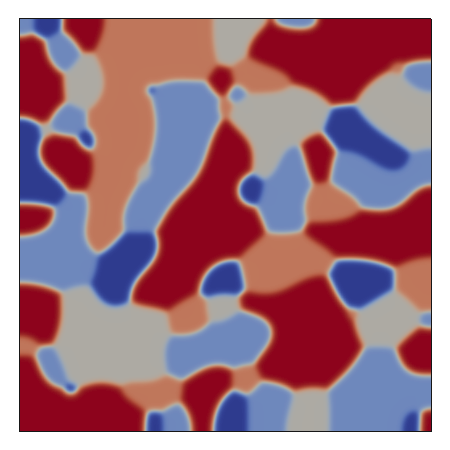} 
}%
\subfloat[$t=225k=4.5\times 10^{-3}$]{
  \includegraphics[width=0.3\textwidth]{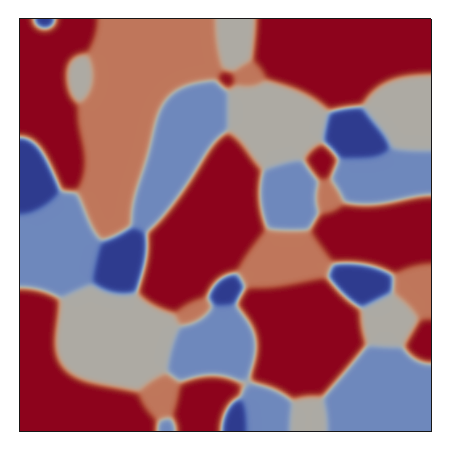} 
}%
\subfloat[$t=450k=9\times 10^{-3}$]{
  \includegraphics[width=0.3\textwidth]{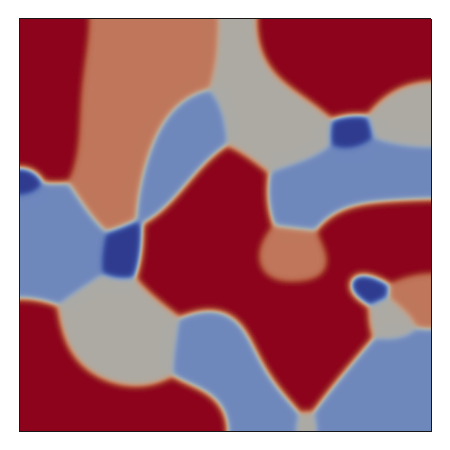}
}
\caption{$N$-phase Allen-Cahn equations: Evolution of the phases for
$N = 5$ by modified Crank-Nicolson scheme}
\label{fig:NAC-grain-phases}
\end{figure}

\begin{figure}[!htbp]
\centering 
\captionsetup{justification=centering}
\includegraphics[width=0.6\textwidth]{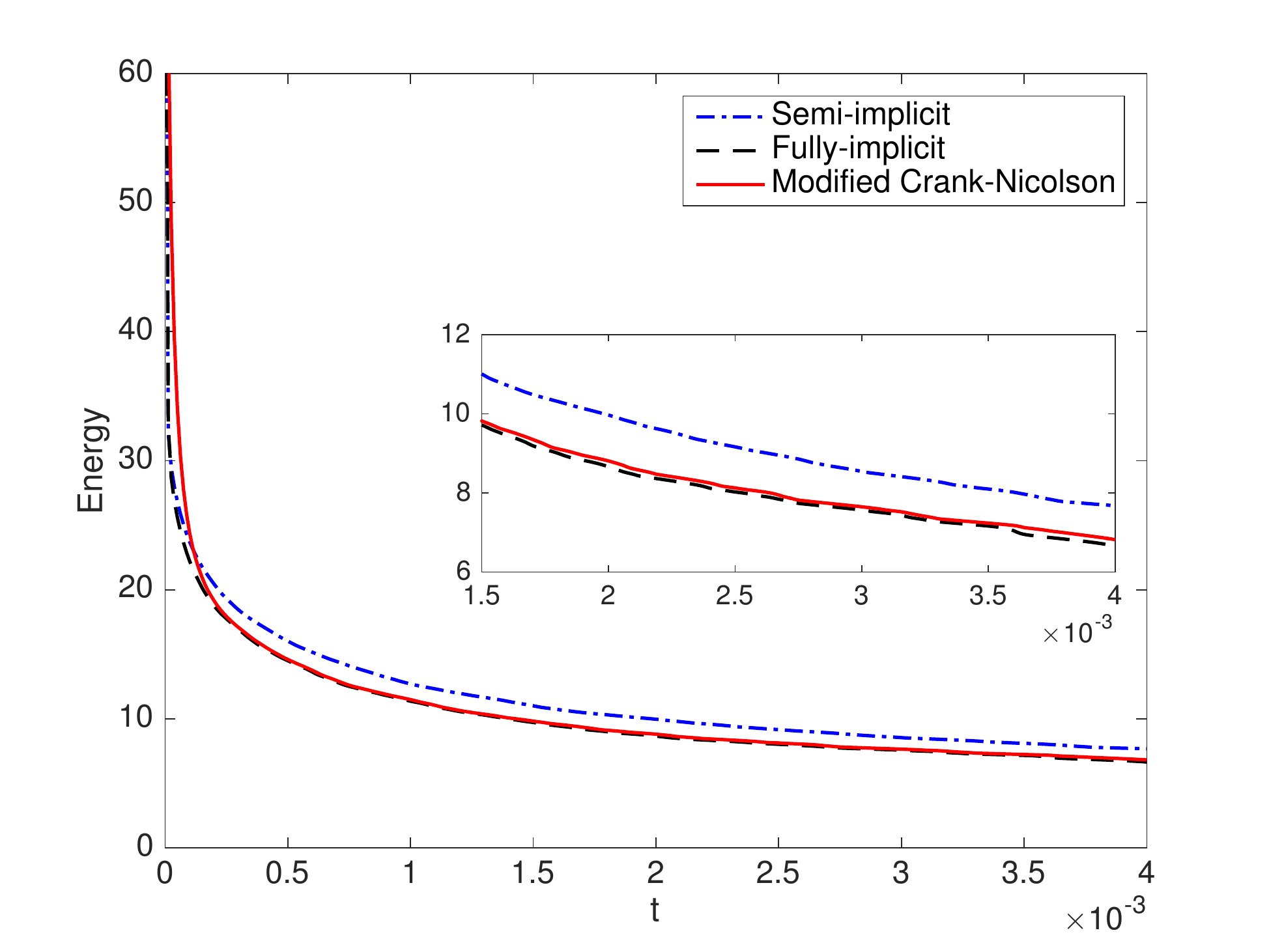}
\caption{$N$-phase Allen-Cahn equations: Evolution of energy for
$N = 5$}
\label{fig:NAC-grain-energy}
\end{figure}

\subsection{$N$-phase Cahn-Hilliard: Spinodal decomposition -- the phase
separation of a three-component mixture}

The second numerical experiment is the phase separation of a three-phase
mixture by spinodal decomposition. Similar tests are also studied in
\cite{lee2008second, lee2012practically, graser2014nonsmooth}. The
initial conditions are random perturbations of state $\vec{c} =
\vec{\rho}$ with the maximum amplitude of $0.04$, that is, 
$$ 
\vec{c} = 
\begin{pmatrix}
\rho_1 + 0.06(2\xi_1 - \xi_2 - \xi_3)/3 \\
\rho_2 + 0.06(-\xi_1 + 2\xi_2 - \xi_3)/3 \\
\rho_3 + 0.06(-\xi_1 - \xi_2 + 2\xi_3)/3 
\end{pmatrix},
$$ 
where $\xi_i \sim \mathcal{U}[0,1]$ are the random variables that obey
the uniform distribution. A $160\times 160\times 2$ uniform triangular
grid is used on the computational domain $\Omega = [0,1] \times
[0,1]$. We take $\eta = 0.01$, $M_0 = \frac{3}{2\sqrt{2}}$, and the
nonlinear potential as \eqref{equ:nonlinear-potential} with $s=0$. The
time step size is set to be $k = 1\times 10^{-6}$. 

In the first three tests, the homogeneous surface tension
$\sigma_{ij} = 1$ is applied with different states $\vec{\rho}$. For
the uniform state $\vec{\rho} = (\frac{1}{3}, \frac{1}{3},
\frac{1}{3})^T$, the result is presented in Figure
\ref{fig:NCH-spinodal-homo-1}. As expected, the three phases have
similar dynamics evolution, as the pairwise surface tensions and
composition are completely symmetric with respect to the four phases.
When the initial state is non-uniform, spinodal decomposition takes
place and the system separates into spatial regions rich in some
phases and poor in others.  For $\vec{\rho} = (\frac{1}{4},
\frac{1}{4}, \frac{1}{2})^T$, the early states of spinodal
decomposition are observed in Figure \ref{fig:NCH-spinodal-homo-2}.
When $\vec{\rho} = (\frac{1}{5}, \frac{1}{5}, \frac{3}{5})^T$, the
phase 3 (blue) in Figure \ref{fig:NCH-spinodal-homo-3}, dominates the
evolution, which leads to spinodal decomposition.   

Thanks to our generalized multiphase models, we are able to
simulate the spinodal decomposition for the inhomogeneous surface
tension case. Here, we set $\sigma_{13} = 1.69$ and the others are
$\sigma_{ij} = 1$. As shown in Figure
\ref{fig:NCH-spinodal-inhomo-1}--\ref{fig:NCH-spinodal-inhomo-3},
the phase 1 (red) and phase 3 (blue) tend to repel each other due to
the relatively large surface tension of each.  

\begin{figure}[!htbp]
\centering 
\captionsetup{justification=centering}
\subfloat[$\sigma_{ij}=1, \vec{\rho} =
(\frac{1}{3},\frac{1}{3},\frac{1}{3})^T$]{
  \includegraphics[width=0.3\textwidth]{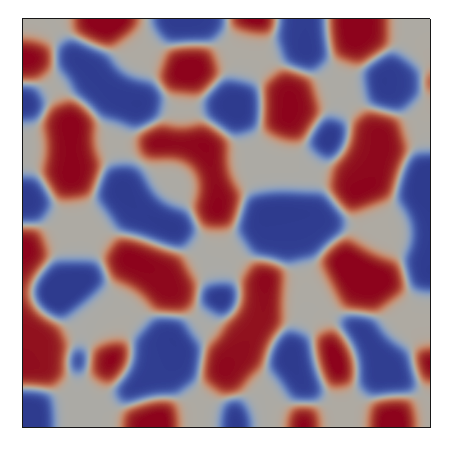} 
  \label{fig:NCH-spinodal-homo-1}
}%
\subfloat[$\sigma_{ij}=1, \vec{\rho} =
(\frac{1}{4},\frac{1}{4},\frac{1}{2})^T$]{
  \includegraphics[width=0.3\textwidth]{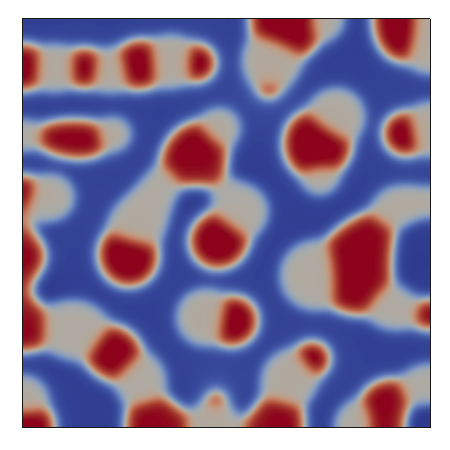} 
  \label{fig:NCH-spinodal-homo-2}
}%
\subfloat[$\sigma_{ij}=1, \vec{\rho} =
(\frac{1}{5},\frac{1}{5},\frac{3}{5})^T$]{
  \includegraphics[width=0.3\textwidth]{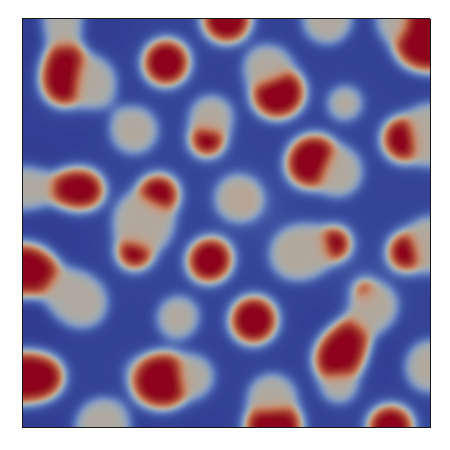} 
  \label{fig:NCH-spinodal-homo-3}
}\\
\subfloat[$\sigma_{13}=1.69, \vec{\rho} =
(\frac{1}{3},\frac{1}{3},\frac{1}{3})^T$]{
  \includegraphics[width=0.3\textwidth]{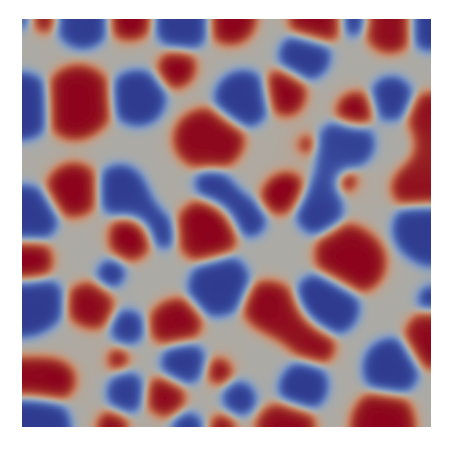} 
  \label{fig:NCH-spinodal-inhomo-1}
}%
\subfloat[$\sigma_{13}=1.69, \vec{\rho} =
(\frac{1}{4},\frac{1}{4},\frac{1}{2})^T$]{
  \includegraphics[width=0.3\textwidth]{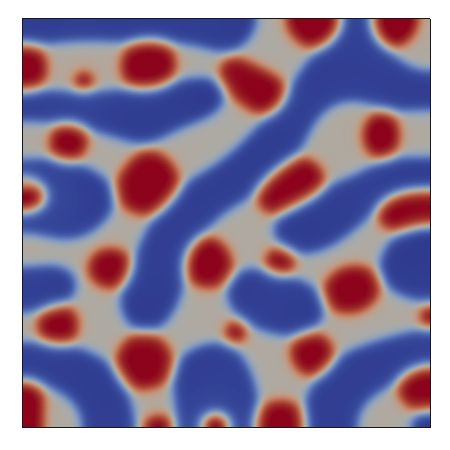} 
  \label{fig:NCH-spinodal-inhomo-2}
}%
\subfloat[$\sigma_{13}=1.69, \vec{\rho} =
(\frac{1}{5},\frac{1}{5},\frac{3}{5})^T$]{
  \includegraphics[width=0.3\textwidth]{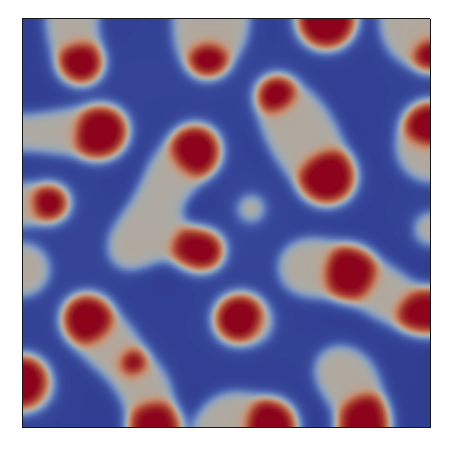} 
  \label{fig:NCH-spinodal-inhomo-3}
}
\caption{$N$-phase Cahn-Hilliard equations: Evolution of spinodal
decomposition with different surface tensions, $t=500k = 5\times
10^{-4}$}
\label{fig:NCH-spinodal}
\end{figure}

\subsection{$N$-phase Cahn-Hilliard: Triple junctions in a quaternary
system}

The last experiment numerically simulates the evolution of the triple
junctions in a quaternary system. In \cite{lee2008second}, the authors
proposed a test for the homogeneous surface tension case.  Here, we
intend to demonstrate the effect of pairwise surface tensions,
especially for the inhomogeneous case.
 
For all the experiments for triple junctions, we simulate how a
T-shaped triple junction approaches a local equilibrium state under
the effect of pairwise surface tensions. A $100 \times 100 \times
2$ uniform triangular grid is used on the computational domain $\Omega
= [0,1]\times[0,1]$. In the semi-implicit scheme
\eqref{equ:semi-NCH}, the parameters are chosen as $\eta = 0.02$.  The
mobility are set as $M_0 = \frac{3}{2\sqrt{2}}$.  The
initial profile and corresponding coloring are depicted in Figure
\ref{fig:NCH-triple-init-profile} and \ref{fig:NCH-triple-init-color},
respectively.  The solutions are computed until numerically
stationary. Even though the time step size can be set small enough
to guarantee the energy stability, we observe in our experiments that
it may vary according to the current state. In general, when the
phases evolving fast or approaching to the topological change, the
time step size should be set small. Otherwise, it can be set larger
than the theoretical constraint \eqref{equ:semi-NCH-stab2} to speed up the
simulation.  

In Figure \ref{fig:NCH-triple-homo-1}--\ref{fig:NCH-triple-homo-4}, we
display the evolution of the interface for the case in which
$\sigma_{ij} = 1$. The stabilization parameter in the nonlinear
potential \eqref{equ:nonlinear-potential} is set as $s=30$, and the
minimal time step size is set as $k=5\times 10^{-8}$. For this case
with homogeneous surface tension, we observe that the triple junction
angles approach the true value $120^{\circ}$ as they approach a local
equilibrium state. We then compute two inhomogeneous surface tension
cases as follows: 
\begin{itemize}
\item Inhomogeneous case 1:
$$ 
\sigma_{ij} = \begin{cases}
1.69, & (i,j) = (1,2), \\ 
1, & \text{else}. 
\end{cases}
$$ 
\item Inhomogeneous case 2:
$$ 
\sigma_{ij} = \begin{cases}
2.56, & (i,j) = (1,2), \\ 
1, & \text{else}. 
\end{cases}
$$ 
\end{itemize}
It is easy to check that these two sets of surface tensions satisfy
the condition in Theorem \ref{thm:SPD}. Thus, $\tilde{\bLambda}$ is
SPD on the tangent space $T\Sigma$. As can be seen from Figure
\ref{fig:NCH-triple-inhomo-1}--\ref{fig:NCH-triple-inhomo-4}, for the
inhomogeneous case 1, the interface between phases 1 and 2 becomes
smaller and smaller due to the relatively large surface tension.
Moreover, the inhomogeneous case 2 encounters the situation with
$\sigma_{12} > \sigma_{13} + \sigma_{23}$ and $\sigma_{12} >
\sigma_{14} + \sigma_{24}$, which corresponds to the total wetting
\cite{de2013capillarity} that the phase 1 and 2 will be penetrated by
phase 3 and 4, as shown in Figure
\ref{fig:NCH-triple-inhomo2-1}--\ref{fig:NCH-triple-inhomo2-4}. 

\begin{figure}[!htbp]
\centering 
\captionsetup{justification=centering}
\subfloat[initial profile]{
  \includegraphics[width=0.25\textwidth]{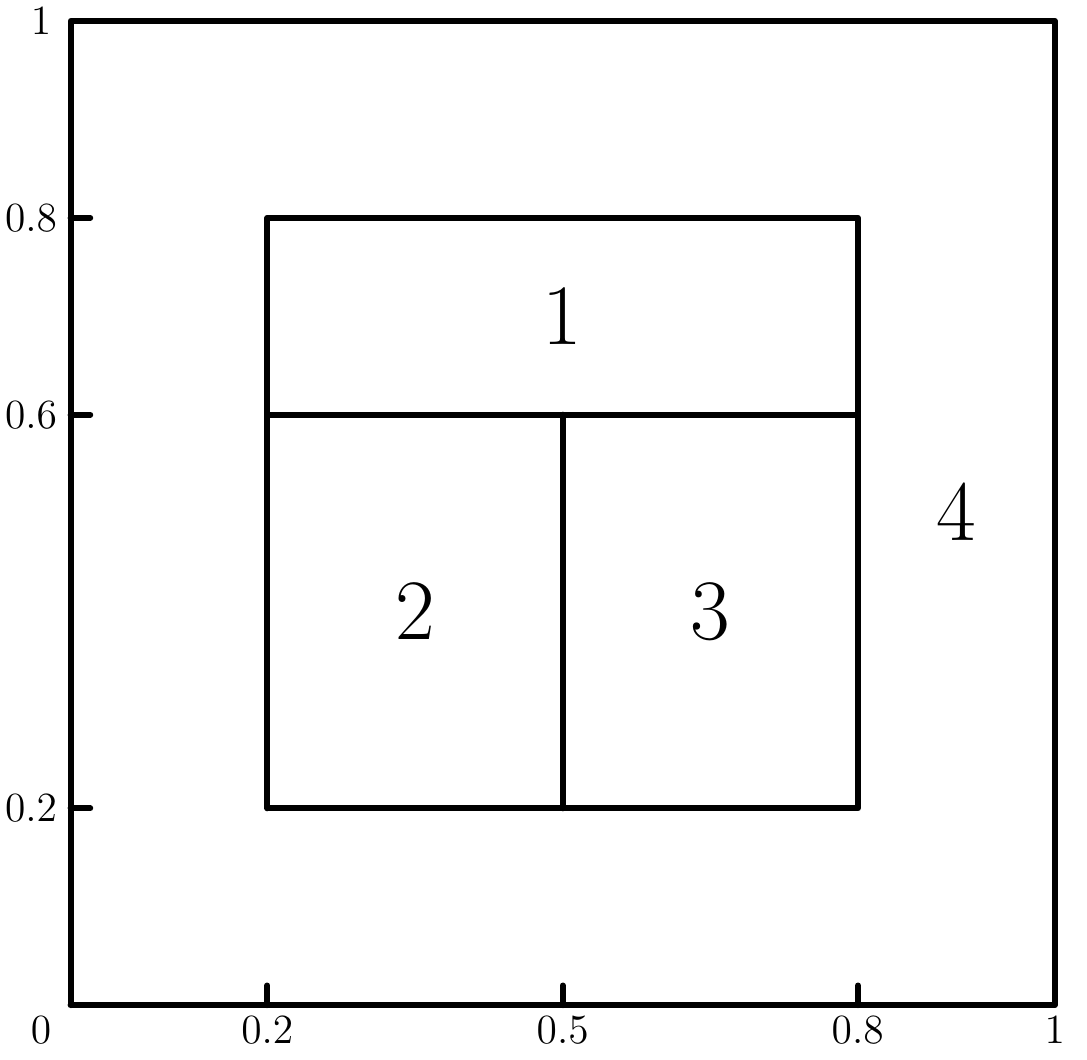} 
  \label{fig:NCH-triple-init-profile}
}%
\subfloat[colors for different phases]{
  \includegraphics[width=0.25\textwidth]{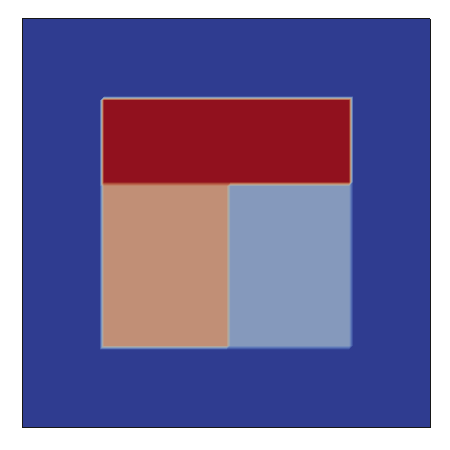}
  \label{fig:NCH-triple-init-color}
}\\
\subfloat[$\sigma_{ij}=1, t=1\times 10^{-4}$]{
  \includegraphics[width=0.25\textwidth]{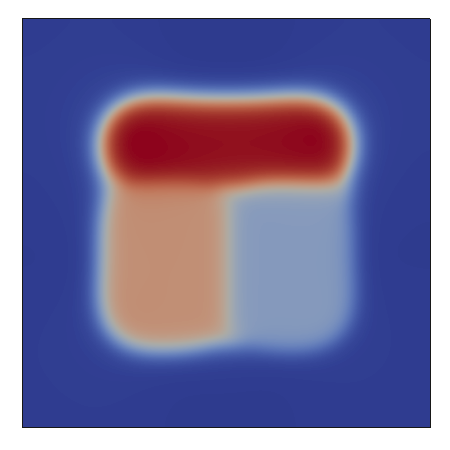} 
  \label{fig:NCH-triple-homo-1}
}%
\subfloat[$\sigma_{ij}=1, t=1\times 10^{-3}$]{
  \includegraphics[width=0.25\textwidth]{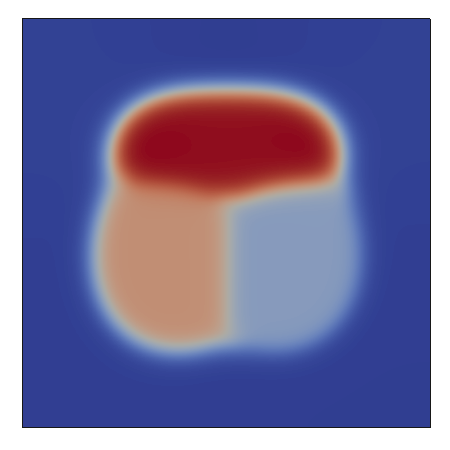} 
  \label{fig:NCH-triple-homo-2}
}%
\subfloat[$\sigma_{ij}=1, t=2\times 10^{-3}$]{
  \includegraphics[width=0.25\textwidth]{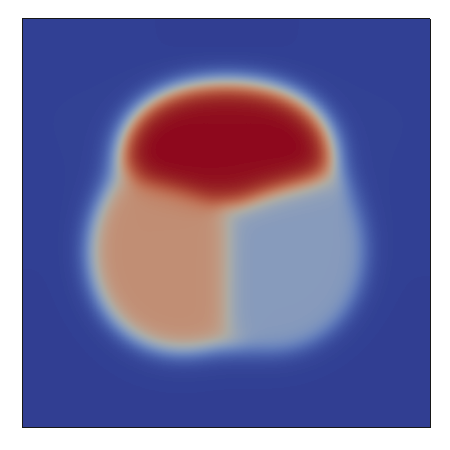}
  \label{fig:NCH-triple-homo-3}
}%
\subfloat[$\sigma_{ij}=1, t=8\times 10^{-3}$]{
  \includegraphics[width=0.25\textwidth]{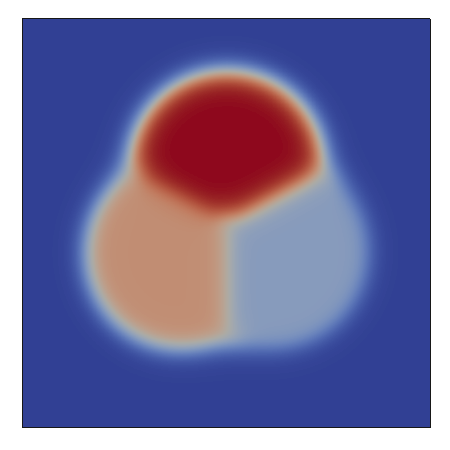}
  \label{fig:NCH-triple-homo-4}
}
\\
\subfloat[$\sigma_{12}=1.69, t=1\times 10^{-4}$]{
  \includegraphics[width=0.25\textwidth]{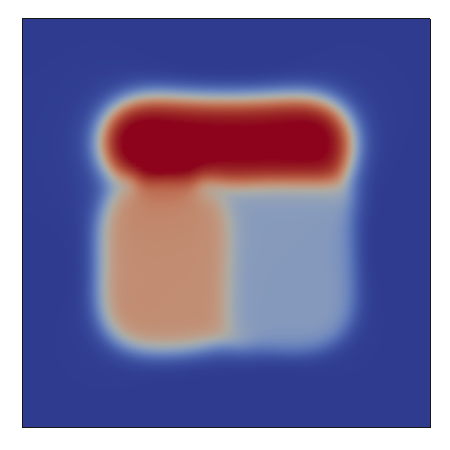} 
  \label{fig:NCH-triple-inhomo-1}
}%
\subfloat[$\sigma_{12}=1.69, t=1\times 10^{-3}$]{
  \includegraphics[width=0.25\textwidth]{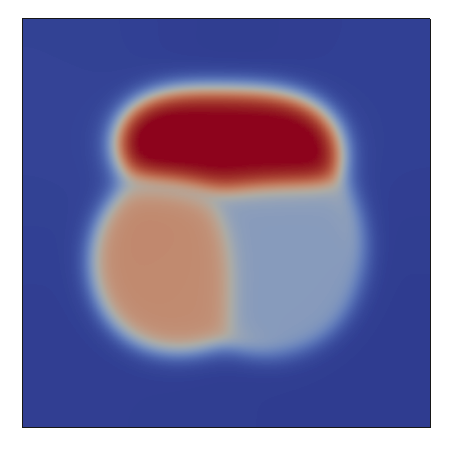} 
  \label{fig:NCH-triple-inhomo-2}
}%
\subfloat[$\sigma_{12}=1.69, t=2\times 10^{-3}$]{
  \includegraphics[width=0.25\textwidth]{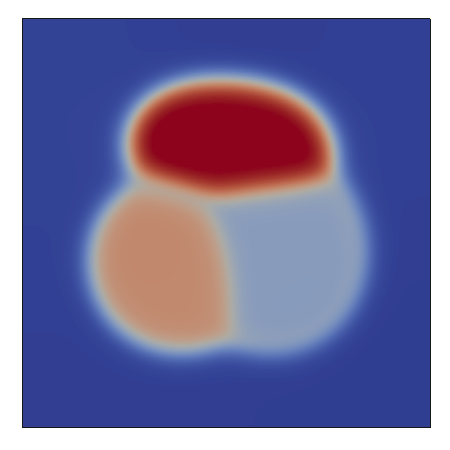}
  \label{fig:NCH-triple-inhomo-3}
}%
\subfloat[$\sigma_{12}=1.69, t=8\times 10^{-3}$]{
  \includegraphics[width=0.25\textwidth]{triple-inhomo1-3.png}
  \label{fig:NCH-triple-inhomo-4}
}
\\
\subfloat[$\sigma_{12}=2.56, t=1 \times 10^{-4}$]{
  \includegraphics[width=0.25\textwidth]{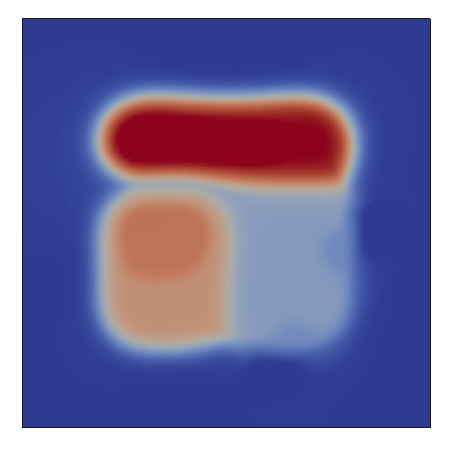} 
  \label{fig:NCH-triple-inhomo2-1}
}%
\subfloat[$\sigma_{12}=2.56, t=1 \times 10^{-3}$]{
  \includegraphics[width=0.25\textwidth]{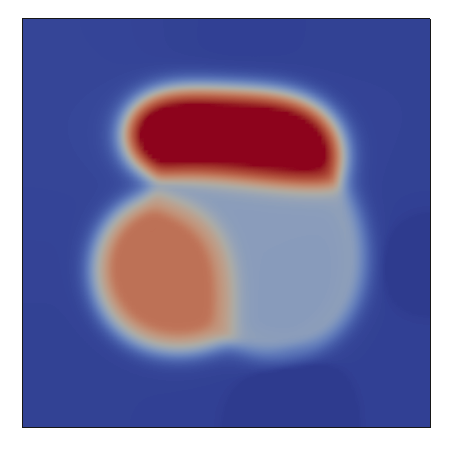} 
  \label{fig:NCH-triple-inhomo2-2}
}%
\subfloat[$\sigma_{12}=2.56, t=2 \times 10^{-3}$]{
  \includegraphics[width=0.25\textwidth]{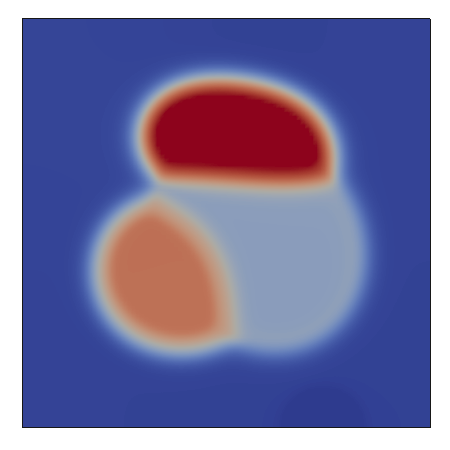}
  \label{fig:NCH-triple-inhomo2-3}
}%
\subfloat[$\sigma_{12}=2.56, t=8 \times 10^{-3}$]{
  \includegraphics[width=0.25\textwidth]{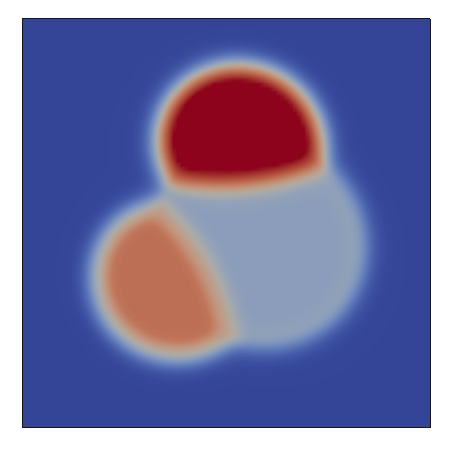}
  \label{fig:NCH-triple-inhomo2-4}
}
\caption{$N$-phase Cahn-Hilliard equations: Evolution of triple
junctions with different surface tensions}
\label{fig:NCH-triple}
\end{figure}

\section{Concluding Remarks} \label{sec:remarks}

In this paper, we presented multiphase Allen-Cahn and Cahn-Hilliard
models and their finite element discretizations accounting for the
effect of pairwise surface tensions. The free-energy functional with a
coefficient matrix in the capillary term was set up for the
generalized phase variables. By checking the consistency with the
two-phase model, we gave a set of linear equations between the
coefficient matrix and pairwise surface tensions. Thanks to the
relationship between the symmetric matrix space and simplex, we proved
the solvability of the coefficient matrix on the tangent space of
solution manifold --- an $(N-1)$-dimensional hyperplane. Furthermore,
we gave two sufficient and necessary conditions for the SPD of the
coefficient matrix --- conditions that are fundamental to the
well-posedness of $N$-phase Allen-Cahn and $N$-phase Cahn-Hilliard
models presented.

Our derivation of the $N$-phase Allen-Cahn and Cahn-Hilliard equations
stems from the formulation of the free-energy functional and the
gradient flows on the solution manifold. With the introduction of an
induced inner product on the tangent space, the dynamics of
concentrations of both models are inherently invariant, that is,
independent of the choice of phase variables. Based on this nice
property, a special choice of phase variables is used in the numerical
simulation to clarify the tangent space. 

We proposed semi-implicit, fully-implicit, and modified Crank-Nicolson
schemes in the finite element framework for $N$-phase Allen-Cahn
equations, such that the energy-stability properties are similar to
the two-phase model. We also numerically verified the efficiency and
energy-stability of each scheme by simulating the grain growth on the
unit square domain. For the finite element discretization of $N$-phase
Cahn-Hilliard equations, the semi-implicit, fully-implicit, and
modified Crank-Nicolson schemes were also discussed.  Further, the
effect of inhomogeneous surface tensions on the spinodal decomposition
was investigated.  Finally, we carried out numerical experiments
focused on the evolution of triple junctions in order to establish and
demonstrate the ability of these models to deal with inhomogeneous
surface tensions. 

\section*{Acknowledgements} 
The authors would like to express their gratitude to Prof. Chun Liu
and Dr. Yukun Li for their helpful discussions and suggestions, and to
thank the referees for the valuable comments leading to a better
version of this paper.

\section*{References}
\bibliographystyle{elsarticle-num} 
\bibliography{Multiphase}

\end{document}